\let\oldnl\nl
\newcommand{\nonl}{\renewcommand{\nl}{\let\nl\oldnl}}
\spnewtheorem{task}[theorem]{Task}{\bfseries}{\itshape}
\spnewtheorem{heuristic}[theorem]{Heuristic}{\bfseries}{\itshape}
\spnewtheorem{example2}[theorem]{Example}{\bfseries}{}
\DeclareMathOperator{\End}{End}
\newcommand{\poly}{\mathrm{poly}}
\newcommand\GL{{\mathrm{GL}}}
\newcommand\Id{{\mathrm{Id}}}
\newcommand\Mat{{\mathrm{Mat}}}
\newcommand\Aut{{\mathrm{Aut}}}
\def\Z{{\mathbb Z}}
\def\F{{\mathbb F}}
\newcommand{\QA}{\mathcal{B}_{p,\infty}}
\newcommand{\ZZ}{\mathbb{Z}}
\newcommand{\QQ}{\mathbb{Q}}
\newcommand{\legendre}[2]{\genfrac{(}{)}{}{}{#1}{#2}}
\newcommand{\order}{\mathcal{O}}
\renewcommand{\O}{\order}
\def\issharing{1}  
\newcommand{\sharing}[2]{%
\ifthenelse{\equal{\issharing}{1}}%
{{#1}}%
{{#2}}%
}
\newcommand{\PK}[1]{}
\newcommand{\GI}[1]{}
\newcommand{\IM}[1]{}
\newcommand{\CP}[1]{}
\newcommand{\AL}[1]{}
\newcommand{\MC}[1]{}
\newcommand{\ALcor}[1]{}
\newcommand{\PK}[1]{\textcolor{olive}{{\sf (Peter's comment:} {\sl{#1})}}}
\newcommand{\AL}[1]{\textcolor{purple}{{\sf (Antonin's comment:} {\sl{#1})}}}
\newcommand{\ALcor}[1]{\textcolor{purple}{{\sf (Antonin's correction:} {\sl{#1})}}}
\newcommand{\MC}[1]{\textcolor{orange}{{\sf (Mingjie's comment:} {\sl{#1})}}}
\newcommand{\CP}[1]{\textcolor{green}{{\sf (Christophe's comment:} {\sl{#1})}}}
\title{Hidden Stabilizers, the Isogeny To Endomorphism Ring Problem and the Cryptanalysis of pSIDH}
\author{Mingjie Chen\inst{1} \and 
Muhammad Imran \inst{2} \and
Gábor Ivanyos \inst{3} \and
Péter Kutas \inst{1,6} \and
Antonin Leroux\inst{4,5} \and
Christophe Petit\inst{1,7}}
\authorrunning{M. Chen, M. Imran, G.Ivanyos, P.Kutas, A.Leroux, C. Petit}
\institute{University of Birmingham, UK \and Budapest University of Technology and Economics, Hungary  \and Institute   for Computer Science and Control, Eötvös Loránd Research Network, Hungary \and DGA-MI, Bruz, France \and IRMAR, UMR 6625, Université de Rennes, France \and Eötvös Loránd University, Hungary \and Université libre de Bruxelles, Belgium}
\begin{document}

\maketitle

\begin{abstract}
The \emph{Isogeny to Endomorphism Ring Problem} (IsERP) asks to compute the endomorphism ring of the codomain of an isogeny between supersingular curves in characteristic $p$ given only a \emph{representation} for this isogeny, i.e. some data and an algorithm to evaluate this isogeny on any torsion point. This problem plays a central role in isogeny-based cryptography; it underlies the security of 
pSIDH protocol (ASIACRYPT 2022) and it is at the heart of the recent attacks that broke the SIDH key exchange. Prior to this work, no efficient algorithm was known to solve IsERP for a generic isogeny degree, the hardest case seemingly when the degree is prime. 

In this paper, we introduce a new quantum polynomial-time algorithm to solve IsERP for isogenies whose degrees are odd and have $O(\log\log p)$ many prime factors. As main technical tools, our algorithm uses a quantum algorithm for computing hidden Borel subgroups, a group action on supersingular isogenies from EUROCRYPT 2021, various algorithms for the Deuring correspondence and a new algorithm to lift arbitrary quaternion order elements modulo an odd integer $N$ with $O(\log\log p)$ many prime factors to powersmooth elements.

As a main consequence for cryptography, we obtain a quantum polynomial-time key recovery attack on pSIDH. The technical tools we use may also be of independent interest.

\end{abstract}

\section{Introduction}

The problem of computing an isogeny between two supersingular elliptic curves is believed to be hard, even for a quantum computer. The assumption that this statement is true led to the idea of using isogenies to build post-quantum cryptography.

However, building actual cryptography from this principle is not easy and the security of concrete isogeny-based protocols is based on weaker versions of the isogeny problem, where the attacker is given more information. The nature of this additional information differs from one proposal to another but the heart of the problem remains the same. 

At the core of the cryptanalytic efforts to attack the isogeny problems lies another problem: the endomorphism ring problem, which requires to compute the endomorphism ring of a curve given in input. In fact, computing isogenies and computing endomorphism rings are computationally equivalent problem for supersingular curves~\cite{eisentraeger18,Wesolowski2021equivalencce}. 
However, this equivalence result 
does not fully answer  the following question : given a ``reasonable'' representation of an isogeny $\phi : E\rightarrow E'$ and the knowledge of the endomorphism ring of the starting curve $E$, can we always efficiently compute  the endomorphism ring of the codomain $E'$? This question leads to the following problem, where the exact definition of \textit{weak isogeny representation} will be given in Section~\ref{sec: isogeny rep}. 
%

\begin{problem}[Isogeny to Endomorphism Ring Problem (IsERP)]\label{prob: isogeny representation to endo ring.}
    Let $E$ be a supersingular elliptic curve over $\mathbb{F}_{p^2}$ and let $\phi: E\rightarrow E_1$ be an isogeny of degree $N$ for some integer $N$.
    Given $ \End(E)$ and a weak {isogeny representation} for $\phi$, compute $\End(E_1)$. 
\end{problem}

The answer to this question is known to be yes when the degree of $\phi$ is powersmooth (and this is what is used in the equivalence results mentioned above), but the question remains open for an arbitrary degree.
For a prime degree, this problem can be seen as the generalization of the key recovery problem for the pSIDH scheme recently introduced by Leroux \cite{leroux2021new}. 
The best known algorithm has subexponential quantum complexity in $N$, and the generic endomorphism ring attack has complexity exponential in $\log p$.  



\subsubsection{Isogeny-based cryptography.}
Isogeny-based cryptography originates in Couveignes' seminal work \cite{couveignes1999hard} where he proposed to use the natural class group action on ordinary elliptic curves to instantiate a potentially quantum-resistant version of the Diffie-Hellman key exchange. The reasoning for that is that the discrete logarithm problem has more structure than needed to instantiate a key exchange, and this structure is exploited in Shor's algorithm \cite{Shor97}. Couveignes' ideas were rediscovered by Rostovtsev and Stolbunov \cite{rostovtsev2006public} and thus the resulting scheme is referred to as the CRS key exchange. These ideas were far from practical and a major breakthrough came with the invention of CSIDH \cite{castryck2018csidh}. The idea is quite similar but one uses supersingular elliptic curves defined over $\mathbb{F}_p$ and the acting group is the class group of $\mathbb{Z}[\sqrt{-p}]$. In other words one considers supersingular curves defined over $\mathbb{F}_p$ together with isogenies defined over $\mathbb{F}_p$ as well. 

The same idea does not apply to supersingular curves defined over $\mathbb{F}_{p^2}$ because the endomorphism rings are non-commutative (and so the action of isogenies is not well-defined). This means that providing codomains of secret isogenies (i.e., curves $E,E_A,E_B$) is not enough to arrive at a shared secret that both parties can compute. Thus in order to instantiate a Diffie-Hellman-like key exchange on the full set of supersingular curves parties must provide additional information. In 2011 De Feo and Jao proposed SIDH \cite{jao2011towards} where both parties share the images of other person's torsion basis under their secret isogeny. 
This motivated the following problem: 
\begin{problem}
Let $E$ be a supersingular elliptic curve and let $A,B$ be coprime smooth numbers. Let $\phi:E\rightarrow E_A$ be a secret isogeny of degree $A$. One is provided with the action of $\phi$ on $E[B]$. Compute $\phi$.
\end{problem}

In \cite{petit2017faster} it was shown that this problem can be solved in polynomial time for certain parameter sets (where $B>p^2A^2$). 
In order to instantiate SIDH efficiently one usually uses parameters $A,B,p$ such that $AB$ divides $p+1$ as then all computations can be carried out over $\mathbb{F}_p^2$ so in some sense these initial results seemed theoretical. Then the initial idea of Petit \cite{petit2017faster} was improved in \cite{quehen2021improved} to $B>\sqrt{p}A^2$ which already included parameter sets which could have been used in SIDH variants. Nevertheless none of these attacks directly impacted SIDH where $A$ and $B$ are roughly the same size. Then in 2022 Castryck and Decru \cite{castryck2022efficient} (and independently Maino and Martindale \cite{maino2022attack}) vastly improved these using ingenious techniques (utilizing superspecial abelian surfaces) which break SIDH with known endomorphism ring in polynomial time even if $A$ and $B$ are balanced. Finally, Robert proposed a polynomial-time attack on SIDH with unknown endomorphism ring (furthermore, he only needs $B^2>A$ as opposed to $B>A$ in other attacks). 

These attacks have shown that using smooth degree isogenies and providing torsion point information will potentially not lead to secure and efficient cryptographic constructions (in \cite{fouotsa2023m} some countermeasures are proposed, but the ones that are not broken are much less efficient than the original SIDH construction). Thus, in order to navigate in the supersingular isogeny graph parties have to share some other kind of extra information.

\subsubsection{Alternative isogeny representations.}
In the pSIDH protocol introduced by Leroux~\cite{leroux2021new}, one reveals \emph{suborder representations} for isogenies of large prime degrees to build a key exchange. Suborder representations are a particular kind of weak isogeny representations, i.e. some data to represent isogenies together with an algorithm to efficiently evaluate these isogenies on any point up to a scalar. 
Prime degree isogenies were not really used before as one cannot write down the isogeny itself (but one can compute its codomain with non-trivial techniques). 
More recently, a similar type of secret isogeny was used in the SCALLOP scheme \cite{de2023scallop}. In SCALLOP, a partial isogeny representation is revealed to the attacker. \CP{make more precise?}

From a cryptanalytic point of view, the unlimited amount of torsion information provided by the isogeny representation revealed in pSIDH (and more generally, any isogeny representation) is very interesting. However, when the kernel points are not defined over a small extension, the known algorithms do not apply and it is still unclear how to exploit the isogeny representation to recover the secret isogeny. 

Leroux studied the case where a specific isogeny representation (the suborder representation) is revealed, but we can generalize this setting to any isogeny representation. 
He showed that computing the endomorphism ring of the codomain would make pSIDH insecure, therefore motivating Problem~\ref{prob: isogeny representation to endo ring.} in the prime case. 




More recently, Robert introduced yet another isogeny representation based on torsion point images and the recent SIDH attacks~\cite{robert2022evaluating}. 
This representation could be used (for isogenies with large prime degrees) instead of the suborder representation to derive a key exchange protocol similar to pSIDH, and this protocol would be similarly affected by our new results. 

\subsubsection{A group action for SIDH and pSIDH}
In~\cite{kutas2021one} the authors introduce a group action on a particular set of supersingular elliptic curves. Let $E$ be a supersingular elliptic curve with endomorphism ring isomorphic to $O$. Then $(O/NO)^*$ acts naturally on the set of cyclic subgroups of $E$ of order $N$. If there is a one-to-one correspondence between cyclic subgroups and $N$-isogenous curves, then one can look at this action as acting on a set of curves. This action was used to provide a subexponential quantum key recovery attack on overstretched SIDH parameters. 

The reason the attack only works for overstretched parameter sets is that in general this group action is not easy to evaluate (thus substantial amount of extra information on the secret isogeny is needed). This motivates the following problem where the name Malleability Oracle Problem comes from the term introduced in~\cite{kutas2021one}.



\begin{problem}[Malleability Oracle Problem]\label{Prob:endev}
    Let $E$ be a supersingular elliptic curve and let $\phi:E\rightarrow E'$ be a secret isogeny with kernel generated by $A$. Let $\sigma\in  \End(E)$. Find the $j$-invariant of $E/\langle \sigma(A) \rangle$. 
\end{problem}


\CP{motivate and define quaternion problem here}

\subsubsection{Contributions.}
Our main result is the following theorem on the resolution of IsERP.  

\begin{theorem}\label{thm:solvingIsERP}
   Let $N = \prod \ell_i^{e_i} \neq p$ be an odd integer that is of size polynomial in $p$ and has $O(\log(\log p))$ divisors. Then there exists a quantum polynomial-time algorithm that solves the IsERP.
\end{theorem}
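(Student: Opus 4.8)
The plan is to recast the IsERP as the computation of a \emph{hidden stabilizer} for the group action of Kutas et al.\ and to solve the resulting instance with the hidden Borel subgroup algorithm. Write $\order=\End(E)$, a maximal order in $\QA$, and assume $\gcd(N,p)=1$ (the inseparable part of $\phi$, a power of Frobenius up to isomorphism, is handled directly). Since $N$ is odd we have $\order/N\order\cong\prod_i\Mat_2(\ZZ/\ell_i^{e_i})$ and hence $(\order/N\order)^\ast\cong\prod_i\GL_2(\ZZ/\ell_i^{e_i})=:G$, which acts on the set of cyclic subgroups of $E$ of order $N$ (equivalently, on primitive rank-one sublattices of $E[N]\cong(\ZZ/N)^2$). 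Fixing a generator $A$ of $\ker\phi$, its stabilizer $H\le G$ is a product over $i$ of conjugates of the standard Borel subgroup of $\GL_2(\ZZ/\ell_i^{e_i})$; because $N$ is odd each such Borel fixes a unique line, so $H$ determines $\ker\phi$, and from $\ker\phi$ one obtains the left $\order$-ideal $I=\{\alpha\in\order:\alpha(A)=0\}$ of norm $N$ and finally $\End(E_1)\cong\mathcal{O}_R(I)$ by the standard algorithmic Deuring correspondence. So it suffices to (i) produce an efficiently computable function on $G$ that hides $H$, and (ii) recover $H$ from it quantumly.

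For (i) define $f(\sigma)=j\bigl(E/\langle\sigma(A)\rangle\bigr)$. Outside the $O(1)$ special $j$-invariants (those with extra automorphisms, which we treat by brute force), $f(\sigma)=f(\sigma')$ iff $\langle\sigma(A)\rangle=\langle\sigma'(A)\rangle$ iff $\sigma^{-1}\sigma'\in H$, so $f$ is a bona fide hiding function for the hidden (Borel) subgroup $H$. The crux is computability of $f$, and this is exactly where the weak isogeny representation and the new quaternion-lifting tool enter. Given $\sigma$, first apply the powersmooth-lifting algorithm to obtain $\sigma'\in\order$ with $\sigma'\equiv\sigma\pmod{N\order}$ and $n:=\mathrm{nrd}(\sigma')$ powersmooth and coprime to $N$; the hypothesis that $N$ has only $O(\log\log p)$ prime factors is what guarantees that the powersmoothness bound stays polynomial in $\log p$. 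Since $\sigma'$ acts on $E[N]$ exactly as $\sigma$ does, $E/\langle\sigma(A)\rangle=E/\langle\sigma'(A)\rangle$, and this curve is recovered as $E_1/\phi\bigl(\widehat{\sigma'}(E[n])\bigr)$: indeed $\widehat{\sigma'}(E[n])$ is an order-$n$ subgroup computable by decomposing $\widehat{\sigma'}$ into small-degree isogenies via the known $\End(E)$, its image under $\phi$ is computable because these points have order coprime to $N$ (where the weak isogeny representation applies), and the final $n$-isogeny quotient of $E_1$ is computable because $n$ is powersmooth. Hence $f$ is evaluable in quantum polynomial time.

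For (ii) we run the quantum algorithm for computing hidden Borel subgroups on $G$ with oracle $f$. As $N=\poly(p)$ we have $\log|G|=O(\log p)$, so the algorithm returns a generating set of $H$ in $\poly(\log p)$ quantum operations; from $H$ we read off the fixed line in each $E[\ell_i^{e_i}]$, reassemble $\ker\phi$ and the ideal $I$ by CRT, compute $\mathcal{O}_R(I)$, and output it as $\End(E_1)$, folding in the finitely many exceptional $j$-invariants. This yields the claimed quantum polynomial-time algorithm.

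I expect step (i), and within it the powersmooth lifting, to be the main obstacle: one has to show that an arbitrary class in $\order/N\order$ lifts to an element of $\order$ whose reduced norm is $B$-powersmooth for a bound $B=\poly(\log p)$, and it is precisely the control of $B$ that forces the restriction on the number of prime factors of $N$ (each additional prime of $N$ inflating the achievable bound). Secondary points needing care are checking that a quantum algorithm for hidden Borels is available over $\ZZ/\ell^e$ rather than only over $\F_\ell$, that the CRT structure of $G$ and $H$ is compatible with that algorithm, and that $f$ really is coset-separating once the curves with extra automorphisms are removed.
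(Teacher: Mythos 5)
Your proof follows essentially the same route as the paper's: view the stabilizer of $\langle A\rangle$ as a conjugate Borel subgroup of $\GL_2(\ZZ/N\ZZ)\cong(\O/N\O)^\ast$, evaluate a hiding function by lifting matrices to quaternions of powersmooth norm (the paper's PQLP, solved there via the decomposition $\sigma_0\equiv\alpha_1\gamma\alpha_2\gamma\alpha_3\bmod N\O$ together with \textsf{RepresentInteger} and \textsf{StrongApproximation} variants) and pushing $\phi$ through the resulting commutative square, then run the quantum Borel-HSP algorithm (reduced to hidden multiple shift) and recover $I_\phi$ and $\End(E_1)\cong\O_R(I_\phi)$; the paper simply organizes exactly this argument modularly, as the reduction chain IsERP $\to$ Group Action Evaluation Problem $\to$ PQLP, rather than inlining the hiding-function evaluation as you do. One small slip in your write-up: $f(\sigma)=j\bigl(E/\langle\sigma(A)\rangle\bigr)$ fails to be coset-separating precisely when $E$ admits a non-scalar endomorphism of degree $N^2$ (so that two distinct order-$N$ cyclic subgroups have isomorphic quotients), which is a condition on $\End(E)$ and not the same as the codomain having extra automorphisms as you claim — the paper makes the same implicit one-to-one assumption when passing from the Borel subgroup to the fixed line (and spells it out in its concluding section), so this is a shared caveat rather than a gap specific to you, but your characterization of the exceptional set is off and should not be cited as a reason the brute-force patch is $O(1)$.
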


We first provide a reduction from the IsERP to the Powersmooth Quaternion Lifting Problem (PQLP). The PQLP is the problem of finding a powersmooth representative for a given class in $\O / N\O$ for some integer $N$ and maximal order $\O$ in the quaternion algebra $\QA$. 

Our reduction from the IsERP to PQLP is obtained through a quantum equivalence between the IsERP and a problem similar to Problem~\ref{Prob:endev}, which we call the Group Action Evaluation Problem. The most difficult direction of this equivalence (reducing the IsERP to the Group Action Evaluation Problem) is obtained with a quantum polynomial-time algorithm. The other reduction is classical and uses  standard tools for the Deuring correspondence.  

The quantum polynomial reduction relies on a special case of the well-known hidden subgroup problem (HSP), namely when the acting group is $\GL_2(\Z/N\Z)$ and the hidden subgroup is a conjugate of the subgroup of upper triangular matrices. 
This problem was previously studied only for prime $N$ \cite{DMS10}
and in this paper we provide a polynomial-time quantum algorithm for any $N$. Furthermore, whenever $N$ is smooth we propose a classical polynomial-time algorithm which might be of independent interest.  

We then propose a classical polynomial-time algorithm for the PQLP. The algorithm relies on several tools developed in KLPT \cite{kohel2014quaternion}. Namely we decompose elements $\sigma\in \mathcal{O}$ as $\alpha_1\gamma\alpha_2\gamma\alpha_3$ where $\alpha_i$ lie in a special subset of $\mathcal{O}$ (linear combinations of $j,ij$) that can be lifted efficiently to powersmooth elements, and $\gamma$ is a fixed element of $\mathcal{O}$ of powersmooth norm. 
Finding $\gamma$ and lifting the $\alpha_i$ are accomplished with slightly modified subroutines of KLPT, whereas the decomposition itself is inspired by similar decompositions in other contexts~\cite{petit2008full}.
The lifting algorithm requires that $N$ is odd and has $O(\log\log p)$ prime factors. We look at approaches to generalize this algorithm to arbitrary $N$ (thus solve IsERP for arbitrary degrees) in Appendix \ref{app:hybridPQLP}. We have also implemented this algorithm for prime $N$ in Sagemath~\cite{sagemath}, available on GitHub~\cite{github}.

\medskip
The rest of the paper is organized as follows: 
in Section~\ref{sec: prelim}, we introduce some necessary background. Then, in Section~\ref{sec: borel hsp}, we introduce a quantum algorithm to solve the Borel Hidden Subgroup Problem. In Section~\ref{sec: isogeny section} we define the Group Action Evaluation Problem and the Powersmooth Quaternion Lift Problem (PQLP). We show various reductions between the two problems and the IsERP, most importantly reducing IsERP to the PQLP.  In Section \ref{sec:respqlp} we describe our polynomial-time algorithm for PQLP, which leads to a resolution of the IsERP through the reductions. Finally in Section \ref{sec:conclusion}, we discuss the impacts of our results on isogeny-based cryptography.

\section{Preliminaries}
\label{sec: prelim}

Below, we give a brief introduction to some necessary mathematical background. More details on elliptic curves and isogenies can be found in \cite{silverman2009arithmetic}. The book of John Voight \cite{voight2018quaternion} is a good reference regarding quaternion algebras and the Deuring correspondence.   
In the remaining of this paper, we fix a prime $p > 2$.

\subsection{Supersingular elliptic curves and isogenies}
Let $E_1,E_2$ be elliptic curves defined over a finite field $\mathbb{F}_q$. An isogeny is a non-constant rational map from $E_1$ to $E_2$ that is simultaneously a group homomorphism. Equivalently, it is a non-constant rational map that sends the point of infinity of $E_1$ to the point of infinity of $E_2$. An isogeny induces a field extension $K(E_1)/K(E_2)$ of function fields. An isogeny is called separable, inseparable or purely inseparable if the extension of function field is of the respective type. The degree of the isogeny is the degree of the field extension $K(E_1)/K(E_2)$. The kernel of an isogeny $\phi:E_1\rightarrow E_2 $ is a finite subgroup of $E_1$. If the isogeny is separable, then the size of the kernel is equal to the degree of the isogeny (more generally, the size of the kernel equals the separable degree of the field extension $K(E_1)/K(E_2)$). 
For every isogeny $\phi: E_1\rightarrow E_2$ there exists a dual isogeny $\hat{\phi}:E_2\rightarrow E_1$ such that $\deg(\phi)=\deg(\hat{\phi})=d$ and $\phi\circ\hat{\phi}=[d]_{E_2}$ (and $\hat{\phi}\circ\phi=[d]_{E_1}$). Isogenies (together with the zero map) from $E$ to itself are called endomorphisms. 
Endomorphisms of an elliptic curve form a ring under addition and composition. An elliptic curve over a finite field is called ordinary if its endomorphism ring is commutative, and supersingular otherwise.

\subsection{Quaternion algebras}

The endomorphism rings of supersingular elliptic curves over $\F_{p^2}$ are isomorphic to maximal orders of $B_{p,\infty}$, the quaternion algebra ramified at $p$ and $\infty$. We fix a basis $1,i,j,k$ of $B_{p,\infty}$, satisfying $i^2=-q$, $j^2=-p$ and $k=ij=-ji$ for some integer $q$. The canonical involution of conjugation sends an element $ \alpha = a +ib + jc + kd $ to $\overline{\alpha}  = a - (ib + jc +kd)$. 
A \textit{fractional ideal} $I$ in $B_{p,\infty}$ is a $\Z$-lattice of rank four. We denote by $n(I)$ the \emph{norm} of $I$ as the largest rational number such that $n(\alpha) \in n(I)\Z$ for any $\alpha \in I$. 
An order $\O$ is a subring of $B_{p,\infty}$ that is also a fractional ideal. An order is called \textit{maximal} when it is not contained in any other larger order. 
The left order of a fractional ideal is defined as $\O_L(I) = \{\alpha \in B_{p,\infty}\;|\; \alpha I \subset I \}$ and similarly for the right order $\O_R(I)$. Then $I$ is said to be a right $\O_R(I)$-ideal or a left $\O_L(I)$-ideal. A fractional ideal is \textit{integral} if it is contained in its left order, or equivalently in its right order; we refer to integral ideals hereafter as ideals. 
Eichler orders are the intersection of two maximal orders. If $I$ is an ideal, we can define the Eichler order associated to $I$ as $\O_L(I) \cap \O_R(I)$. In that case, it can be shown that $\O_L(I) \cap \O_R(I) = \Z + I$ (see \cite{sqisign}).

\subsection{The Deuring correspondence}
Fix a supersingular elliptic curve $E_0$, and an order $\O_0 \simeq \End(E_0)$.
The curve/order correspondence allows one to associate to each outgoing isogeny $\varphi : E_0 \rightarrow E_1$ an integral left $\O_0$-ideal, and every such ideal arises in this way (see \cite{kohel1996endomorphism} for instance). Through this correspondence, the ring $\End(E_1)$ is isomorphic to the right order of this ideal. This isogeny/ideal correspondence is defined in \cite{waterhouse1969abelian}, and in the separable case, it is explicitly given as follows.   

\begin{definition}
\label{corresponding ideal and isogeny}
Given $I$ an integral left $\O_0$-ideal coprime to $p$, we define the $I$-torsion $E_0[I]= \lbrace P \in E_0(\overline{\mathbb{F}}_{p^2}): \alpha(P)= 0 \mbox{ for all } \alpha \in I \rbrace$. To $I$, we associate the separable isogeny $\varphi_I$ of kernel $E_0[I]$. 
Conversely given a separable isogeny $\varphi$, the corresponding ideal is defined as $I_\varphi = \lbrace \alpha \in \O_0 \;:\; \alpha(P) = 0 \mbox{ for all } P \in \textnormal{ker}(\varphi) \rbrace$.
\end{definition}

We summarize properties of the Deuring correspondence in Table~\ref{tab: deuring correspondence}, borrowed from \cite{sqisign}. 

\begin{table}[]
\centering
\begin{tabular}{ l @{\hspace{2em}} l }
\hline
 Supersingular $j$-invariants over $\mathbb{F}_{p^2}$ & Maximal orders in $\QA$ \\
 $j(E)$ (up to Galois conjugacy) & $\order \cong \textnormal{End}(E)$ (up to isomorphism)\\
 \hline
 $(E_1,\varphi)$ with $\varphi: E \rightarrow E_1$ & $I_\varphi$ integral left $\order$-ideal and right $\O_1$-ideal\\
 \hline
 $\theta \in \End(E_0)$ & Principal ideal $\O \theta$ \\
 \hline
 deg$(\varphi)$ & $n(I_\varphi)$ \\
 \hline
\end{tabular}
\caption{The Deuring correspondence, a summary~\cite{sqisign}. \label{tab: deuring correspondence}}
\end{table}

\subsection{Isogeny representation}
\label{sec: isogeny rep} 

In this subsection, we look at isogenies through a more algorithmic prism. Specifically, we consider the following question: what does it mean to ``compute'' an isogeny?
A natural answer is a rational map representation of the isogeny. 
Other representations are however possible, and  in~\cite[Sec~2.4]{Petit2017Hard} and~\cite{leroux2021new} it is argued that any such representation should allow efficient evaluation at arbitrary points (for a more complete study, look at \cite[chapter 4]{leroux2022quaternion}).  
More formally, Leroux defines an isogeny representation as some data $s_\phi$ associated to an isogeny $\phi :  E\rightarrow E'$ of degree $N$ such that there are two algorithms: one to \qq{verify} and one to \qq{evaluate} $\phi$.

The motivation to have a verification algorithm is found in a cryptographic context where an adversary might try to cheat by revealing something that is not a valid isogeny representation. But, in the more cryptanalatic point of view of this paper, we can assume that we work with a valid isogeny representation. This is why we take a relaxed definition of isogeny representation where we only require an evaluation algorithm (a verification algorithm can probably be derived from the evaluation algorithm anyway). Moreover, we assume that the representation is ``efficient'' meaning that is has polynomial size and the evaluation algorithm is polynomial-time in the log of the degree and the prime. We give a detailed version below. In our context, it is sufficient that the evaluation algorithm gives evaluation of points up to a (common) scalar which is why we qualify our isogeny representation as \textit{weak}. 

\begin{definition}
\label{def: isogeny rep}
A \textit{weak isogeny representation} for the isogeny $\phi : E \rightarrow E'$ of degree $N$, is a data $s_\phi$ of size $O(polylog(p+N))$ (associated to a unique isogeny $\phi$), such that there exists an algorithm $\mathcal{E}$ that takes $s_\phi$ and a point $P$ of the curve $E$ of order $d$ in input and computes $\lambda(d) \phi(P)$ in $O( polylog(d+ N + |P|))$ for any point $P$ of $E$, where $|P|$ is the bitsize of the representation of $P$.   
\end{definition}

The notion of isogeny representation is particularly relevant when the degree $N$ is a big prime and the kernel points are defined over an $\F_p$-extension of big degree (this is exactly the setting of pSIDH \cite{leroux2021new}). Indeed, in that case, the standard ways to represent isogenies (with polynomials, or kernel points) are not compact or efficient enough to match our definition.

The Deuring correspondence gave us the tools to obtain efficient representations with a natural isogeny representation obtained by taking $s_\phi$ as the ideal $I_\phi$ corresponding to $\phi$. This ideal representation matches Definition~\ref{def: isogeny rep}, however it also reveals the endomorphism ring of $E'$. 
One of the motivations of Leroux in \cite{leroux2021new} to introduce another isogeny representation (called the suborder representation) is to have an isogeny representation that does not directly reveal the endomorphism ring of the codomain. This suborder representation matches our notion of weak isogeny representation as defined in Definition~\ref{def: isogeny rep}. 
The main contribution of this paper implies that the suborder representation does not hide the endomorphism ring of the codomain to a quantum computer, even when the degree is prime.  

Since then, Robert \cite{robert2022evaluating} suggested to use the techniques introduced to attack SIDH in order to obtain another isogeny representation (this one not even requiring to reveal the endomorphism ring of the domain). 
Our analysis holds for any suborder representation, hence it also applies to Robert's one.

\subsection{The pSIDH key exchange}
\label{sec: psidh}

As an application of the hardness of computing the endomorphism ring from the suborder representation, Leroux introduced a key exchange called pSIDH. 
The principle can be summarized as follows: use the evaluation algorithm for the suborder representation to perform an SIDH-like key exchange, but for isogenies of big prime degree. The SIDH and pSIDH key exchange both use the following commutative isogeny diagram:   

\[
\xymatrix{
E_B \ar[r]^{\psi_A} & E_{AB} \\
E_0\ar[u]^{\phi_B} \ar[r]^{\phi_A} & E_A \ar[u]_{\psi_B}
}
\]

In pSIDH, Alice and Bob's secret keys are ideal representations for the isogenies $\phi_A$ and $\phi_B$ (or equivalently the endomorphism ring of the two curves $E_A$ and $E_B$), and their associated public keys are the suborder representations for $\phi_A$ and $\phi_B$.

Leroux showed that the knowledge of $\End(E_A)$ (resp. $\End(E_B)$) and the suborder representation of $\phi_B$ (resp. $\phi_A$) was enough to compute the end curve $E_{AB}$ from which the common secret can be derived efficiently. 
The mechanism behind this computation is quite complicated and is not relevant for us since we target the key recovery problem. We refer to \cite{leroux2021new} for more details.

\subsection{The hidden subgroup problem}\label{subsec:HSP}
The hidden subgroup problem (HSP for short) in a group $G$ is defined as the problem of finding a subgroup $H\leq G$ given a function $f$ on $G$ satisfying that $f$ is constant on the left
cosets of $H$ and takes different values on different cosets, i.e., $f(x)=f(y)$ if and only if 
$x^{-1}y\in H$. There is also a right version of 
the hidden subgroup problem where the level sets of the hiding function $f$ are the right cosets of $H$. As taking inverses in $G$ maps left cosets to right cosets and vice versa, the two versions of HSP are equivalent. (One just needs to replace the hiding function with its composition with taking inverses.) Although the equivalence is straightforward, it is useful as in certain cases it is easier to understand
right cosets than left ones (or conversely).

The framework of HSP captures many computational problems including some problems which most cryptographic protocols used today rely on, e.g., factoring and the discrete logarithm problem. 
Shor's quantum algorithms \cite{Shor97} can solve factoring and the discrete logarithm problem efficiently. Furthermore, quantum polynomial time algorithms for the finite abelian HSP generalizing Shor's algorithm are available, see \cite{kitaev1995quantum},\cite{boneh1995quantum}.

It is well known that the graph isomorphism problem and the shortest vector problem can be cast as HSP in symmetric groups and dihedral groups, respectively. However, in contrast to the abelian case, there are only a few positive results known for HSP in finite non-commutative groups. As shown in \cite{ettinger2000quantum}, HSP in dihedral groups 
is related to another problem called the hidden shift problem. The hidden shift problem in a group $G$ is the problem of finding an element $s\in G$ given two functions $f_1$ and $f_2$ on $G$ satisfying that $f_1(g)=f_2(gs)$ for every $g\in G$. 
If $f_1$ and $f_2$ are injective then the hidden shift problem in an abelian group $G$ is
equivalent to a hidden subgroup problem in the semidirect product $G\rtimes \Z/2\Z$. This is of particular interest in isogeny contexts, as the key recovery problem in CSIDH can be reduced to the injective hidden shift problem in abelian groups in order to produce quantum subexponential-time attacks based on Kuperberg's algorithm \cite{kuperberg2005subexponential}.

In this paper, we consider a restricted HSP in the general linear group. We use the term  Borel hidden subgroup problem for it. \CP{explain name?}
\begin{problem}
Let $N\in \Z_{\geq 1}$ and let $\Z/N\Z$ be the group of integers modulo $N$. The Borel HSP is the hidden subgroup problem in the general linear group $\GL_n(\Z/N\Z)$ for $N\in \Z_{\geq 1}$, i.e., the group of invertible $n$ by $n$ matrices with entries from $\Z/N\Z$, where the hidden subgroup $H$ is promised to be a conjugate of the subgroup consisting of the upper triangular matrices.
\end{problem}

Restricting the possible hidden subgroups in non-abelian groups may lead to efficient algorithms to find them. Denney et al. in \cite{DMS10} proposed a polynomial-time algorithm for the Borel HSP in $\GL_2(\mathbb{F}_p)$ for prime numbers $p$. A quantum algorithm for the more general case of $\GL_n(\F_q)$ over fields of size $q=p^k$, is provided by Ivanyos in \cite{ivanyos2011finding}. That algorithm runs in polynomial time if $q$ is not much smaller than $n$. 

In this paper, we consider the Borel HSP for $\GL_2(\Z_N)$ for any integer $N$ greater than one, and we present both classical and quantum algorithms for different parameters $N$. Note that $\GL_2(\Z/N\Z)$ acts as a permutation group on the set of the free cyclic $\Z/N\Z$-submodules of $(\Z/N\Z)^2$ and each Borel subgroup $H$ in $\GL_2(\Z/N\Z)$ is the stabilizer of a free cyclic $\Z/N\Z$-submodule $S$ of $(\Z/N\Z)^2$, thus finding the Borel subgroup is equivalent to finding the corresponding cyclic submodule. 
The main tool of the classical algorithm is a testing procedure to determine whether elements 
of $(\Z/N\Z)^2$ are in $S$. The classical algorithm solves the Borel HSP efficiently for any smooth number $N$, while the quantum algorithm efficiently solves the Borel HSP for arbitrary $N$. 
The main idea of the quantum algorithm is based on the observation that the problem can be reduced to another restricted hidden subgroup problem in the group $G=(\Z/N\Z)^2 \rtimes (\Z/N\Z)^*$ where the hidden subgroup is promised to be a complement of the normal subgroup $(\Z/N\Z)^2$. The latter restricted HSP can be cast as an instance of the multiple shift problem considered in~\cite{IPS18}, which can itself be seen as a generalization of the hidden shift problem.

\begin{problem}
The hidden multiple shift problem HMS($N,n,r$) is parameterized 
by three positive integers $N, n$ and $r,$ where $N>1$ and 
$2\leq r\leq N-1$. Assume that we have a set $H\subseteq \Z/N\Z$ 
of cardinality $r$ 
and a function $f_s: (\Z/N\Z)^n \times H \to \{0,1\}^l$, 
defined as $f_s(x,h)=f(x-hs)$ where 
$s\in (\Z/N\Z)^n$ and $f:(\Z/N\Z)^n\to \{0,1\}^l$ is an injective function. 
Given $f_s$ by an oracle, the task is 
to find $s \bmod{\frac{N}{\delta(H,N)}}$, 
where $\delta(H,N)$ is defined as the largest 
divisor of $N$ such that $h-h'$ is divisible by $\delta(H,N)$ 
for every $h,h'\in H$. 
\end{problem}

A special case of the HMS problem was first considered by Childs and van Dam \cite{childs2005quantum}. They presented a quantum polynomial time algorithm for the case when $n=1$ and $H$ is a contiguous interval of size $N^{\Omega(1)}$. For general $n$, an algorithm in \cite{IPS18} solves HMS($N,n,r$) in $O(\poly(n)(\frac{N}{r})^{n+O(1)})$. For a set $H$ of small size, HMS is close to the hidden shift problem. 
Specifically, HMS($N,n,2$) is the standard hidden shift problem. On the other extreme, for $r=N$, HMS is an abelian hidden subgroup problem in the group $(\Z/N\Z)^{n+1}$. Intuitively, the larger $r$ is, the easier HMS($N,n,r$) becomes. Below we restate the above mentioned result from \cite{IPS18} for the special case $n=1$.

\begin{theorem}\label{hms}
There is a quantum algorithm that solves the HMS($N,1,r$) 
in time $\left(\frac{N}{r}\right)^{O(1)}$ with high probability.
\end{theorem}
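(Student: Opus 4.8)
The plan is to run the standard hidden-shift quantum routine and turn its output into enough linear information on $s$ to pin it down, with the number of repetitions governed by $N/r$. First I would prepare $\frac{1}{\sqrt{rN}}\sum_{x\in\Z/N\Z,\,h\in H}\ket{x}\ket{h}$, query the oracle to append $\ket{f(x-hs)}$, and measure the value register; since $f$ is injective the outcome is $f(c)$ for a uniformly random unknown $c\in\Z/N\Z$ and the state collapses to the partial coset state
\[
\frac{1}{\sqrt r}\sum_{h\in H}\ket{c+hs}\ket{h},
\]
namely the restriction to $h\in H$ of the coset state of the cyclic subgroup $\langle(s,1)\rangle\le(\Z/N\Z)^2$. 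Applying the Fourier transform of $(\Z/N\Z)^2$ and measuring yields a pair $(u,v)$ with probability proportional to $|\sum_{h\in H}\omega_N^{h(us+v)}|^2$ (with $\omega_N=e^{2\pi i/N}$), which is maximal exactly on the residue class $us+v\equiv 0\pmod{N/\delta(H,N)}$; a short computation shows the measurement lands in that ``good'' class with probability at least $r/N$, with $u$ uniformly distributed. So each run produces, with probability at least $r/N$, a correct linear congruence $us+v\equiv 0\pmod{N/\delta(H,N)}$ on $s$.

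When $H=\Z/N\Z$ one has $\delta(H,N)=1$ and the measurement always hits the exact dual subgroup, so $O(\mathrm{polylog}\,N)$ samples determine $\langle(s,1)\rangle$ and hence $s$ — this is just the abelian HSP in $(\Z/N\Z)^2$. For general $H$ I would repeat the subroutine $T=(N/r)^{O(1)}$ times and recover $s\bmod N/\delta(H,N)$ from the collected noisy congruences by a procedure interpolating between two regimes. If $H$ is ``spread'', so the outcome distribution is sufficiently flat away from the good class, a plurality vote over the candidates $-u^{-1}v\bmod N/\delta(H,N)$ arising from runs with $u$ invertible modulo $N/\delta(H,N)$ singles out $s$. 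If $H$ is affinely a short interval, those candidates bunch together and the vote fails, but the iterative Fourier-sampling algorithm of Childs and van Dam \cite{childs2005quantum} then recovers $s$ in time $(N/r)^{O(1)}$. The intermediate case I would handle by peeling: measuring $h$ modulo a suitable $m$ collapses the partial coset state to an HMS instance with a denser, progression-like $h$-set and rescaled shift $ms$, after which the partial data on $s$, $ms$, $\dots$ are recombined by CRT and linear algebra.

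The main obstacle is the quantitative control of this recovery step, uniformly in $H$: one must show that a single run produces a congruence consistent with a fixed wrong residue $s'\not\equiv s\pmod{N/\delta(H,N)}$ with probability smaller than for the true $s$ by a gap of order $(N/r)^{-O(1)}$. Via Parseval over $\Z/\delta(H,N)\Z$ — using that all elements of $H$ share a common residue modulo $\delta(H,N)$ — this reduces to a character-sum estimate of the shape $|\sum_{h\in H}\omega_N^{hc}|\le r-(N/r)^{-O(1)}$ for every $c$ that is not a period of $H$ modulo $N/\delta(H,N)$; the bound is weakest precisely when $H$ clusters into a short progression, which is why that regime has to be separated out, and arranging the decomposition, the recursion depth, and these estimates so that they compose into a clean $(N/r)^{O(1)}$ bound (rather than, say, $(N/r)^{\Theta(\log\log N)}$) is the delicate part. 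Granting this, a single run of the subroutine costs $\mathrm{poly}(\log N)$ time, $T=(N/r)^{O(1)}$ runs suffice, and the classical post-processing is polynomial, which gives the claimed running time.
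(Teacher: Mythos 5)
The paper does not prove this statement; it is restated verbatim from the reference~\cite{IPS18} with the explicit sentence ``Below we restate the above mentioned result from \cite{IPS18} for the special case $n=1$.'' So there is no internal proof against which to compare your argument, and your proposal must be judged on its own merits as a reconstruction.

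Your opening moves are sound and are almost certainly part of any correct proof: preparing the uniform superposition over $(x,h)$, querying, measuring the $f$-register to collapse to the partial coset state $\frac{1}{\sqrt r}\sum_{h\in H}\ket{c+hs}\ket{h}$, applying the QFT over $(\Z/N\Z)^2$, and computing that the outcome $(u,v)$ occurs with probability $\frac{1}{rN^2}\bigl|\sum_{h\in H}\omega_N^{h(us+v)}\bigr|^2$ are all correct, and the identification of the ``good'' residue class $us+v\equiv 0\pmod{N/\delta(H,N)}$ together with the lower bound $\ge r/N$ on its total mass is exactly right. The gap is in the classical recovery. You correctly flag it yourself, but I want to be precise about why the resolution you sketch does not close it. First, the character-sum estimate $\bigl|\sum_{h\in H}\omega_N^{hc}\bigr|\le r-(N/r)^{-O(1)}$ for $c$ outside the good class is not something one can prove for an arbitrary $H$: Bohr-type sets (unions of a few translates of a progression, or more generally sets clustered near a nontrivial approximate period) produce near-maximal sums at $c$'s that are not multiples of $N/\delta(H,N)$, and these are neither ``spread'' in your sense nor intervals, so they fall through your trichotomy. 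Second, the ``peeling'' step --- ``measuring $h$ modulo a suitable $m$'' --- is not a well-defined reduction: you cannot choose $m$ without already knowing structural information about $H$, the post-measurement set $\{h'\,:\,h_0+mh'\in H\}$ need not be any better behaved than $H$ itself (it can again be an arbitrary sparse set), and the inner shift you obtain is $ms$ modulo $N/\gcd(m,N)$, so the instances you recurse on do not obviously have a smaller $N/r$ ratio, which is what would be needed for the recursion depth to stay bounded. Third, even where the plurality vote is intended to work, you have not argued that wrong residues $s'$ each accumulate mass at most that of the true class minus a $(N/r)^{-O(1)}$ gap; for that you would need Parseval-style averaging over $u$, and once $u$ is not uniform conditional on the good class (it is, but the bad classes are not uniformly bad), the union bound over the $N/\delta(H,N)$ candidate residues costs another factor you have not controlled. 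In short, the Fourier-sampling front end is correct, but the back end --- turning $(N/r)^{O(1)}$ noisy samples into the residue $s\bmod N/\delta(H,N)$ uniformly in $H$ --- is the heart of the theorem and is left open by the proposal. You are right that Childs--van Dam~\cite{childs2005quantum} resolves the interval case; the content of~\cite{IPS18} is precisely the extension to arbitrary $H$, and as written your argument does not recover it.
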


\subsection{The malleability oracle}

In \cite{kutas2021one} the authors introduce a general framework dubbed the malleability oracle. Let $G$ be a group acting on a set $X$ and let $f:X\rightarrow I$ be an injective function where $I$ is some set. The input of the malleability oracle is an element $g\in G$ and a value $f(x)$ ($x$ is not provided) and the output is $f(g*x)$. It is shown in \cite[Theorem 3.3]{kutas2021one} that if $G$ is abelian and the action of $G$ on $X$ is free and transitive then inverting $f(x)$ can be reduced to an abelian hidden shift problem. The idea of the proof is as follows. One takes an arbitrary known $x_0$ and the corresponding $f(x_0)$. Then one can define two functions $f_0,f_1$ from $G$ to $I$ where $f_0(g)=f(g*x_0)$ and $f_1(g)=f(g*x)$. These functions are well-defined as $f$ was injective. Now since the action of $G$ is transitive there is an element $s$ that takes $x_0$ to $x$. One can easily see that $f$ and $f_0$ are shifts of each other and the shift is realized by that element $s$. Since the action is free, $f$ and $f_0$ will be injective functions themselves hence one can apply Kuperberg's algorithm to find $s$ and finally that is enough to compute $x$. 
\begin{remark}
It follows from the proof that it is not strictly necessary for the action to be transitive. It is enough if we know any element in the orbit of the secret $x$. For instance if suffices if there are only a few orbits and we have a representative of each of them (as we can run Kuperberg's algorithm multiple times with different $x_0$s). 
\end{remark}
This framework is pretty general and applies to a variety of scenarios. It encompasses the group action underlying CSIDH \cite{castryck2018csidh} but in \cite{kutas2021one} it is shown how this framework can be applied in the context of sidh variants.

Let $E$ be a supersingular elliptic curve with known endomorphism ring $O$. Here we assume that one can evaluate every element of $O$ efficiently on points of $E$. Let $N$ be any integer. Then $O/NO$ is isomorphic to $M_2(\mathbb{Z}/N\mathbb{Z})$ \cite[Theorem 42.1.9]{voight2018quaternion}. This implies that $(O/NO)^*$ is isomorphic to $G=\GL_2(\mathbb{Z}/N\mathbb{Z})$. 
Now it is clear that $G$ acts on cyclic subgroups of order $N$ of $E$ by evaluation. When there is a one-to-one correspondence between cyclic subgroups of order $N$ and $N$-isogenous curves to $E$, then this implies an action on $N$-isogenous curves to $E$. 
What would a malleability oracle look like in this framework? One is given a curve $E'$ that is $N$-isogenous to $E$. Let $A$ be the corresponding secret kernel. Now the input of the oracle is an endomorphism $\sigma$ (whose degree is coprime to $N$) and then it returns $E/\sigma(A)$. 
The above theorem \qq{almost} states that if one has access to such an oracle, then one can compute $A$ via a hidden shift algorithm. The \qq{almost} part comes from the fact that $G$ here is not abelian and the group action is not free. In \cite{kutas2021one} it is shown that one can get around this issue by essentially just utilizing a subgroup of $G$ that is abelian (and evoking some small technical conditions). 

One can look at this result as a subexponential quantum reduction from finding a certain $N$-isogeny to being able to instantiate the malleability oracle, which is formulated as Problem~\ref{Prob:endev}. The results of Section~\ref{sec: isogeny section} will be related to a generalization of \ref{Prob:endev}.

In \cite{kutas2021one} the authors were able to solve the above problem when $\deg(\phi)=2^k$ and the action of $\phi$ is known on a sufficiently large subgroup of $E$. In order to achieve this result one had to throw away most of the available information (by restricting $G$ to a small abelian subgroup) in order to fit the malleability oracle framework. In this paper we show that utilizing the entire $G$-action improves on \cite{kutas2021one} significantly.

\section{The Borel hidden subgroup problem}
\label{sec: borel hsp}
In this section, we present both classical and quantum algorithms for the \qq{two-dimensional} Borel hidden subgroup problem. The classical algorithm solves the Borel HSP efficiently in the group $\GL_2(\Z/N\Z)$ for smooth number $N$, while the quantum algorithm solves it efficiently for any positive odd number $N$. For an even number $N$ we can use a classical procedure applied to the 2-part of $N$ with the quantum one for the odd part of $N$ to obtain a quantum method for every $N$.

Let $N$ be an integer greater than one and let $R=\Z/N\Z$. Let $V$ denote $R^2$, the free $R$-module of rank $2$, let $E=\End_{R}(V)$, and $G=\Aut(V)=E^*$. By fixing a basis, we have an explicit isomorphism $E\cong \Mat_2(R)$ and $G\cong \GL_2(R)$. 
Note that $G$ acts as a permutation group on the set of the free cyclic $R$-submodules of $V$. Let $H$ be the stabilizer of a secret free cyclic submodule $S$. In the matrix notation, $H$ is a conjugate of the subgroup consisting of the upper triangular matrices in $G$. That is, in an appropriate basis for $V$, the elements of $H$ are of the form
$$\begin{pmatrix}
* & * \\
0 & *
\end{pmatrix},
$$ 
where the diagonal entries are units in $R$. (Here the first basis element is a generator for $S$.) 
The Borel HSP in $G$ is the following:
we are given a function on $G$ (given by an oracle) that is constant on the left cosets of $H$ and takes different values on distinct cosets, the task is to find $H$, or equivalently the submodule $S$. 

Using Chinese remaindering, one can reduce the case when $N$ is any number of known factorization to instances of the prime power case.
\begin{lemma}\label{crt}
Let $N=N_1N_2$ be a known decomposition of $N$ where
$\gcd(N_1,N_2)=1$. Then we have 
$$\GL_2(\Z/N\Z)\cong \GL_2(\Z/N_1\Z)\times \GL_2(\Z/N_2\Z).$$
Moreover, one can reduce the Borel
HSP in $\GL_2(\Z/N\Z)$ to the Borel HSP in 
$\GL_2(\Z/N_i\Z)$ for $i=1,2$.
\end{lemma}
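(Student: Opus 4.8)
The statement has two halves --- a ring-theoretic isomorphism and an oracle reduction --- and I would treat them in that order. For the isomorphism, the plan is to start from the Chinese Remainder isomorphism of rings $\Z/N\Z \cong \Z/N_1\Z \times \Z/N_2\Z$ (which holds precisely because $\gcd(N_1,N_2)=1$), apply the functor $\Mat_2(-)$ to obtain $\Mat_2(\Z/N\Z) \cong \Mat_2(\Z/N_1\Z) \times \Mat_2(\Z/N_2\Z)$ as rings --- a matrix over a product ring being nothing but the pair of its componentwise reductions --- and then pass to unit groups, using that a matrix over $R_1\times R_2$ is invertible iff each of its two components is. This yields the claimed group isomorphism $\iota : \GL_2(\Z/N\Z) \xrightarrow{\sim} \GL_2(\Z/N_1\Z) \times \GL_2(\Z/N_2\Z)$, with both $\iota$ and $\iota^{-1}$ computable in polynomial time by reducing/lifting matrix entries via CRT.

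The heart of the reduction is the observation that, under $\iota$, the hidden Borel subgroup is itself a \emph{direct product} of Borel subgroups. Recall $H$ is the stabilizer of a secret free cyclic submodule $S\le (\Z/N\Z)^2$. The compatible module isomorphism $(\Z/N\Z)^2 \cong (\Z/N_1\Z)^2 \times (\Z/N_2\Z)^2$ carries $S$ to $S_1\times S_2$, where $S_i$ is the reduction of $S$ modulo $N_i$; since a generator of $S$ is a primitive (unimodular) vector and primitivity is preserved under these reductions, each $S_i$ is again free cyclic. An element $g=\iota^{-1}(g_1,g_2)$ fixes $S=S_1\times S_2$ iff $g_1$ fixes $S_1$ and $g_2$ fixes $S_2$, so $\iota(H) = H_1\times H_2$ with $H_i$ the stabilizer of $S_i$, a Borel subgroup of $\GL_2(\Z/N_i\Z)$.

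For the oracle reduction, let $f$ be the hiding function for $H$. To produce an instance of the Borel HSP for $H_1$, fix any $g_2^\star\in \GL_2(\Z/N_2\Z)$ (say the identity) and set $f_1(g_1) = f\bigl(\iota^{-1}(g_1,g_2^\star)\bigr)$. Then $f_1(g_1)=f_1(g_1')$ iff $\iota^{-1}(g_1^{-1}g_1',\,I)\in H = \iota^{-1}(H_1\times H_2)$ iff $g_1^{-1}g_1'\in H_1$, so $f_1$ is a legitimate hiding function for $H_1$, and each evaluation costs one call to $f$ plus a CRT lift. Symmetrically one gets $f_2$ for $H_2$. Running the two sub-instances returns $S_1$ and $S_2$ (equivalently $H_1,H_2$), and CRT recombines them into $S$ (equivalently $H$).

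The only step needing genuine care is the structural claim that $\iota(H)$ is a product subgroup and not merely a subgroup of the product --- this is exactly what makes the ``slice'' functions $f_i$ well behaved. It follows from the description of Borel subgroups as point stabilizers on the set of free cyclic submodules together with the compatibility of that action with Chinese remaindering; once this is in place, the remainder of the argument is routine bookkeeping.
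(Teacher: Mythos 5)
Your proof is correct and follows essentially the same approach as the paper: both use the Chinese Remainder Theorem on the ring $\Z/N\Z$ to induce a product decomposition of $\GL_2$, observe that the stabilizer $H$ of the secret free cyclic submodule $S$ decomposes as a direct product $H_1\times H_2$ of the corresponding Borel subgroups (because $S$ itself decomposes as $S_1\times S_2$), and obtain hiding functions for each $H_i$ by restricting $f$ to the $i$-th component of the product. Your write-up is somewhat more explicit --- in particular it spells out why fixing $g_2^\star=\Id$ makes $f_1$ a legitimate hiding function, and flags that $\iota(H)$ being a \emph{product} subgroup (not just a subgroup of the product) is the crucial structural fact --- but there is no difference in the underlying argument.
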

\begin{proof}
By the Chinese Remainder Theorem, 
$\Z/N\Z \cong \Z/N_1\Z\oplus \Z/N_2\Z$,
$(\Z/N\Z)^2 \cong (\Z/N_1\Z)^2\oplus (\Z/N_2\Z)^2$, 
$\End((\Z/N\Z)^2) \cong \End((\Z/N_1\Z)^2)\oplus \End((\Z/N_2\Z)^2)$.
Furthermore, these isomorphisms can be efficiently computed using the extended Euclidean algorithm. The restriction of the third isomorphism also gives  $\Aut((\Z/N\Z)^2)\cong \Aut((\Z/N_1\Z)^2)\times \Aut((\Z/N_2\Z)^2)$. The stabilizer $H$ of the free cyclic submodule $S$ generated by $(A_1,A_2)\in (\Z/N_1\Z)^2\oplus (\Z/N_2\Z)^2$ is the direct product of the stabilizers $H_i$ of $S_i$, where $S_i$ are the free cyclic submodules over $\Z/N_i\Z$ generated by $A_i$. Hiding functions for $H_i$ can be obtained by restricting the hiding function for $H$ to the component $\Aut((\Z/N_i\Z)^2)$.
\qed 
\end{proof}

\subsection{A classical Borel HSP algorithm}

Based on iterated applications of Lemma~\ref{crt}, we can focus on the prime power case. (Note that the factorization of $N$ can be computed in deterministic time polynomial in $B\log N$ where $B$ is an upper bound on the prime divisors of $N$.) 
Therefore, we assume $N=q^k$ for a prime number $q$. 

\CP{I struggle to parse the description in this section (and I think one crypto reviewer had the same complain). I can see four different tests 1) whether $u\in S$ (described in a paragraph) 2) whether $\varphi(S)\in S$ (described in a lemma) 3) the so-called testing procedure (not sure what it tests) 4) within the description of the testing procedure, another test for the case $q=2$. IMHO, the structure does not make clear how these tests relate to each other. Suggestions: state explicitly what the testing procedure is supposed to do, add pseudocode for it, isolate the $q=2$ case like the other ones, and explain how the testing procedure relates to the other tests described }

An important subroutine in our algorithm is a procedure for testing whether an element $u\in V$ is in $S$ based on the following observations.
If $u\in S$ then for any $\varphi\in E$ such that $\varphi+\Id\in G$ and $\varphi(V)\leq Ru$ we have $\varphi+\Id\in H$.
This is because for $u\in S$ we have $\varphi(u)\in R u\leq S$ and $\Id(u)=u\in S$. 
On the other hand, if $u\not\in S$ then there exists an element $\varphi\in E$ with $\varphi(V)\leq Ru$ and $\varphi(S)\not\leq S$. Indeed, if $\{v,w\}$ is an $R$-basis of $V$ such that $v$ is a generator of $S$, then the map sending $v$ to $u$ and $w$ to zero satisfies these properties.

Another ingredient of the testing procedure is the following.
\begin{lemma}
If $\varphi+\Id\in G$ then $\varphi(S)\leq S$ if and only if
$\varphi+\Id\in H$.
\end{lemma}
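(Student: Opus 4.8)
The plan is to prove the equivalence $\varphi(S)\leq S \iff \varphi+\Id\in H$ under the standing assumption $\varphi+\Id\in G$, using that $H$ is exactly the stabilizer of $S$ in $G$, i.e. $H=\{g\in G : g(S)=S\}$.

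First, for the forward direction, suppose $\varphi(S)\leq S$. Then for any $s\in S$ we have $(\varphi+\Id)(s)=\varphi(s)+s\in S$ since both summands lie in $S$; hence $(\varphi+\Id)(S)\leq S$. Now I would upgrade this containment to equality: since $\varphi+\Id\in G$ is an automorphism of $V$, its restriction to $S$ is an injective $R$-module homomorphism $S\to S$. Because $S$ is a free cyclic $R$-module (i.e. $S\cong R$), an injective endomorphism of $S$ sending $S$ into $S$ is in fact surjective — this is the key finiteness fact: an injective $R$-linear map $R\to R$ is given by multiplication by a unit (a non-unit would kill a nonzero element, as $R=\Z/N\Z$ has zero divisors unless $N$ is prime, but even in the prime-power case multiplication by a non-unit $q^i$, $i\geq 1$, has nonzero kernel). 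Therefore $(\varphi+\Id)(S)=S$, so $\varphi+\Id$ stabilizes $S$ and lies in $H$.

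Conversely, if $\varphi+\Id\in H$, then $(\varphi+\Id)(S)=S$, so for every $s\in S$ we have $\varphi(s)=(\varphi+\Id)(s)-s\in S$ (again $S$ being an $R$-submodule, hence closed under subtraction), which says exactly $\varphi(S)\leq S$. This direction is immediate and needs no finiteness input.

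The only mild subtlety — and the step I would be most careful about — is the surjectivity argument in the forward direction: one must not conclude $(\varphi+\Id)(S)=S$ merely from $(\varphi+\Id)(S)\leq S$ plus injectivity in an abstract module, but here it is legitimate because $S\cong \Z/q^k\Z$ is finite, so an injective self-map is automatically bijective, and the restriction of an automorphism of $V$ to an $R$-submodule it preserves is injective. Alternatively one can bypass surjectivity entirely: $\varphi+\Id\in G$ means $\det(\varphi+\Id)\in R^*$, and $(\varphi+\Id)(S)\leq S$ together with $(\varphi+\Id)$ acting invertibly on $V/S$ (since it acts invertibly on $V$ and preserves $S$... ) forces the restriction to $S$ to be invertible by multiplicativity of determinants over the local ring $R$. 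I would present the short finiteness argument as the cleanest route. With both inclusions established, the equivalence follows.
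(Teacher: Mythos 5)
Your proof is correct and takes essentially the same route as the paper: the forward direction is the same computation $(\varphi+\Id)(S)\leq\varphi(S)+S=S$, with you making explicit the (correct, but routine for a finite module) step that an injective endomorphism of $G$ restricted to $S$ is surjective onto $S$, which the paper leaves implicit in its definition of $H$ as the stabilizer. For the reverse direction you argue directly via $\varphi(s)=(\varphi+\Id)(s)-s\in S$, whereas the paper argues by contrapositive — if $\varphi(v)\notin S$ for some $v\in S$ then $\varphi(v)+v$ lies in the coset $\varphi(v)+S$ disjoint from $S$ — but the two are just two phrasings of the same observation.
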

\begin{proof}
If $\varphi(S)\leq S$ then $(\varphi+\Id)S \leq
\varphi(S)+S=S$. To see the reverse implication, assume that
$\varphi(v)\not\in S$ for some $v\in S$. Then $\varphi(v)+v$ is in the coset $\varphi(v)+S$ disjoint from $S$.
\qed 
\end{proof}
Thus for $\varphi$ with $\varphi+\Id\in G$ we can test whether
$\varphi(S)\leq S$ by comparing the value of the hiding function taken on $\varphi+\Id$ with that on $\Id$. 

\begin{paragraph}{Testing procedure:}
Let $w_1, w_2$ be a fixed basis of $V$. Given $u\in V$ we define two maps $\varphi_1, \varphi_2$ by $\varphi_i(w_i)=u$ and $\varphi_i(w_{3-i})=0$. Note that $\varphi_1$ and $\varphi_2$ generate $E_u:=\{\varphi\in E:\varphi(V)\leq Ru\}$ as an $R$-submodule of $E$. Therefore if $\varphi_i(S)\leq S$ ($i=1,2$) then for every element $\varphi\in E_u$ we have $\varphi(S)\leq S$. If $u\in qV$ then $\varphi_i-\Id\in qE-\Id\subseteq G$ ($i=1,2$), so we can test whether $u\in S$ by testing $\varphi_i(S)\leq S$ ($i=1,2$) by comparing the value of the hiding function taken on $\varphi_i+\Id$ with that on $\Id$. If $q\neq 2$ then either $\varphi_i-\Id$ or
$-\varphi_i-\Id$ (or both) fall in $G$ (depending on the nonzero eigenvalue of $\varphi_i$ modulo $q$), so the test above works with a minor modification for $u\not\in qV$ as well. Finally, to cover the case $q=2$ and $u\not \in qV$ 
observe that $u\in S$ if and only if $S=Ru$. To test whether this is the case we compute generators for the subgroup 
$H_{Ru}=\{\varphi\in G:\varphi(u)\in Ru\}$ and test membership of these generators for membership in $H$ again by comparing values of the hiding function. 
\end{paragraph}

Equipped with the testing procedure, we compute $S$ from ``bottom u'' as follows. First we compute $S\cap q^{k-1}V$. Note that $q^{k-1}V\cong (\Z/q\Z)^2$ and there are $q+1$ possibilities for $S_{k-1}=S\cap q^{k-1}V$. We can find $S\cap q^{k-1}V$ by brute force based on the testing procedure on all $q+1$ submodules corresponding to each possibility in time $q\poly \log|N|$. Assume that we have computed $S_l=S\cap q^lV$ for
some $l>0$. Then we compute $V_l=\{v\in q^{l-1}V:qv\in S_l\}$ and by an exhaustive search in the factor $V_l/S_l$ we find $S_{l-1}=S\cap q^{l-1}V$ using again the test  in time $q\poly\log|N|$. Note that $V_l/S_l$ is an elementary abelian group of rank at most two. (A factor of a subgroup of an abelian $q$-group generated by 2 elements is also 2-generated.) 
The total cost is $kq\poly\log|N|$. 

We deduce the following result.

\begin{theorem}\label{borel:classical} There is a classical algorithm that solves the Borel hidden subgroup problem in 
$\GL_2(\Z/N\Z)$ in time $\poly(B\log N)$ where $B$ is an upper bound for the largest prime factor of $N$.
\end{theorem}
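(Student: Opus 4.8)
The plan is to reduce, via the Chinese Remainder Theorem, to prime powers, and then to reconstruct the hidden free cyclic submodule $S$ one $q$-adic layer at a time, using the membership test for $S$ sketched above as a black box.

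First I would factor $N=\prod q_i^{k_i}$; since every prime divisor is at most $B$, this can be done deterministically in time $\poly(B\log N)$. Iterating Lemma~\ref{crt}, the Borel HSP in $\GL_2(\Z/N\Z)$ splits into independent Borel HSP instances in the $\GL_2(\Z/q_i^{k_i}\Z)$, with hiding functions obtained by restricting the given one to each factor; recombining the solutions $S_i$ yields $S$. So it suffices to treat $N=q^k$, and I set $R=\Z/q^k\Z$, $V=R^2$.

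For $N=q^k$ I would build the descending chain $S_l:=S\cap q^lV$ from $l=k-1$ down to $l=0$, with $S_0=S$. Since $q^{k-1}V\cong(\Z/q\Z)^2$, there are exactly $q+1$ candidates for $S_{k-1}$ among the order-$q$ cyclic submodules of $q^{k-1}V$; I test each with the membership procedure, at cost $q\cdot\poly\log N$. Inductively, given $S_l$, any candidate generator for $S_{l-1}$ lies in $V_l:=\{v\in q^{l-1}V:qv\in S_l\}$, and $V_l/S_l$ is elementary abelian of rank at most $2$ (a quotient of a subgroup of a $2$-generated abelian $q$-group is again $2$-generated), so there are $O(q)$ candidates, tested in $q\cdot\poly\log N$. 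Summing over the $k\le\log_q N$ layers gives total cost $kq\cdot\poly\log N=\poly(B\log N)$, as $q\le B$.

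The membership test is where the real work sits. To decide whether $u\in V$ lies in $S$, I exploit that a map $\varphi\in E$ with $\varphi(V)\le Ru$ satisfies $\varphi(S)\le S$ precisely when $u\in S$, and that, by the lemma above on $\varphi+\Id$, whenever $\varphi+\Id\in G$ this is equivalent to $\varphi+\Id\in H$ — one comparison of the hiding function at $\varphi+\Id$ with its value at $\Id$. Applying this to the two generators $\varphi_1,\varphi_2$ of $E_u=\{\varphi\in E:\varphi(V)\le Ru\}$ settles the case $u\in qV$ immediately, since then $\varphi_i-\Id\in qE-\Id\subseteq G$; for $u\notin qV$ with $q$ odd, one of $\varphi_i-\Id$, $-\varphi_i-\Id$ remains invertible according to the nonzero eigenvalue of $\varphi_i$ modulo $q$, so a sign adjustment suffices. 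I expect the genuine obstacle to be the leftover case $q=2$, $u\notin 2V$, where neither $\varphi_i+\Id$ nor $\varphi_i-\Id$ need lie in $G$: there I would instead use that $u\in S$ iff $S=Ru$, compute explicit generators of the stabilizer $H_{Ru}=\{\varphi\in G:\varphi(u)\in Ru\}$, and test each of them for membership in $H$ against the hiding function, concluding $S=Ru$ iff all of them lie in $H$.
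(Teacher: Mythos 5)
Your proposal follows essentially the same route as the paper: CRT reduction to prime powers via Lemma~\ref{crt}, a membership test for $u\in S$ built from maps $\varphi$ with $\varphi(V)\le Ru$ and the equivalence $\varphi(S)\le S\iff\varphi+\Id\in H$, the same case split on $u\in qV$, $u\notin qV$ with $q$ odd, and $q=2$ with $u\notin 2V$ handled via generators of $H_{Ru}$, and the same bottom-up reconstruction of $S$ through the chain $S_l=S\cap q^lV$ with cost $kq\cdot\poly\log N$. The argument is correct and matches the paper's proof in all essentials.
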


\subsection{A quantum algorithm}

Denney, Moore and Russell~\cite{DMS10} proposed a quantum polynomial time algorithm that solves the problem in the case when $N$ is a prime. In this subsection we extend their method to arbitrary $N$ as follows. Based on Lemma~\ref{crt} and Theorem~\ref{borel:classical}, it is sufficient to give a procedure that works modulo the odd part of $N$. Thus, in the rest of the discussion we can and do assume that $N$ is odd. 

We use the notation introduced at the beginning of the section. In particular, $H$ is the stabilizer of a secret free cyclic $R$-submodule $S$ of $V=R^2$. Note that $V/S$ is again a free cyclic $R$-module whence for $w=(1,0)^T$ or $w=(0,1)^T$ we have that $S$ and $w$ generate $V$.
We describe an algorithm that works under the assumption that $S$ and $w=(1,0)^T$ generate
$V$. If that fails, we repeat it after an appropriate basis change.

From the assumption, it follows that there is a unique element $s\in R$ such that $S$ is generated by $v=(s,1)^T$.
We restrict the hidden subgroup problem to the stabilizer $K$ of the vector $w$. Note that $K$ is the group consisting of invertible matrices of the form
$$\begin{pmatrix}
1 & * \\
0 & *
\end{pmatrix}.$$ 
Observe that $K$ is isomorphic to the semidirect $R\rtimes R^*$ where the action of $R^*$ on $R$ is the multiplication by its elements and the hidden subgroup $H\cap K$ is the stabilizer of the free cyclic submodule $S$ in $K$. Note that the stabilizer of the submodule $Rv$ in $K$ is the conjugate of the stabilizer of the submodule of $R^2$ generated by $(0,1)^T$ in $K$ by the unitriangular matrix 
$$\begin{pmatrix}
1 & s \\
0 & 1
\end{pmatrix}$$
transporting $(0,1)^T$ to $v=(s, 1)^T$. Hence, by a straightforward calculation, the stabilizer $H\cap K$ of the submodule $Rv$ in $K$ is the subgroup
\begin{equation}
\label{eq:hms}
K_s=\left \{\begin{pmatrix} 1 & hs-s\\ 0 & h
\end{pmatrix}: h\in R^* \right\}.
\end{equation}
As $N$ is odd, the images of the matrices of the form $M-\Id$ where $M\in H\cap K$:
$$\left \{\begin{pmatrix} 0 & (h-1)s\\ 0 & h-1
\end{pmatrix}: h\in R^* \right\}$$ generate the submodule $Rv$.
Indeed, $2\in R^*$ and hence we have $M=\begin{pmatrix} 1 & s\\ 0 & 2 \end{pmatrix}\in H\cap K$ and so the image of $M-\Id$ is $Rv$. 

As shown in the {\em preprint version} of \cite{IPS18}, the HSP in $K\cong R\rtimes R^*$ where the hidden subgroup $H$ is a conjugate of the complement $R^*$ can be cast as an instance of the hidden multiple shift problem $HMS(N,1,r)$ with $r=\phi(N)$, the Euler's totient function of $N$. To see this, note that from (\ref{eq:hms}) it follows that the {\em right} cosets of $K_s$ are of the form $$K_s\cdot \left\{\begin{pmatrix} 1 & a \\ 0 & 1
\end{pmatrix}: h\in R^* \right\}=
\left \{\begin{pmatrix} 1 & hs-s+a\\ a & h
\end{pmatrix}: h\in R^* \right\}.$$ Therefore, when we encode the elements $g$ of $K$ by the second column of $g-\Id$,
the right version of the HSP gives an instance of the hidden multiple shift problem on  $H\times \{h-1:h\in R^*$\}.
As already noted in Section \ref{subsec:HSP}, the left version of the HSP is equivalent to the right one. Therefore, since 
$HMS(N,1,\phi(N))$ can be solved efficiently by Theorem \ref{hms}, 
we obtain the following result. 

\begin{theorem}\label{borel:quantum}
There is a quantum algorithm that efficiently finds the associated free cyclic submodule $S$ for the hidden Borel subgroup $H$ in $\GL_2(\Z/N\Z)$ in time $(\log N)^{O(1)}$.
\end{theorem}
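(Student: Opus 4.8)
The plan is to reduce the Borel HSP in $\GL_2(\Z/N\Z)$ to a single instance of the hidden multiple shift problem, which is then solved by Theorem~\ref{hms}. First I would peel off the $2$-part of $N$: writing $N = 2^a M$ with $M$ odd, Lemma~\ref{crt} reduces the problem to the Borel HSP modulo $2^a$ and modulo $M$, and the former is solved classically by Theorem~\ref{borel:classical} in time $\poly(\log N)$ (the relevant prime bound $B$ being $2$). Hence I may assume $N$ is odd. Since $V/S$ is free cyclic of rank one, $V=R^2$ admits a basis $\{v,w\}$ with $v$ a generator of $S$; the algorithm tries $w=(1,0)^T$ first, and if $S$ together with $(1,0)^T$ fails to generate $V$ it reruns after conjugating the instance by a (random) element of $\GL_2(R)$ until $(1,0)^T$ works. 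Under the working assumption, $S=Rv$ for a unique $v=(s,1)^T$, so the task becomes: recover $s\in R$ (and then undo the basis change).

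Next I would restrict the hiding function to the stabilizer $K$ of $w=(1,0)^T$, i.e.\ the group of invertible matrices $\begin{pmatrix} 1 & * \\ 0 & * \end{pmatrix}\cong R\rtimes R^*$; as recorded in~(\ref{eq:hms}), the restricted hidden subgroup $H\cap K=K_s$ is the conjugate of the complement $R^*$ by the unitriangular matrix transporting $(0,1)^T$ to $v$. Encoding each $g\in K$ by the second column of $g-\Id$ yields a bijection from $K$ onto $R\times H'$, where $H'=\{h-1:h\in R^*\}$, and a direct computation of the right cosets $K_s g$ shows that under this encoding the level sets of the (right) hiding function are exactly the sets $\{(x+(t-t_0)s,\, t):t\in H'\}$. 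Composing the hiding function with the inverse encoding therefore produces a valid instance of HMS$(N,1,r)$ with shift $s$, shift-set $H'$ and $r=|H'|=\phi(N)$; since the left and right versions of HSP are equivalent (Section~\ref{subsec:HSP}), this restriction is legitimate. Because $N$ is odd, $2\in R^*$, so $H'$ contains both $0$ and $1$; hence $\delta(H',N)=1$ and the HMS solver returns $s$ modulo $N$ exactly, which determines $S=R(s,1)^T$.

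Finally, Theorem~\ref{hms} solves HMS$(N,1,\phi(N))$ in time $(N/\phi(N))^{O(1)}$, and $N/\phi(N)=\prod_{\ell\mid N}(1-1/\ell)^{-1}$, which is $O(\log\log N)$ by Mertens' theorem, so the quantum part costs $(\log N)^{O(1)}$; adding the classical handling of the $2$-part and the (expected polylogarithmically many) basis-change retries keeps the total running time $(\log N)^{O(1)}$, as claimed.

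The main obstacle is the middle step: one must check carefully that the encoding $g\mapsto (g-\Id)(0,1)^T$ is injective on $K$ and genuinely carries the coset partition of $K_s$ onto the level-set structure of a function of the HMS form $x\mapsto f(x-hs)$ with $f$ injective --- not merely onto something that resembles it --- so that Theorem~\ref{hms} applies verbatim and outputs the correct shift $s$. The remaining ingredients (the Chinese-remainder split, the reduction to odd $N$, the isomorphism $K\cong R\rtimes R^*$, the equality $\delta(H',N)=1$, and the Mertens bound) are routine.
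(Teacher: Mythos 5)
Your proof follows the same route as the paper's: reduce to odd $N$ via Lemma~\ref{crt} and Theorem~\ref{borel:classical}, restrict the hiding function to the stabilizer $K \cong R \rtimes R^*$ of $(1,0)^T$, encode each $g\in K$ by the second column of $g-\Id$ so that the right cosets of $K_s$ become the level sets of an HMS$(N,1,\phi(N))$ instance with shift set $H'=\{h-1:h\in R^*\}$ (where $N$ odd gives $0,1\in H'$, hence $\delta(H',N)=1$), and invoke Theorem~\ref{hms}. Two small points in your favour: your random basis-change retry is actually the more robust fix, since the paper's claim that $(1,0)^T$ or $(0,1)^T$ always works together with $S$ fails for composite odd $N$ (e.g.\ $N=15$, $S=R(3,5)^T$, where only vectors like $(1,1)^T$ succeed), and your explicit appeal to Mertens' bound $N/\phi(N)=O(\log\log N)$ makes the $(\log N)^{O(1)}$ running time precise rather than implicit.
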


\section{On the isogeny to endomorphism ring problem}
\label{sec: isogeny section} 

In this section, we study the IsERP and its connection to other algorithmic problems. Our final result provides a reduction from the IsERP to a pure quaternion problem, the PQLP (Problem~\ref{problem: powersmooth quaternion lift}), but we obtain this reduction through a quantum equivalence between the IsERP and the Group Action Evaluation Problem, that can be seen as a generalization of Problem~\ref{Prob:endev}.
Most of the work in this section is dedicated to this equivalence. 

In Section~\ref{sec: group action}, we formally introduce  the group action we consider. Then, in Section~\ref{sec: main equivalence}, we prove the result. 
Finally, in Section~\ref{sec: resolution group action problem} we give the link with the PQLP and study the hardness of this problem.

\subsection{The group action of $ \GL_2(\mathbb{Z}/N\mathbb{Z})$ on $N$-isogenies. }
\label{sec: group action}

In this section, we cover all necessary results on the group action we will consider. For that, it is important to understand how $2 \times 2$ matrices $\bmod N$ appear naturally when you consider the action of endomorphisms on the $N$-torsion. 
This comes from the isomorphism $\End(E)/N \End(E) \cong  M_2(\mathbb{Z}/N\mathbb{Z})$ which is a natural extension of the isomorphism $E[N] \cong \Z /N\Z^2 $. We elaborate on that in the next paragraph.

\paragraph*{The isomorphism.}
 Let $P,Q$ be a basis of $E[N]$. We identify any point $R = [x] P + [y] Q$ as the vector $v_R = (x,y)^T$. Then, an endomorphism $\sigma \in \End(E)$ can be seen as a matrix in $M_\sigma \in M_2 (\Z / N \Z) $ through its action on the basis $P,Q$. If we have $\sigma(P) = [a] P + [b]Q$ and $\sigma(Q) = [c]P+ [d] Q$, then we can define $M_\sigma$ as $\begin{pmatrix}
a & c\\
b & d
\end{pmatrix}$ and the representation of $\sigma(R)$ is given by $M_\sigma v_R$. In that case, it can be easily shown that $\det M_\sigma = \deg \sigma \bmod N$. 

If one wants to compute an explicit isomorphism between $\End(E)/N \End(E)$ and $M_2(\mathbb{Z}/N\mathbb{Z})$ one can use the above method of evaluating a basis of $ \End(E)$ on a basis of $E[N]$. However, when $E[N]$ is defined over a large extension field (e.g., $N$ is large random prime), then this method is not efficient. 

The issue can be circumvented by looking at the problem from a slightly different angle. Namely we have basis of $ \End(E)/N \End(E)$ and we also have a multiplication table of the basis elements. Such a representation is called a structure constant representation. Rónyai \cite{ronyai1990computing} proposed a polynomial-time algorithm for this problem when $N$ is prime. The next lemma generalizes the algorithm to arbitrary $N$ whose factorization is known. 
\begin{proposition}\label{prop:expiso}
Let $A$ be a ring isomorphic to $M_2(\mathbb{Z}/N\mathbb{Z})$ given by a structure constant representation. Suppose that factorization of $N$ is known. Then there exists a polynomial-time algorithm that computes an explicit isomorphism between $A$ and $M_2(\mathbb{Z}/N\mathbb{Z})$.
\end{proposition}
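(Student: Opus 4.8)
The plan is to reduce the general case to the prime-power case via Chinese remaindering, and then to handle $N=q^k$ by a lift-from-the-residue-field argument, using Rónyai's algorithm as a black box over $\F_q$. First I would note that a structure-constant representation of $A$ together with a known factorization $N = \prod_i q_i^{k_i}$ lets us reduce modulo each $q_i^{k_i}$ coordinatewise: the structure constants reduce, and since the CRT isomorphism $\Z/N\Z \cong \bigoplus_i \Z/q_i^{k_i}\Z$ is efficiently computable by the extended Euclidean algorithm, an explicit isomorphism for each $A/q_i^{k_i}A \cong M_2(\Z/q_i^{k_i}\Z)$ assembles into one for $A$. So assume $N = q^k$.

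Next I would pass to the residue ring $\bar A := A/qA$, which is a structure-constant algebra over $\F_q$ abstractly isomorphic to $M_2(\F_q)$ (the reduction map $M_2(\Z/q^k\Z) \to M_2(\F_q)$ is surjective with kernel the Jacobson radical $qM_2(\Z/q^k\Z)$, so this identification is forced). Apply Rónyai's algorithm \cite{ronyai1990computing} to $\bar A$ to get an explicit isomorphism $\bar A \xrightarrow{\sim} M_2(\F_q)$; concretely this produces, as $\F_q$-combinations of the given basis, a complete set of matrix units $\bar e_{11}, \bar e_{12}, \bar e_{21}, \bar e_{22}$ satisfying the relations $\bar e_{ij}\bar e_{kl} = \delta_{jk}\bar e_{il}$ and $\sum \bar e_{ii} = 1$ in $\bar A$. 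The heart of the argument is then to lift this family of idempotents/matrix units to $A$ itself. For the idempotent $\bar e_{11}$: pick any preimage $x_0 \in A$; then idempotent lifting (Hensel's lemma for idempotents over the $q$-adically complete — here $q$-nilpotent — ring $A$) produces an idempotent $e_{11} \in A$ reducing to $\bar e_{11}$, for instance by the classical iteration $x_{m+1} = 3x_m^2 - 2x_m^3$, which converges in $\lceil \log_2 k\rceil$ steps since $x_m^2 - x_m \in q^{2^m}A$. Having $e_{11}$, set $e_{22} = 1 - e_{11}$; then lift $\bar e_{12}$ and $\bar e_{21}$ by replacing any preimages $y, z$ with $e_{11} y e_{22}$ and $e_{22} z e_{11}$ (which land in the correct Peirce components and reduce correctly), and finally rescale: $e_{12} y' := e_{11} y e_{22}$ and $z' := e_{22} z e_{11}$ satisfy $y' z' = u e_{11}$ for some unit $u \in e_{11}Ae_{11}$ (a unit because it is a unit mod $q$ and $e_{11}Ae_{11} \cong \Z/q^k\Z$ is local), so replace $z'$ by $u^{-1}z'$ to get genuine matrix units $e_{12}, e_{21}$. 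The map sending the standard matrix units of $M_2(\Z/q^k\Z)$ to the $e_{ij}$ is then the desired explicit ring isomorphism, and every step — the iterations, the Peirce projections, inverting $u$ in the local ring $\Z/q^k\Z$ — is polynomial-time in $\log N$ and the input size.

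The main obstacle I anticipate is not conceptual but bookkeeping: verifying that each lifting step both preserves the reduction mod $q$ and stays within polynomial cost, in particular that the idempotent-lifting iteration has the claimed quadratic convergence (hence $O(\log k)$ iterations) and that $e_{11}Ae_{11}$ really is isomorphic to $\Z/q^k\Z$ as a ring so that the unit $u$ is invertible and its inverse is computable. One should also double-check the $q=2$ case causes no trouble — idempotent lifting works over any ring where $q$ is nilpotent, and the formula $3x^2 - 2x^3$ has no division by $2$, so there is no issue — and that Rónyai's algorithm is stated to run in polynomial time over $\F_q$ for the prime $q$ at hand (which it is, $q$ being a prime factor of $N$ and hence of polynomial size under the stated hypotheses on $N$). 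I would conclude by assembling the prime-power isomorphisms through CRT as above.
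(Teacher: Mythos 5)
Your proposal is correct and tracks the paper's high-level strategy (CRT reduction to the prime-power case, Rónyai mod $q$, Hensel lifting of idempotents), but it finishes differently. Where you lift an entire system of matrix units $\bar e_{ij}$ — first Hensel-lifting $\bar e_{11}$ via the quadratic Newton iteration $x\mapsto 3x^2-2x^3$, then constructing $e_{12},e_{21}$ through Peirce projection and a scalar correction in the local ring $e_{11}Ae_{11}\cong\Z/q^k\Z$ — the paper lifts only a single nontrivial idempotent $e$ (by a linear step-by-step correction, $e_i=e_{i-1}+fq^i$ with $f\equiv -E_{i-1}(2e_{i-1}-1)^{-1}\pmod{qA}$, which takes $k$ iterations rather than your $O(\log k)$), and then produces the isomorphism by showing the left ideal $Ae$ is a free rank-two $\Z/q^k\Z$-module and using the left-regular action of $A$ on it. Both finishes are standard and both run in polynomial time; yours is arguably cleaner in that it directly hands you an explicit system of matrix units and makes the $q=2$ case transparently unproblematic (no division by $2$ in the Newton formula), while the paper's route avoids the bookkeeping of Peirce components and the scalar rescaling at the cost of an explicit module-structure computation. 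Your convergence analysis $x_m^2-x_m\in q^{2^m}A$ checks out (one computes $y^2-y=(x^2-x)^2(4(x^2-x)-3)$ for $y=3x^2-2x^3$), as does the identification $e_{11}Ae_{11}\cong\Z/q^k\Z$ and the observation that $u=e_{12}e_{21}$ is a unit there because it is a unit modulo $q$ and the ring is local. No gaps.
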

\begin{proof}
    First we reduce the problem to the case where $N$ is prime power. Let $N=ab$ where $a$ and $b$ are coprime. Then $A/aA$ is isomorphic $M_2(\mathbb{Z}/a\mathbb{Z})$ and $A/bA$ is isomorphic to $M_2(\mathbb{Z}/b\mathbb{Z})$. Since $M_2(\mathbb{Z}/a\mathbb{Z})\times M_2(\mathbb{Z}/b\mathbb{Z})$ is isomorphic to $M_2(\mathbb{Z}/N\mathbb{Z})$, knowing an explicit isomorphism between $A/a$ and $M_2(\mathbb{Z}/a\mathbb{Z})$ and an explicit isomorphism between $A/bA$ and $M_2(\mathbb{Z}/b\mathbb{Z})$ is enough to recover the isomorphism between $A$ and $M_2(\mathbb{Z}/N\mathbb{Z})$. 
    Using this procedure iteratively (using the fact that the factorization of $N$ is known) one can reduce to the case where $N=q^k$ where $q$ is some prime number. 
    
    Now suppose that $A$ is isomorphic to $M_2(\mathbb{Z}/q^k\mathbb{Z})$. Observe that $A/qA$ is isomorphic to $M_2(\mathbb{Z}/q\mathbb{Z})$. One can compute a non-trivial idempotent in $A/qA$ using Rónyai's algorithm \cite{ronyai1990computing}, let that be $e_0$. Now one has that $e_0^2-e_0\in qA$ and $e_0$ and $e_0-1$ are not in $qA$. Our goal is to find an element $e$ which is an idempotent of $A$. We will perform an iteration which starts with $e_0$ and in the $i$th step we return an element $e_i$ for which $e_i^2-e_i\in q^{i+1}A$. Suppose we have an element $e_{i-1}$ for which $e_{i-1}^2-e_{i-1}\in q^i A$. Now we are looking for an $f\in A$ such that $(e_{i-1}+fq^i)^2-(e_{i-1}+fq^i)\in q^{i+1}A$. This is clearly equivalent to $(e_{i-1}^2-e_{i-1})+(2e_{i-1}-1)fq^i\in q^{i+1}A$. Now dividing by $q^i$ let $(e_{i-1}^2-e_{i-1})/q=E_{i-1}$. We need an $f$ such that that $E_{i-1}+(2e_{i-1}-1)f \in qA$. Observe that since $(e_{i-1}^2-e_{i-1})\in qA$ one has that $(2e_{i-1}-1)^2-1\in qA$, hence $2e_{i-1}$ reduces to an invertible matrix in $A/qA$. Hence any choice for which $f\equiv -E_{i-1}(2e_{i-1}-1)^{-1} \pmod{qA}$ is sufficient and  we can set $e_i=e_{i-1}+fq^i$. Then $e_{k-1}$ will be a non-trivial idempotent of $A$. Non-triviality follows from the fact that every $e_i$ is congruent to $e_0$ modulo $qA$. 
    
    Since $A$ is a $2\times 2$ matrix ring,  $Ae$ is isomorphic to $(\mathbb{Z}/q^k\mathbb{Z})^2$ as an $A$-module. Then the left action of $A$ on $I$ provides an explicit isomorphism between $A$ and $M_2(\mathbb{Z}/q\mathbb{Z})$. We could not find a reference for this fact, so we present a quick simple proof. Let $\begin{pmatrix}
        a&b \\
        c&(1-a)
    \end{pmatrix}$ be an idempotent matrix. We can assume that it has the above form as $e$ is not congruent to 0 or the identity matrix modulo $qA$. We also have that $a(1-a)=bc$. Now the following is a generating set (as an abelian group) of $Ae$: 
    $$ \begin{pmatrix}
        a&b \\
        0&0
    \end{pmatrix},\begin{pmatrix}
        0&0 \\
        a&b
    \end{pmatrix},\begin{pmatrix}
        c&(1-a) \\
        0&0
    \end{pmatrix},
    \begin{pmatrix}
        0&0 \\
        c&(1-a)
    \end{pmatrix}.
    $$
    One has that either $a$ or $1-a$ is invertible in $\mathbb{Z}/q^k\mathbb{Z}$, one may suppose that $a$ is invertible (the calculation is the same in the other case). Then it is clear that every element of the form 
    $$ \begin{pmatrix}
        \alpha&\alpha(b/a) \\
        \beta& \beta (b/a)
    \end{pmatrix}$$
is in the left ideal for any $\alpha,\beta\in \mathbb{Z}/q^k\mathbb{Z}$. We show that every element of $Ae$ is of this form. This follows from the fact that every element of the form     
$$ \begin{pmatrix}
        \gamma c&\gamma(1-a) \\
        \delta c& \delta (1-a)
    \end{pmatrix}$$
    can be written in this form (a linear combination of the second two basis elements) because if $\gamma c=\alpha$ and $\delta c=\beta$, then $\alpha(b/a)=\gamma(1-a)$ and $\beta(b/a)=\delta(1-a)$ (because $cb/a=(1-a)$). Finally, it is clear that the map  $ \begin{pmatrix}
        \alpha&\alpha(b/a) \\
        \beta& \beta (b/a)
    \end{pmatrix}\mapsto (\alpha,\beta)$ is an isomorphism of $A$-modules. 
    \qed 
\end{proof}

\begin{remark}
Once an idempotent $e$ is found, one can finish the proof in an alternate way as well. Namely one can show that $\text{Im} (e)=\ker(e)$ is a cyclic subgroup of $(\Z/q^k\Z)^2$ of cardinality $q^k$. Then a generator of $\ker(e)$ and $\ker(1-e)$ will be a basis in which $e$ is $\begin{pmatrix}
    1&0\\
    0&0
\end{pmatrix}$ which shows indeed that the left ideal generated by $e$ is isomorphic to $\mathbb{Z}/q^k\mathbb{Z}$.
\end{remark}

\paragraph{The group action of invertible matrices on isogenies.}
Now, let us take a cyclic subgroup $G \subset E[N]$ of order $N$ (this is a submodule of rank $1$ inside $(\Z / N \Z)^2$ with our isomorphism). If $\sigma \in  \GL_2(\mathbb{Z}/N\mathbb{Z})$, then it is clear that $\sigma (G)$ is also a cyclic subgroup of order $N$. Thus, we have a natural action of $\GL_2(\mathbb{Z}/N\mathbb{Z})$ on the cyclic subgroups of order $N$. 

This group action on subgroups of order $N$ can naturally be extended to a group action of $\GL_2(\mathbb{Z}/N\mathbb{Z})$ on the set of $N$-isogenies from $E$ through the bijection between cyclic subgroups of order $N$ and $N$-isogenies given by $G \mapsto (\phi_G : E \rightarrow E/G)$ (and whose inverse is simply $\phi \mapsto \ker \phi$).

Composing this bijection with the group action we already have, we get the following group action
\begin{equation}
    \label{eq: group action}
     M_\sigma \star \phi_G \mapsto \phi_{\sigma(G)}.
\end{equation}

This action is always well-defined. However, for computational purposes, we want ways to efficiently represent its elements and compute the action $\star$. These considerations motivate the remaining of this paper.

The problem we consider is the following: 
\begin{problem}[Group Action Evaluation Problem]\label{prob: group action eval}
      Let $E$ be a supersingular elliptic curve over $\mathbb{F}_{p^2}$ and let $\phi: E\rightarrow E_1$ be an isogeny of degree $N$ for some integer $N$.
    Given $\End(E)$, an isogeny representation for $\phi$, $M$ in $\GL_2 (\Z 
 / N \Z)$, find an isogeny representation of $M \star \phi$.      
\end{problem}

\paragraph{The stabilizer subgroups.} One last thing that will be important to apply our results to this group action is to identify the stabilizer subgroup associated to a given isogeny $\phi$. In fact, those are pretty easy to identify and are well-known objects. The answer is given by the following proposition. 

\begin{proposition}
\label{prop: stabilizer subgroup}
Let $\phi : E \rightarrow E'$ be an isogeny of degree $N$. The stabilizer subgroup associated to $\phi$ through the group action defined in Equation~(\ref{eq: group action}) is made of the matrices $M_\sigma$ such that $\sigma$ is in the Eichler order $\Z + I_\phi$ where $I_\phi$ is the ideal associated to $\phi$ under the Deuring correspondence. 
\end{proposition}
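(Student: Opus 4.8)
The plan is to pass from the stabilizer of $\phi$ to the stabilizer of the cyclic subgroup $G=\ker\phi$, and then to translate the condition $M_\sigma(G)=G$ into a membership condition on $\sigma$ via the definition of $I_\phi$. First I would note that, by the construction of the action in Equation~(\ref{eq: group action}), the bijection $G\mapsto\phi_G$ between cyclic subgroups of order $N$ and $N$-isogenies from $E$ is $\GL_2(\Z/N\Z)$-equivariant, so $M_\sigma$ stabilizes $\phi=\phi_G$ if and only if $M_\sigma(G)=G$. I would also record two elementary facts. First, $G\subseteq E[N]$: for $P\in\ker\phi$ one has $NP=\hat{\phi}(\phi(P))=0$, so $M_\sigma$ really is the matrix of $\sigma$ restricted to $E[N]$. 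Second, $N\O\subseteq I_\phi$: for $\alpha\in\O$ and $P\in\ker\phi$ the point $\alpha(P)$ lies in $E[N]$, hence $N\alpha(P)=0$ and $N\alpha\in I_\phi$; consequently the set $\{M_\sigma:\sigma\in\Z+I_\phi\}$ is well defined, i.e.\ it depends only on $\sigma\bmod N\O$.

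For the inclusion ``$\supseteq$'', let $\sigma=n+\alpha$ with $n\in\Z$, $\alpha\in I_\phi$, and suppose moreover that $M_\sigma\in\GL_2(\Z/N\Z)$. By Definition~\ref{corresponding ideal and isogeny}, $\alpha$ annihilates every point of $\ker\phi=G$, so $\sigma$ acts on $G$ as multiplication by $n$ and $M_\sigma(G)=nG\subseteq G$. Since an element of $\GL_2(\Z/N\Z)$ acts bijectively on $E[N]$, the subgroup $M_\sigma(G)$ has cardinality $N$, hence equals $G$; thus $M_\sigma\star\phi_G=\phi_{M_\sigma(G)}=\phi_G$ and $M_\sigma$ lies in the stabilizer. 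For the reverse inclusion, suppose $M_\sigma(G)=G$. Choose a generator $v$ of the cyclic group $G$; then $\sigma(v)=nv$ for some integer $n$, and since $\sigma-n$ is an endomorphism of $E$ and $G=\langle v\rangle$ we get $(\sigma-n)(P)=0$ for every $P\in G=\ker\phi$. Hence $\sigma-n\in I_\phi$ by Definition~\ref{corresponding ideal and isogeny}, so $\sigma=n+(\sigma-n)\in\Z+I_\phi$. To finish I would invoke the preliminaries to rewrite $\Z+I_\phi$ as the Eichler order $\O_L(I_\phi)\cap\O_R(I_\phi)$, which is the form stated in the proposition.

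I do not anticipate a real obstacle: the two directions are short, and the only points needing care are (i) the passage from ``$\sigma$ fixes a generator of $G$ up to a scalar'' to ``$\sigma$ acts on all of $G$ by that scalar'', which is exactly where the cyclicity of $\ker\phi$ enters, and (ii) the remark that invertibility of $M_\sigma$ promotes $M_\sigma(G)\subseteq G$ to $M_\sigma(G)=G$. Throughout one uses $\gcd(N,p)=1$, which is what makes the description of $I_\phi$ in terms of $\ker\phi$ from Definition~\ref{corresponding ideal and isogeny} valid.
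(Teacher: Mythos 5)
Your proof is correct and follows essentially the same route as the paper's: reduce to the stabilizer of $\ker\phi$, use cyclicity to conclude that a matrix in the stabilizer acts on $\ker\phi$ by a scalar $n$, and then read off $\sigma - n \in I_\phi$ from the kernel description of $I_\phi$ in Definition~\ref{corresponding ideal and isogeny}. You are somewhat more careful than the paper about two minor points — that $N\O \subseteq I_\phi$ so the map $\sigma \mapsto M_\sigma$ is well defined on $\Z+I_\phi$, and that invertibility of $M_\sigma$ promotes $M_\sigma(G) \subseteq G$ to equality — but these are refinements of the same argument, not a different one.
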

\begin{proof}
    By definition of the group action, the stabilizer subgroup is obtained with the matrices $M_\sigma$ such that $\sigma(\ker \phi) = \ker \phi$. This means that $\sigma$ acts as a scalar $\lambda_\sigma$ on $\ker \phi$. Thus, $\ker \phi \subset \ker (\sigma - \lambda_\sigma)$ and by definition of $I_\phi$, we have that $\sigma - \lambda_\sigma \in I_\phi$, hence $\sigma \in \Z + I_\phi$. Conversely, it is clear that any element in $\Z + I_\phi$ acts as a scalar on $\ker \phi$ and so is part of the stabilizer. 
    For the proof that $\Z + I_\phi$ is an Eichler order, see \cite{sqisign}. \qed 
\end{proof}

\begin{remark}
    \label{rmk: endo rings from stabilizer}
    Writing the stabilizer subgroups as Eichler orders of the form $\Z + I_\phi$ will help us prove that
    computing the stabilizer subgroup is essentially equivalent to finding the endomorphism ring of the codomain of $\phi$ (which is isomorphic to the right order of $I_\phi$). 
\end{remark}
\begin{proposition}\label{prop:uptrian}
 The stabilizer subgroups are conjugates of the subgroup of upper triangular matrices (i.e., a Borel subgroup).   
\end{proposition}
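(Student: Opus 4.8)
The plan is to forget the quaternionic description of the stabilizer and argue directly with the kernel. By Proposition~\ref{prop: stabilizer subgroup} --- or simply by unwinding the definition of the action in Equation~(\ref{eq: group action}) --- the stabilizer of $\phi$ is the set of $M \in \GL_2(\Z/N\Z)$ fixing the submodule $\ker\phi$, where $\ker\phi$ is regarded as a cyclic subgroup of order $N$ of $E[N]\cong(\Z/N\Z)^2$. So it suffices to prove that the $\GL_2(\Z/N\Z)$-stabilizer of a cyclic submodule of order $N$ of $R^2$, with $R=\Z/N\Z$, is conjugate to the group $B$ of upper triangular matrices.

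First I would record that $S:=\ker\phi$ is a \emph{free} cyclic $R$-submodule of $V=R^2$: since $S$ has order exactly $N$ it is generated by a unimodular vector $v$ (one whose entries generate the unit ideal of $R$), and the only point that needs a word is the standard fact that over $R=\Z/N\Z$ such a $v$ extends to an $R$-basis $\{v,w\}$ of $R^2$ --- which follows by CRT from the local-ring case, or alternatively from the already noted fact that $V/S$ is again free cyclic, so that $S$ is a direct summand. Let $T\in\GL_2(R)$ be the change-of-basis matrix whose columns are $v$ and $w$.

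Second, expressed in the basis $\{v,w\}$, a matrix $M'$ stabilizes $S=Rv$ if and only if $M'v=av$ for some $a\in R$, and invertibility of $M'$ forces $a\in R^*$; hence the first column of $M'$ is $(a,0)^T$, i.e. $M'$ is upper triangular with units on the diagonal, and conversely every such matrix fixes $Rv$. Therefore the stabilizer written in this basis is exactly $B$, so the stabilizer in the original (standard) basis is $TBT^{-1}$, which is the claim. I do not anticipate a genuine obstacle here: the substance is the elementary identification of a line stabilizer in $\GL_2$ with a Borel subgroup, and the only step deserving care is the completion of a unimodular vector to a basis of $(\Z/N\Z)^2$, which is classical.
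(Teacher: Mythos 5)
Your proof is correct and takes essentially the same route as the paper's elementary proof in Appendix~\ref{app:elementary}: identify a generator of $\ker\phi$ as a unimodular vector in $(\Z/N\Z)^2$, extend it to a basis, and observe that in the new basis the stabilizer is exactly the upper triangular matrices. The paper simply makes the basis-completion step explicit via Bézout (constructing the change-of-basis matrix $\bigl(\begin{smallmatrix}\alpha&-y\\\beta&x\end{smallmatrix}\bigr)$ directly), whereas you invoke the general fact that a unimodular vector over $\Z/N\Z$ extends to a basis.
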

\begin{proof}
Follows from \cite[23.1.3]{voight2018quaternion}. For an elementary proof see Appendix \ref{app:elementary}. \qed 
\end{proof}
\subsection{The main reductions.}
\label{sec: main equivalence}

In this section, we prove a quantum polynomial-time equivalence between the the Group Action Evaluation Problem and the IsERP.

\begin{theorem}
    \label{thm: from endo ring to group action}
    The Group Action Evaluation Problem reduces to the IsERP in classical polynomial-time.
\end{theorem}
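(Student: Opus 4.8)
The plan is to show that if we can solve the IsERP, i.e.\ compute $\End(E_1)$ from $\End(E)$ and a weak isogeny representation of $\phi : E \to E_1$, then we can compute an isogeny representation of $M \star \phi$ for any $M \in \GL_2(\Z/N\Z)$. The first step is to recall that, by Remark~\ref{rmk: endo rings from stabilizer} (made precise via Proposition~\ref{prop: stabilizer subgroup}), knowing $\End(E_1)$ is essentially equivalent to knowing the ideal $I_\phi$ associated to $\phi$ under the Deuring correspondence, and hence the Eichler order $\Z + I_\phi$, which is exactly the stabilizer of $\phi$ for the group action of Equation~(\ref{eq: group action}). Concretely, from $\End(E)$ and $\End(E_1)$ together with the evaluation algorithm for $\phi$ one recovers $I_\phi$ as a left $\O_0$-ideal whose right order is $\End(E_1)$; this uses standard algorithms for the Deuring correspondence, in particular the connecting ideal construction.

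Next I would pass from the ideal $I_\phi$ to a concrete description of $\ker\phi \subseteq E[N]$ in terms of a basis $P,Q$ of $E[N]$: using the explicit isomorphism $\End(E)/N\End(E) \cong M_2(\Z/N\Z)$ from Proposition~\ref{prop:expiso} (applicable since $N$ has known factorization, being polynomial-sized), we identify $I_\phi / N I_\phi$ with a right ideal of $M_2(\Z/N\Z)$, and the annihilator structure pins down the cyclic submodule $S \subseteq (\Z/N\Z)^2$ corresponding to $\ker\phi$. Then $M \star \phi = \phi_{M(S)}$, so we compute the cyclic submodule $M(S)$ by applying the matrix $M$, and translate it back to an ideal $I_{M\star\phi}$ (a left $\O_0$-ideal of norm $N$), either directly at the level of ideals or by first forming the subgroup $\sigma(\ker\phi)$ where $\sigma \in \End(E)$ reduces to $M$ modulo $N$. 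From $I_{M\star\phi}$ we obtain $\End(E/M(S))$ as its right order, and the ideal itself serves as an isogeny representation of $M\star\phi$: the evaluation algorithm is the standard one for ideal-represented isogenies, which matches Definition~\ref{def: isogeny rep}.

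The main subtlety — though not a deep obstacle — is ensuring that all the Deuring-correspondence steps are genuinely polynomial-time in $\log p$ and $\log N$ when $N$ may be a large prime and the $N$-torsion is defined only over a large field extension. The point is that we never need to work with points of $E[N]$ directly: the whole computation is carried out at the level of quaternion orders and ideals (structure-constant representations, the isomorphism of Proposition~\ref{prop:expiso}, ideal arithmetic, left/right order computations), for which polynomial-time algorithms are available, and the evaluation oracle of the given weak isogeny representation of $\phi$ is only invoked as a black box where needed to align the quaternionic description of $\ker\phi$ with the given representation. Since the reduction makes no quantum queries, it runs in classical polynomial time, which establishes the theorem.
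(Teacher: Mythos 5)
Your overall plan mirrors the paper's: first recover $I_\phi$, then apply $M$ to the corresponding cyclic submodule $S \subset (\Z/N\Z)^2$ and pass back to an ideal, which is itself a valid weak isogeny representation. The second half is fine and is equivalent to the paper's closed-form expression $\sigma(I_\phi\cap\O\sigma)\sigma^{-1}+N\O$ for a lift $\sigma$ of $M$ obtained via Proposition~\ref{prop:expiso}; both are elementary once the explicit isomorphism $\End(E)/N\End(E)\cong M_2(\Z/N\Z)$ is in hand.

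The gap is in the first step. You propose to recover $I_\phi$ by the ``connecting ideal construction'' between $\O_0\cong\End(E)$ and $\O_1\cong\End(E_1)$, with the evaluation oracle of $\phi$ invoked only ``as a black box where needed to align the quaternionic description of $\ker\phi$ with the given representation.'' But the IsERP oracle returns $\End(E_1)$ only up to abstract isomorphism, and a connecting ideal between $\O_0$ and $\O_1$ is not $I_\phi$: it is some left $\O_0$-ideal with right order conjugate to $\O_1$, usually of a different norm, and even after normalizing the norm to $N$ (e.g.\ via KLPT) there can be several distinct left $\O_0$-ideals of norm $N$ whose right order is isomorphic to $\O_1$ (corresponding to distinct degree-$N$ isogenies from $E$ to $E_1$ or to its Galois conjugate), of which exactly one is $I_\phi$. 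You give no polynomial-time procedure to single it out, and the naive verification of a candidate ideal against the oracle is not immediate since $\ker\phi\subset E[N]$ lives in inaccessible torsion when $N$ is a large prime. This is precisely the point the paper resolves concretely: using $\End(E)$, the recovered $\End(E_1)$, and the evaluation oracle for $\phi$, one constructs a suborder representation for $\phi$ (endomorphisms of $\Z+N\End(E)\hookrightarrow\End(E_1)$ of powersmooth norm, evaluable via the oracle and with kernels in accessible smooth torsion), and then applies the SOIP$\leftrightarrow$SOERP equivalence \cite[Proposition 13]{leroux2021new} to extract $I_\phi$. Without that mechanism, or an equivalent concrete disambiguation step, your reduction does not go through.
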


\begin{proof}
     Assume we have an efficient algorithm to solve the IsERP. 
     Let us take an instance of the Group Action Evaluation Problem. So we have $N,E,\End(E)$, a representation for $\phi$ and a matrix $M$, and we want to compute a representation for $M*\phi$.

    The first step of the reduction is to compute the ideal $I_\phi$ associated to $\phi$. There are several ways to do that, but to keep this proof short, we will use some of the results proven in \cite{leroux2021new}. 
    Thus, our first step is to build a suborder representation for the isogeny $\phi$ as in \cite{leroux2021new}. The suborder representation is made of endomorphisms of $\Z + N \End(E) \hookrightarrow \End(E')$ of powersmooth norm. Since we know $\End(E)$ and $\End(E')$, the algorithms of the Deuring correspondence can be used to compute their kernels in polynomial time (their norm being powersmooth implies that their kernels are defined over a small extension). Then, we can compute the suborder representation using Vélu's formulas.      
     Once we have the suborder representation, we can apply the equivalence between the SOIP and the SOERP \cite[Proposition 13]{leroux2021new} to find the ideal $I_\phi$. Once $I_\phi$ has been computed, we need to compute the ideal $I_{M \star  \phi}$. 
     For that we are going to use a $\sigma$ such that $M_\sigma = M$. 
     We can build such a $\sigma$ in polynomial time from $\End(E)$ using Proposition~\ref{prop:expiso}. 
     
     Once a good $\sigma$ is known, we get the ideal $I_{M \star  \phi}$
     as $ \sigma (I_\phi \cap \O \sigma) \sigma^{-1} + N \O$ (where we take $\O \cong \End(E)$). Since the ideal $I_{M_\sigma \star  \phi}$ is a valid isogeny representation, this proves the result. \qed 
\end{proof}

\begin{theorem}
    \label{thm: quantum reduction}
    The IsERP reduces to the Group Action Evaluation Problem in quantum polynomial time.
\end{theorem}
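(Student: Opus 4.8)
The plan is to combine the Borel HSP machinery of Section~\ref{sec: borel hsp} with the group action of Section~\ref{sec: group action}, using the Group Action Evaluation oracle to build the hiding function. Given an IsERP instance $(E,\End(E),s_\phi)$ with $\deg\phi = N$, the goal (by Remark~\ref{rmk: endo rings from stabilizer} and Proposition~\ref{prop: stabilizer subgroup}) is to compute the stabilizer subgroup $H$ of $\phi$ under the action~(\ref{eq: group action}), since recovering $H$ is essentially recovering the Eichler order $\Z + I_\phi$, and from $I_\phi$ one reads off $\End(E_1)$ as its right order via the Deuring correspondence (Table~\ref{tab: deuring correspondence}). By Proposition~\ref{prop:uptrian}, $H$ is a conjugate of the upper triangular matrices in $\GL_2(\Z/N\Z)$, so this is exactly a Borel HSP instance — we just need a hiding function.

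First I would fix, via Proposition~\ref{prop:expiso}, an explicit isomorphism $\End(E)/N\End(E) \cong M_2(\Z/N\Z)$, so that matrices $M \in \GL_2(\Z/N\Z)$ correspond to (classes of) endomorphisms $\sigma$ and the action $M \star \phi$ of~(\ref{eq: group action}) makes sense. The hiding function $f$ on $\GL_2(\Z/N\Z)$ is then defined by: on input $M$, call the Group Action Evaluation oracle to obtain an isogeny representation of $M \star \phi$, evaluate this isogeny on enough torsion points to recover the $j$-invariant of its codomain $E/\sigma(G)$ (where $G = \ker\phi$), and output that $j$-invariant together with enough auxiliary data to make $f$ truly separate cosets. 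The point is that $M \star \phi$ and $M' \star \phi$ have the same kernel iff $M^{-1}M'$ stabilizes $G$, i.e. iff $M^{-1}M' \in H$; so $f$ is constant exactly on left cosets of $H$, provided $f$ actually distinguishes isogenies with distinct kernels — plain $j$-invariants do not suffice since distinct cyclic subgroups can give isomorphic codomains, so $f$ should encode the kernel itself (e.g. output a canonical generator of the kernel of $M \star \phi$, recoverable from the isogeny representation by evaluating on a torsion basis), or output the ideal class. With a genuine hiding function in hand, apply Theorem~\ref{borel:quantum} (after splitting off the $2$-part classically via Theorem~\ref{borel:classical} and Lemma~\ref{crt}) to recover the free cyclic submodule $S$ fixed by $H$, hence $H$ itself, hence $\Z + I_\phi$, hence $\End(E_1)$.

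The main obstacle I expect is verifying that the hiding function is both well-defined and efficiently computable in the weak-representation model. Concretely: (i) the Group Action Evaluation oracle returns \emph{some} isogeny representation of $M \star \phi$, and I must extract from it a \emph{canonical} invariant of the kernel that is independent of the particular representation returned — evaluating the isogeny on a fixed torsion basis of $E[N]$ recovers $\ker(M\star\phi)$ up to scalars, and one can normalize (e.g. pick the generator whose first nonzero coordinate is $1$) to get a canonical label; (ii) since our isogeny representations are only \emph{weak} (evaluation up to a common scalar, Definition~\ref{def: isogeny rep}), I need that the scalar ambiguity does not obstruct recovering the kernel as a subgroup — which it does not, since a subgroup is determined by any nonzero scalar multiple of a generator; (iii) the torsion points needed may live in a large field extension, so care is required that evaluation stays polynomial-time, but the degree bound on $N$ and the polylog evaluation guarantee in Definition~\ref{def: isogeny rep} handle this. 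Once these points are checked, the reduction is a black-box call to the Borel HSP solver, and the quantum polynomial time bound follows from Theorem~\ref{borel:quantum} together with the classical reductions of Lemma~\ref{crt} and Theorem~\ref{borel:classical} for the even part.
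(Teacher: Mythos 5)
Your plan follows the same route as the paper: exhibit the stabilizer of $\phi$ as a Borel subgroup (Propositions~\ref{prop: stabilizer subgroup} and~\ref{prop:uptrian}), build a hiding function from the Group Action Evaluation oracle, solve the Borel HSP quantumly (Theorem~\ref{borel:quantum}, with the classical reduction Lemma~\ref{crt} for the even part), and then read off $\End(E_1)$ from the recovered Eichler order $\Z + I_\phi$ via Proposition~\ref{prop:expiso}. The paper's proof is extremely terse and does not actually spell out how to build the hiding function, so your extra detail there is genuinely useful, and you are right that the bare codomain $j$-invariant is not, in general, a valid hiding function (distinct cyclic $N$-subgroups can have isomorphic quotients whenever $E$ has a non-scalar endomorphism of degree $N^2$).

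The place where your resolution does not hold up is item~(iii) combined with your proposed fix for item~(i). You suggest recovering a canonical generator of $\ker(M\star\phi)$ by \emph{evaluating on a torsion basis of $E[N]$}, and you claim the polylog guarantee of Definition~\ref{def: isogeny rep} makes this efficient. It does not. When $N$ is a large prime (the pSIDH regime, and the whole point of the theorem), $E[N]$ lives over an extension of degree $\Omega(N)$, so the bitsize $|P|$ of a basis point is polynomial in $N$, i.e.\ polynomial in $p$. The bound $O(\mathrm{polylog}(d+N+|P|))$ in Definition~\ref{def: isogeny rep} is a per-call bound on evaluating a point \emph{that is already handed to you}; it says nothing about the cost of generating a basis of $E[N]$, doing linear algebra over $\mathbb{F}_{p^{\Theta(N)}}$, or even writing a basis point down, all of which are $\mathrm{poly}(p)$ rather than $\mathrm{polylog}(p)$. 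So the hiding function you describe is not polynomial-time. The canonical invariant should instead be built from evaluations on \emph{powersmooth} torsion (coprime to $N$), which is exactly the mechanism the paper uses in the companion reduction of Theorem~\ref{thm: from endo ring to group action} via the suborder representation: evaluate $M\star\phi$ on a powersmooth torsion group $E[T]$ of size sufficient to pin down a degree-$N$ isogeny (SIDH-type torsion-image arguments), together with the codomain $j$-invariant, and normalize away the scalar from the weak representation. That gives a coset-injective hiding function that avoids $E[N]$ entirely.
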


\begin{proof}
    Assume we can solve the Group Action Evaluation Problem. 

    Let us take an input of the IsERP, we have a curve $E$, its endomorphism ring $\End(E)$, an integer $N$ and the isogeny representation associated to an isogeny $\phi$ of degree $N$. 

    The algorithm to solve the Group Action Evaluation Problem allows us to compute efficiently the group action introduced in Section~\ref{sec: group action}. 
    Using Proposition \ref{prop:uptrian} and Theorem \ref{borel:quantum} one can compute the stabilizer subgroup associated to $\phi$. 
    As the stabilizer subgroups of $\phi$ give us matrices corresponding to some $\sigma$ in the Eichler order $\Z + I_\phi$, we can compute the embedding of this order in $\O \cong \End(E)$ in polynomial time using the algorithm of Proposition~\ref{prop:expiso}. Then, we can extract the ideal $I_\phi$ and compute $O_R(I_\phi)$ which is isomorphic to $\End(E')$ and this gives the result.

\qed

   \end{proof}

    When the degree $N$ is smooth, we can modify the proof of Theorem~\ref{thm: quantum reduction} to get a classical reduction by using Theorem \ref{borel:classical} instead of Theorem \ref{borel:quantum}.  

\begin{theorem}\label{rmk: smooth degree}
    Suppose that degree of the secret isogeny is smooth. Then IsERP reduces to the Group Action Evaluation Problem in classical polynomial time.
\end{theorem}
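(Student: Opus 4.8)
The plan is to run the proof of Theorem~\ref{thm: quantum reduction} essentially verbatim, with the single change that the hidden Borel subgroup is extracted by the \emph{classical} algorithm of Theorem~\ref{borel:classical} rather than the quantum one of Theorem~\ref{borel:quantum}. Concretely, starting from an IsERP instance $(E,\End(E),N,s_\phi)$ with $N$ smooth, I would first note that the factorization of $N$ is classically computable (all prime factors are small), so Proposition~\ref{prop:expiso} gives, in classical polynomial time, an explicit isomorphism $\End(E)/N\End(E)\cong M_2(\Z/N\Z)$; combined with the oracle for the Group Action Evaluation Problem this realizes the action of $\GL_2(\Z/N\Z)$ on $N$-isogenies from $E$ of Equation~(\ref{eq: group action}) and hence produces a hiding function for the Borel HSP in $\GL_2(\Z/N\Z)$ whose hidden subgroup is the stabilizer $H$ of $S=\ker\phi$. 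By Proposition~\ref{prop:uptrian}, $H$ is a conjugate of the upper triangular subgroup, so this is a legitimate instance of the Borel HSP. Nothing here uses quantum computation; it is exactly the classical part of the proof of Theorem~\ref{thm: quantum reduction}.

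Next, I would solve this instance with Theorem~\ref{borel:classical}, whose running time is $\poly(B\log N)$ with $B$ the largest prime factor of $N$. Since $N$ is smooth, $B$ is polynomially bounded, so this step is classical polynomial time and it returns the free cyclic submodule $S$, equivalently the Borel subgroup $H$. By Proposition~\ref{prop: stabilizer subgroup}, $H$ consists of the matrices $M_\sigma$ with $\sigma\in\Z+I_\phi$, so applying Proposition~\ref{prop:expiso} once more I would lift generators of $H$ to an explicit copy of the Eichler order $\Z+I_\phi$ inside $\O\cong\End(E)$, extract the left $\O$-ideal $I_\phi$, and output $\O_R(I_\phi)\cong\End(E_1)$. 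All of these post-processing steps are classical and polynomial time, as in Theorem~\ref{thm: quantum reduction}.

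The only place where the smoothness hypothesis is actually needed --- and hence the only real obstacle --- is the cost of Theorem~\ref{borel:classical}: that algorithm is polynomial only in $B\log N$, so without control on $B$ it would be useless, and smoothness of $N$ is precisely what bounds $B$ (it also guarantees the factorization of $N$ is available for Proposition~\ref{prop:expiso}). Everything else in the reduction was already classical in the proof of Theorem~\ref{thm: quantum reduction}, so replacing the quantum HSP solver by the classical one and invoking smoothness of $N$ is all that is required.
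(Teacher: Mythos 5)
Your proposal is correct and matches the paper's approach exactly: the paper's entire justification for this theorem is the one-line remark preceding it, namely to rerun the proof of Theorem~\ref{thm: quantum reduction} with Theorem~\ref{borel:classical} substituted for Theorem~\ref{borel:quantum}, noting that smoothness bounds the parameter $B$ and makes the factorization of $N$ (needed for Proposition~\ref{prop:expiso}) classically available. You have simply spelled this out more explicitly than the paper does, but the argument and all the ingredients are the same.
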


\subsection{Reduction of the Group Action Evaluation Problem to the PQLP.}
\label{sec: resolution group action problem}


In this section, we reduce the Group Action Evaluation Problem to another problem that we call the Powersmooth Quaternion Lift Problem (PQLP).
%
%
The PQLP can be stated as follows: 

\begin{problem}
    \label{problem: powersmooth quaternion lift}
    Let $\O$ be a maximal order in $\QA$. 
    Given an integer $N$ and an element $\sigma_0 \in \O$ such that $(n(\sigma_0),N)=1$, find $\sigma = \lambda\sigma_0 \bmod N \O$ of powersmooth norm with some $\lambda$ coprime to $N$. 
\end{problem}

We use $\textsf{PQLP}_{\mathcal{O}}(\sigma_0)$ to denote the set of $\sigma \in \mathcal{O}$ that satisfy the conditions in Problem~\ref{problem: powersmooth quaternion lift} with respect to $\sigma_0\in \mathcal{O}$. The high level idea of the reduction from the PQLP to the Group Action Evaluation Problem is close to the approach introduced in \cite{kutas2021one}. Given a matrix $M$, the goal is to find a good representative of the class of $M$, i.e. a $\sigma \in \End(E)$ of powersmooth norm, such that $M = M_\sigma$. Then, we can use Vélu's formulae to solve the Group Action Evaluation Problem.

\begin{proposition}
    \label{prop: lifting to group action}
    The Group Action Evaluation Problem reduces to PQLP in classical polynomial time. 
\end{proposition}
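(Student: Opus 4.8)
The plan is to convert the matrix $M$ into a \emph{powersmooth} representative endomorphism using the PQLP oracle, and then to render the resulting action on $\phi$ explicit with Vélu's formulas. In other words, I want an endomorphism $\sigma\in\End(E)$ of powersmooth norm with $M_\sigma\equiv M\bmod N$ up to a scalar, since then $M\star\phi=\phi_{\sigma(\ker\phi)}$ and the powersmoothness is exactly what makes $\sigma$ (and everything built from it) computable.

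First I would use Proposition~\ref{prop:expiso} — which gives an explicit isomorphism $\End(E)/N\End(E)\cong M_2(\Z/N\Z)$, just as in the proof of Theorem~\ref{thm: from endo ring to group action} — to produce some $\sigma_0\in\End(E)\cong\O$ with $M_{\sigma_0}\equiv M\bmod N$. Since $M\in\GL_2(\Z/N\Z)$, we have $n(\sigma_0)\equiv\deg\sigma_0\equiv\det M\bmod N$, a unit modulo $N$; hence $(n(\sigma_0),N)=1$ and $\sigma_0$ is a valid PQLP instance. Calling the PQLP oracle on $(\O,N,\sigma_0)$ returns $\sigma\in\O$ of powersmooth norm $n:=n(\sigma)$ with $\sigma\equiv\lambda\sigma_0\bmod N\O$ for some $\lambda$ coprime to $N$, so $M_\sigma\equiv\lambda M\bmod N$. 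As $\lambda$ is a unit modulo $N=\#\ker\phi$, it acts as an automorphism of $\ker\phi$, so $\sigma(\ker\phi)=M(\ker\phi)$ and therefore $\phi_{\sigma(\ker\phi)}=M\star\phi$. If $p\mid n$ — which for large $p$ and powersmooth $n$ can only happen to the first power — I would split off the corresponding inseparable $p$-isogeny, evaluated trivially; so from now on assume $\sigma$ separable, whence $\#\ker\sigma=n$ and both $\ker\sigma$ and its image are defined over an extension of $\F_{p^2}$ of degree $\poly$ in the powersmoothness bound.

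It then remains to output a weak isogeny representation of $\psi:=\phi_{\sigma(\ker\phi)}\colon E\to E/\sigma(\ker\phi)$. Writing $G=\ker\phi$ and using $\gcd(n,N)=1$, one checks that $\psi\circ\sigma$ and $\beta\circ\phi$ share the domain $E$ and the kernel $G\oplus\ker\sigma$, where $\beta$ is the isogeny out of $E_1$ with kernel $\phi(\ker\sigma)$; hence $\psi\circ\sigma=\beta\circ\phi$ up to an isomorphism of the codomain that we absorb into $\beta$. Here $\ker\beta=\phi(\ker\sigma)$ is computable in polynomial time: $\ker\sigma$ lies over a small extension, and applying the evaluation algorithm of the given representation of $\phi$ to generators of $\ker\sigma$ returns them up to a common scalar, which is harmless since only the subgroup they span matters; thus $\beta,\hat\beta,\sigma,\hat\sigma$ all admit Vélu chains of length $\poly(\log n)$. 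The weak representation of $\psi$ then consists of the data $(s_\phi,\sigma,n,\ker\beta)$ together with the evaluation rule: on input $Q\in E$ of order $d$ coprime to $n$, set $R:=[\,n^{-1}\!\bmod d\,]\,\hat\sigma(Q)$, so that $\sigma(R)=Q$ since $\sigma\hat\sigma=[n]$; feed $R$ to the evaluation algorithm of $s_\phi$ to obtain $\mu\,\phi(R)$ for an unknown scalar $\mu$; and apply $\beta$ to get $\mu\,\beta(\phi(R))=\mu\,\psi(\sigma(R))=\mu\,\psi(Q)$. (Points whose order shares a factor with the powersmooth $n$ are handled on the prime-to-$n$ part, or by absorbing the extra scalar $n$, which Definition~\ref{def: isogeny rep} permits.) All steps run in polynomial time, giving a valid weak isogeny representation of $M\star\phi$.

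The step I expect to be the real obstacle is the last one — passing from the abstract quaternion $\sigma$ to an honestly evaluable description of $\psi=M\star\phi$ — because one cannot apply Vélu to $\psi$ directly, the kernel $\sigma(\ker\phi)$ living over a field of exponential degree. The factorisation $\psi\circ\sigma=\beta\circ\phi$ combined with pulling points back along $\sigma$ via $\hat\sigma$ is exactly what sidesteps this, at the price of the scalar bookkeeping and of isolating a possible inseparable factor of $p$. By contrast, the PQLP call and the verification of its coprimality hypothesis are essentially immediate from Proposition~\ref{prop:expiso}, and the rest is standard Deuring/Vélu machinery.
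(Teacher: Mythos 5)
Your proposal is correct and takes essentially the same route as the paper: lift the matrix to a powersmooth endomorphism $\sigma$ via a single PQLP call on the $\sigma_0$ produced by Proposition~\ref{prop:expiso}, and then evaluate $M\star\phi$ through the commutative square $\psi\circ\sigma=\sigma'\circ\phi$ (your $\beta=\sigma'$, with $\ker\sigma'=\phi(\ker\sigma)$ computable because $\ker\sigma$ is powersmooth and $\phi$ is evaluable). You are a bit more careful than the paper about the scalar $\lambda$ coming out of PQLP, about the possible inseparable factor $p\mid n(\sigma)$, and about points whose order is not coprime to $n(\sigma)$; the paper glosses over these, and its closing formula for $M\star\phi$ has the dual in the wrong place (it should read $\sigma'\circ\phi\circ\hat{\sigma}/\deg\sigma$, which is exactly what your pull-back rule $R=[n^{-1}\bmod d]\,\hat{\sigma}(Q)$ followed by $\beta\circ\phi$ computes). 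Substantively, though, the two arguments coincide.
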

\begin{proof}
Let us take an instance of our problem. We have $N, E,  \End(E)$, an isogeny representation for $\phi: E \rightarrow E'$ of degree $N$ and a matrix $M$. 

We need to show that if we know a $\sigma \in \End(E)$ (represented as a quaternion element in a maximal order $\O \cong \End(E)$) of powersmooth norm such that $M_\sigma = M$, then we can compute a representation for $M \star \phi$ in polynomial time. 
For that, we will use the following commutative isogeny diagram
\[
\xymatrix{
E' \ar[r]^{\sigma'} & E'/\sigma(\ker \phi) \\
E\ar[u]^{\phi} \ar[r]^{\sigma} & E \ar[u]_{M \star \phi}
}
\]
where $\sigma'$ has the same degree as $\sigma$ and is defined by $\ker \sigma' = \phi(\ker \sigma)$. 
Since the isogeny $\sigma'$ has powersmooth degree, it can be computed in polynomial time once $\ker \sigma'$ has been computed. Since, we can evaluate $\phi$, it suffices to compute $\ker \sigma$ and this can be done in polynomial-time since we known $\End(E)$. 

Since the diagram is commutative, we have that $\sigma' \circ \phi = M\star \phi \circ \sigma$ and this gives us the way to evaluate efficiently $M \star \phi$ on almost all torsion (as soon as the order is coprime to $\deg \sigma$) as $M \star \phi =  \hat{\sigma'} \circ \phi \circ \sigma/ \deg \sigma$. This is sufficient to build a suborder representation of $M \star \phi$ (see the algorithm outlined in the proof of Theorem~\ref{thm: quantum reduction}).   

This proves that the main computational task is to find this $\sigma$ of powersmooth norm.  
Thus, it suffices to apply an algorithm to solve the PQLP on input $N$, $\End(E)$ and a $\sigma_0$ such that $M_{\sigma_0} = M$ (that we can find using Proposition~\ref{prop:expiso}).  

\qed
\end{proof}


\section{Resolution of the PQLP}\label{sec:respqlp}




In this section, we solve the PQLP (Problem~\ref{problem: powersmooth quaternion lift}) with conditions imposed on the divisors of $N$, as detailed in the following theorem.

\begin{theorem}\label{thm:pqlpalg}
Let $N = \prod \ell_i^{e_i} \neq p$ be an odd integer that is of size polynomial in $p$ and has $O(\log(\log p))$ divisors, then there exists a randomized classical polynomial time algorithm that solves the PQLP.
\end{theorem}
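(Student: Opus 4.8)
\textbf{Proof strategy for Theorem~\ref{thm:pqlpalg}.}

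The plan is to mimic the KLPT-style decomposition of a quaternion element, but to keep careful track of which factors can be lifted to powersmooth norm and how the congruence condition modulo $N$ propagates through the decomposition. First I would fix the special maximal order $\O_0$ (or work with an explicit isomorphism to it via Proposition~\ref{prop:expiso}), containing a suborder $\Z\langle j, ij\rangle$ on which norm forms behave like the standard quaternary form attached to $p$ and $q$; elements of the $\Z$-span of $j$ and $ij$ are the ``special'' elements $\alpha_i$ that a modified Cornacchia/KLPT routine can lift to powersmooth-norm representatives of a prescribed residue class modulo $N$. The first technical step is therefore a lemma: given a target residue $\beta_0 \bmod N$ lying in this special submodule, one can find $\beta = \lambda \beta_0 \bmod N\O_0$ of powersmooth norm, in randomized polynomial time, provided $N$ is odd with $O(\log\log p)$ prime factors. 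This is essentially the ``lift a suborder element'' subroutine of KLPT (\cite{kohel2014quaternion}), where the $O(\log\log p)$-factor hypothesis is exactly what lets one solve the relevant norm equation $f(x,y) \equiv c \pmod N$ with a powersmooth value by combining solutions modulo each $\ell_i^{e_i}$ and enumerating over a polynomial number of small prime factors of the powersmooth target.

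Second, I would produce the structural decomposition $\sigma_0 \equiv \alpha_1 \gamma \alpha_2 \gamma \alpha_3 \pmod{N\O}$, with each $\alpha_i$ in the special submodule and $\gamma$ a fixed element of $\O$ of powersmooth norm whose role is to ``rotate'' a general $2\times 2$ residue matrix into a product of matrices supported on the special submodule. The cleanest way to see that this is possible is to pass through the matrix picture: modulo $N$, $\O/N\O \cong \Mat_2(\Z/N\Z)$ by Proposition~\ref{prop:expiso}, the special submodule maps to some fixed two-dimensional family of matrices, and one checks by a direct (generic position) argument — analogous to the decompositions in \cite{petit2008full} — that a generic invertible matrix factors as a product of five matrices alternating between this family and a fixed conjugator, where the conjugator is realized by $\gamma$. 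Finding $\gamma$ of powersmooth norm is again a modified KLPT subroutine; the number of ``bad'' residue classes (where the generic factorization degenerates) is controlled, and since $(n(\sigma_0),N)=1$ we may assume, after multiplying $\sigma_0$ by a unit, that we are in generic position modulo each $\ell_i^{e_i}$, then reassemble by CRT.

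Third, having the decomposition in hand, I would lift each $\alpha_i$ to a powersmooth-norm element $\widetilde\alpha_i \equiv \mu_i \alpha_i \pmod{N\O}$ using the first lemma, set $\sigma = \widetilde\alpha_1 \gamma \widetilde\alpha_2 \gamma \widetilde\alpha_3$, and observe that $n(\sigma) = \prod n(\widetilde\alpha_i)\cdot n(\gamma)^2$ is a product of powersmooth numbers, hence powersmooth, while $\sigma \equiv (\prod \mu_i)\,\sigma_0 \pmod{N\O}$ with $\prod\mu_i$ coprime to $N$; this is exactly an element of $\mathsf{PQLP}_\O(\sigma_0)$. The main obstacle, and the place where the $O(\log\log p)$ hypothesis is genuinely used, is the norm-equation solving inside the lifting subroutine: we need the powersmooth target $M$ to simultaneously (i) be representable by the relevant binary/quaternary form, (ii) hit the prescribed class modulo every $\ell_i^{e_i}$, and (iii) stay polynomially bounded; guaranteeing a solution requires the number of congruence conditions — one per prime factor of $N$ — to be small enough that a randomized search over powersmooth $M$ succeeds with non-negligible probability, which is precisely why $N$ is allowed only $O(\log\log p)$ prime factors (so that the product of the ``density loss'' factors stays $\poly(\log p)$). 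The oddness of $N$ is used, as in Section~\ref{sec: borel hsp}, to avoid the $2$-adic obstructions in representing residues by the norm form. I would finish by assembling the running-time bound: polynomially many calls to Cornacchia-type routines and Proposition~\ref{prop:expiso}, each polynomial in $\log p$, giving an overall randomized classical polynomial-time algorithm.
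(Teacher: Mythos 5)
Your high-level skeleton matches the paper: pass to the special order $\O_0$ with suborder $R+Rj$ ($R=\Z[i]$), decompose $\sigma_0 \equiv \alpha_1\gamma\alpha_2\gamma\alpha_3 \pmod{N\O_0}$ with $\alpha_i \in Rj$ and $\gamma$ of powersmooth norm, lift each $\alpha_i$ to a powersmooth-norm representative via a $\mathsf{StrongApproximation}$-type routine, multiply, and transfer back to a general $\O$ by conjugation with a connecting-ideal element. That is precisely the paper's Algorithm~\ref{alg:pqlp}. But there are two genuine gaps in how you would actually carry this out.

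First, you misplace where the $O(\log\log p)$ hypothesis is used. You attribute it to the lifting subroutine ("the norm-equation solving inside the lifting subroutine"), but in the paper the powersmooth variant of $\mathsf{StrongApproximation}$ (Algorithm~\ref{alg:strongapp_powersmooth}, Lemma~\ref{lem:strongapp_powersmooth}) works for \emph{any} odd $N$ with no bound on the number of prime factors; the set $S$ of primes at which $n(\mu_0)$ is a nonresidue is handled by solving a random $\#S\times\#S$ linear system over $\F_2$, which succeeds with constant probability regardless of $\#S$. The real bottleneck is elsewhere: whether the quaternion decomposition $\sigma_0 \equiv \alpha_1\gamma\alpha_2\gamma\alpha_3$ is solvable at all depends on a quadratic residuosity condition on a discriminant built from $A,B$ (the $R$-coordinates of $\sigma_0$) and $C,D$ (the $R$-coordinates of $\gamma$) — concretely, $4q\bigl(4p^2 n(ABCD)-(n(AC)-pn(AD))^2\bigr)$ must be a square mod $N$. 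A random $\gamma$ satisfies this with probability roughly $2^{-k}$ where $k$ is the number of prime factors of $N$; the hypothesis $k=O(\log\log p)$ is exactly what makes this $1/\poly(\log p)$ so that randomized resampling of $\gamma$ (the modified $\mathsf{RepresentInteger}'$) terminates in polynomial time.

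Second, your justification of the decomposition — a ``generic position'' argument in the matrix picture, where $\gamma$ is a ``fixed'' powersmooth element, and you claim you can multiply $\sigma_0$ by a unit to reach generic position and reassemble by CRT — does not survive scrutiny. Multiplying $\sigma_0$ by a unit does not change the square class of the discriminant, so it cannot repair a failed instance; and $\gamma$ cannot be fixed independently of $\sigma_0$, since the discriminant condition ties $C,D$ to $A,B$. The paper's actual argument writes the decomposition out in $R$-coordinates, reduces it (via a Hilbert~90/norm-equation step, Lemma~\ref{lem:samenorm}) to finding $x_3\in R$ making two explicit quadratic expressions have equal norm mod $N$, and then shows this amounts to the single homogeneous binary quadratic $f(s,t)\equiv 0\bmod N$ whose solvability is governed precisely by the discriminant above. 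Your ``generic position'' phrasing glosses over this concrete obstruction, which is where both the oddness of $N$ and the bound on the number of prime factors are really earned. To complete your proposal you would need to identify that quadratic, state the residuosity condition, and fold the resampling-of-$\gamma$ step explicitly into the algorithm.
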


This theorem follows from the correctness of Algorithm~\ref{alg:pqlp} which is introduced and discussed in Section~\ref{sec:pqlpmainalg}. The successive reductions from the IsERP to the Group Action Evaluation Problem, and subsequently to the PQLP, demonstrate the existence of a polynomial time algorithm that solves the IsERP, given $N$ satisfies the conditions in Theorem~\ref{thm:pqlpalg}. As a direct consequence, this breaks pSIDH quantumly since $N$ is a large prime in pSIDH. 
%
As mentioned previously, the IsERP is easy when $N$ is powersmooth. 
We briefly discuss some approaches to solve the general case in Appendix~\ref{app:hybridPQLP}.


In Section~\ref{sec:algorithmic building blocks}, we give a summary of useful known algorithms and provide variants for $\textsf{RepresentInteger}$, $\textsf{StrongApproximation}$ and $\textsf{KLPT}$ to better accommodate our specific application. Following this, we introduce our primary strategy for resolving the PQLP in Section~\ref{sec:pqlpmainalg}. In Section~\ref{sec:quaternion decomposition}, we deal with a critical technical point which we introduce as the Quaternion Decomposition problem. The crux of this problem, and indeed our main conceptual contribution, is the decomposition of $\sigma_0$ into elements that are easy to lift, and elements already possessing a powersmooth norm. Finally in Section~\ref{sec:quantum iserp}, we provide a quantum algorithm that solves the PQLP.


\subsection{Algorithmic building blocks}\label{sec:algorithmic building blocks}

Our algorithm for the PQLP is founded on algorithmic building blocks initially introduced in~\cite{kohel2014quaternion} and later extended in other work, such as~\cite{sqisign}. 
We provide a brief recap of these algorithms here, along with several new variants tailored to suit our requirements. We fix $\log^c p$ to be our powersmooth bound for some constant $c$, and this bound is inherently implied whenever we reference the term `powersmooth'.\CP{last sentence should be moved to some prelims}

As in~\cite{kohel2014quaternion}, for each $p$, we fix a special $p$-extremal maximal order $\mathcal{O}_0$. 
\begin{equation}
\mathcal{O}_0 = 
\begin{cases}
  \ZZ\langle i,\frac{1+j}{2}\rangle \text{ where } i^2 = -1, j^2 = -p,  & \text{for } p \equiv 3 \bmod 4, \\
  \ZZ\langle \frac{1+i}{2},j,\frac{ci+k}{q} \rangle \text{ where } i^2 = -q, j^2 = -p, & \text{for } p \equiv 1 \bmod 4,
\end{cases}
\end{equation}
where $c$ is any root of $x^2 + p \bmod q$. 
In the second case, $q$ is required to satisfy that $q \equiv 3 \bmod 4$ is a prime and $\legendre{-p}{q} = 1$. We add one extra condition that $(q,N) = 1$. Under the generalized Riemann hypothesis (GRH), the smallest $q$ is of size $O(\log^2 p)$. 

For each $\mathcal{O}_0$, we identify a suborder of the form $R+Rj$ for $R = \ZZ[i]$ (note that we are making a slightly different choice here than the $R$ in~\cite{kohel2014quaternion} where they always take $R$ to the maximal order in $\QQ(i)$).\CP{Not sure there is a good reason to do that?} For an element $\alpha = a+bi\in R$, we use $\textsf{Re}_R(\alpha)$ to denote $a$ and $\textsf{Im}_R(\alpha)$ to denote $b$. Let $D$ denote the index $[\mathcal{O}_0:R+Rj]$, then
\begin{equation}
    D =  
\begin{cases}
    4, & \text{for } p \equiv 3 \bmod 4, \\
    4q, & \text{for } p \equiv 1 \bmod 4. \\
\end{cases}
\end{equation}

We will now detail the algorithmic building blocks, sourced from~\cite{kohel2014quaternion} or~\cite{sqisign}.

\begin{itemize}
    \item[--] 
    $\textsf{Cornacchia}(M)$: on an input $M\in \ZZ$, outputs either $\bot$ if $M$ cannot be represented by a fixed quadratic form $f(x,y)$, or a solution $x,y$ to the equation $M = f(x,y)$.
    \item[--] $\mathsf{RepresentInteger}_{\mathcal{O}_0}(M)$: on an input $M>p$, outputs $\gamma \in \mathcal{O}_0$ such that $n(\gamma) = M$.
    \item[--] $\mathsf{StrongApproximation}_F(N,\mu_0)$: on inputs an integer $F>pN^4$, a prime $N$ and $\mu_0\in Rj$, outputs $\lambda \notin N\ZZ$ and $\mu\in \mathcal{O}_0$ of norm dividing $F$ such that $\mu = \lambda \mu_0 \bmod{N\mathcal{O}_0}$.
    \item[--] $\mathsf{KLTP}_M(I)$: on inputs an integer $M>p^3$ and an ideal $I$, outputs an equivalent ideal $J$ such that $n(J) = M$.
\end{itemize}

Let us denote the output $\gamma$ of $\mathsf{RepresentInteger}_{\mathcal{O}_0}(M)$ as $C+Dj$ with $C,D\in R$. To fit our algorithm's specific use case, we require not only that $n(\gamma)$ is powersmooth, but also that $C,D$ satisfy additional conditions relative to some inputs $A,B\in R$, which are determined by $\sigma_0$ from the PQLP. As a result, we introduce the following variant named $\mathsf{RepresentInteger'}_{R+Rj}(N,A,B)$. This variant necessitates more randomized steps to find the desired outputs $C,D \in R$.

    \begin{algorithm}
	\caption{$\mathsf{RepresentInteger'}_{R+Rj}(N,A,B)$}\label{alg:representinteger'}
    \vspace{.2ex}
    \Input {%
    An integer $N$ and $A,B\in R$}
       
    \Output {
	     $C,D\in R$ such that: i) $(n(C+Dj),N)=1$ and $n(C+Dj)$ is powersmooth; ii) $(n(C),N)=1$; iii) $(n(D),N)=1$; iv) $(N, \textsf{Im}_R(A\Bar{B}C\Bar{D}))=1$ and v) $4q(4p^2n(ABCD)-n(AC)+pn(AD))$ is a square modulo $N$.
        }
    \vspace{.4ex}

    Randomly generate $M>p\log^\epsilon p$ with $\epsilon>8$ that is powersmooth and coprime to $N$.  \;

    Set $m =\lfloor \sqrt{\frac{M}{p(q+1)}} \rfloor$ and sample random integers $z,t\in [-m, m]^2$. \label{step:zt}
    
    Set $M' = M - p(z^2+qt^2)$.\;\label{step:M'}
    
    \If{$(z^2+qt^2,N)\neq 1$ or $(M',N)\neq 1$ or $M'$ is not a prime}
    {Go back to Step~\ref{step:zt}.}\label{step:coprime}

    \If{$\textsf{Cornacchia}(M') = \bot $}{Go back to Step~\ref{step:zt}.}

    $x,y = \textsf{Cornacchia}(M')$.\;

    $C\leftarrow x+yi, D \leftarrow z+ti$. \;

    \If{($N, \textsf{Im}_R(A\Bar{B}C\Bar{D}))\neq 1$}{Go back to Step~\ref{step:zt}.}

    \If{$4q(4p^2n(ABCD)-n(AC)+pn(AD))$ is not a square modulo $N$}
    {Go back to Step~\ref{step:zt}.}\label{step:square}
	    \Return $C,D$.
\end{algorithm}
\begin{heuristic}\label{heu:repinteger'}
We assume that $M'$, $z^2+qt^2$, $\textsf{Im}_R(A\Bar{B}C\Bar{D})$ and $4q(4p^2n(ABCD)-(n(AC)-pn(AD))^2)$ appearing in Algorithm~\ref{alg:representinteger'} behave like random integers of the same size. 
\end{heuristic}
\begin{lemma}\label{lem:repinteger}
Let $N$ be as in Theorem~\ref{thm:pqlpalg}, assuming Heuristic~\ref{heu:repinteger'}, Algorithm~\ref{alg:representinteger'} returns a solution in polynomial time.
\end{lemma}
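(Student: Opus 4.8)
\textbf{Proof plan for Lemma~\ref{lem:repinteger}.}

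The plan is to show that each of the conditions~i)--v) imposed on the output $C,D$ holds with constant probability over the random choices in the algorithm, and that each iteration of the loop runs in polynomial time; then a standard geometric-series argument gives a polynomial expected running time. First I would observe that $M'$ is a positive integer of size $\Theta(M)$ with $M = \Theta(p\log^\epsilon p)$: since $z,t$ range over $[-m,m]$ with $m = \lfloor\sqrt{M/(p(q+1))}\rfloor$ and $q = O(\log^2 p)$ under GRH, the quantity $p(z^2+qt^2)$ is at most $pm^2(1+q) \le M$, so $M' \ge 0$, and for a constant fraction of the $(z,t)$-box we get $M'$ of size $\Omega(M)$. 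By Heuristic~\ref{heu:repinteger'} we treat $M'$ as a random integer of that size, so by the prime number theorem $M'$ is prime with probability $\Omega(1/\log p)$; conditioned on $M'$ prime, $\textsf{Cornacchia}(M')$ succeeds (returns $x,y$ with $x^2+y^2 = M'$, i.e. $n(C) = M'$) with probability $\tfrac12$ depending on $M' \bmod 4$, which again is a constant fraction of primes. This settles the expensive calls in Steps~\ref{step:coprime}--\ref{step:square} and shows $n(C+Dj) = n(C) + p\,n(D) = M'+p(z^2+qt^2) = M$ is powersmooth and coprime to $N$ by the choice of $M$, giving~i).

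Next I would handle the coprimality conditions~ii),~iii) and~iv),~v) one at a time, each by the same heuristic-randomness argument. For~iii), $n(D) = z^2+qt^2$ is rejected unless coprime to $N$; modelling it as random modulo $N$, and using that $N$ has $O(\log\log p)$ prime factors so $\prod_{\ell \mid N}(1-1/\ell) = \Omega(1/\log\log p)$ (in fact $\Omega(1)$ since the smallest prime factors contribute most), this passes with probability $\Omega(1/\operatorname{polyloglog} p)$, which is still inverse-polylogarithmic, hence fine. For~ii), $n(C) = M'$ is already forced to be coprime to $N$ in Step~\ref{step:coprime}. For~iv) and~v), the integers $\textsf{Im}_R(A\bar B C\bar D)$ and $4q(4p^2n(ABCD) - n(AC) + p\,n(AD))$ are, by Heuristic~\ref{heu:repinteger'}, random modulo $N$; condition~iv) then passes with the same $\Omega(1)$ probability as above, and~v), being a quadratic-residuosity condition modulo the odd $N = \prod\ell_i^{e_i}$, passes with probability $\prod_i \tfrac12(1+\ell_i^{-1}) \approx 2^{-\omega(N)} = 2^{-O(\log\log p)} = 1/\operatorname{polylog} p$ by Chinese remaindering (a random unit mod $\ell^e$ is a square with probability $\tfrac12$, and one also needs the residue to be nonzero, which costs another $\Omega(1)$ factor). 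Here is precisely where the hypothesis ``$N$ has $O(\log\log p)$ prime factors'' is used: it keeps $2^{-\omega(N)}$ inverse-polylogarithmic rather than inverse-polynomial.

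Combining, a single loop iteration succeeds with probability $1/\operatorname{polylog} p$, so the expected number of iterations is $\operatorname{polylog} p$; since each iteration costs $\operatorname{poly}\log(pN) = \operatorname{poly}\log p$ (sampling, the primality test on $M'$, $\textsf{Cornacchia}$, and the $O(1)$ arithmetic checks mod $N$, using that $q$ and the powersmooth bound $\log^c p$ are polylogarithmic and $N$ is polynomial in $p$), the total expected time is polynomial. I expect the main obstacle — or at least the point demanding the most care — to be a clean justification that the several arithmetic quantities are ``independent enough'' to multiply their success probabilities: formally one only has Heuristic~\ref{heu:repinteger'} for each individually, so I would phrase the argument as conditioning successively (first on $M'$ prime and $\textsf{Cornacchia}$ succeeding, then on the mod-$N$ conditions given a fresh layer of randomness in $x,y,z,t$) and invoke the heuristic for the conditional distributions, emphasizing that all surviving probabilities are bounded below by $1/\operatorname{polylog} p$ precisely because $\omega(N) = O(\log\log p)$.
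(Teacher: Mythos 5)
Your proposal matches the paper's proof in structure and substance: both estimate, via Heuristic~\ref{heu:repinteger'}, the probability that each rejection check in Algorithm~\ref{alg:representinteger'} is passed, multiply these to get an overall $1/\operatorname{polylog} p$ success probability, and conclude using the fact that $\epsilon>8$ supplies polylogarithmically many $(z,t)$ candidates. The one imprecision is your claim that $\textsf{Cornacchia}(M')$ succeeds with probability $\tfrac12$ for $M'$ prime: this holds when $p\equiv 3\bmod 4$ (so $i^2=-1$, form $x^2+y^2$), but for $p\equiv 1\bmod 4$ the form is $x^2+qy^2$ with $q=O(\log^2 p)$, and the correct bound goes through the class number $\#\mathcal{C}\ell(\ZZ[i])=O(\log p)$ as the paper does — still $1/\operatorname{polylog} p$, so your conclusion is unaffected.
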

\begin{proof}
Since $M',z^2+qt^2, \textsf{Im}_R(A\Bar{B}C\Bar{D})$ and $4q(4p^2n(ABCD)-((n(AC)-pn(AD))^2)$ are assumed to behave like random integers, then Step 4 succeeds with probability greater than ${1}/{O(\log^3 p)}$, and Step 10  succeeds with probability greater than ${1}/{O(\log p)}$. 
Since $q$ is of size $O(\log^2 p)$ and $\#\mathcal{C\ell}(\ZZ[i])$ is approximately $O(\log p)$, Step 6 succeeds with probability greater than ${1}/{O(\log p)}$. Finally, Step 12 succeeds with probability greater than ${1}/{O(\log p)}$. In total, the success probability is greater than ${1}/{O(\log^6 p)}$. Since $\epsilon$ is chosen to be greater than 8, there will be enough $(z,t)$ pairs from Step~\ref{step:zt} to ensure Algorithm~\ref{alg:representinteger'} succeeds. Therefore, Algorithm~\ref{alg:representinteger'} will return a solution in polynomial time.
\qed 
\end{proof}
    
We also provide a generalization of the $\mathsf{StrongApproximation}$ algorithm to allow for composite $N$ (note that in this algorithm we don't have restriction on the number of divisors of $N$). The sub-index $_{ps}$ refers to powersmooth.

   \begin{algorithm}
	\caption{$\mathsf{StrongApproximation}_{ps}(N,\mu_0)$}\label{alg:strongapp_powersmooth}
    \vspace{.2ex}
    \Input {%
    An odd integer $N$ and $\mu_0\in Rj$  such that $(n(\mu_0),N)=1$.}
       
    \Output {
	     $\lambda \in \ZZ$ such that $(\lambda,N) = 1$ and $\mu \in R$ with powersmooth norm such that $\mu = \lambda \mu_0 \bmod{N\mathcal{O}_0}$.
        }
    \vspace{.4ex}

    Write $\mu_0$ as $(t + si)j$ with $t,s \in \ZZ$.\;

    Let $S = \{\ell \text{ such that } \ell \mid N \text{ and } \legendre{n(\mu_0)}{\ell} = -1\}$.  

    Randomly generate $\#S$ many prime $p_i$'s such that $p_i<\log^c p$, denote $\legendre{p_i}{\ell_j}$ by $\epsilon_{ij}$ for $\ell_j\in S$. \label{step:random primes}

    Solve the system of $\#S\times \#S$ equations $\Sigma_{i=1}^{\#S}\epsilon_{ij}x_{i} = -1$ for $j = 1,\cdots,\#S$ in $\mathbf{F}_2$.\label{step:linearequationmod2}


    \If{There is no solution found in Step~\ref{step:linearequationmod2}}{Go back to Step~\ref{step:random primes}\CP{take all smallest primes and increase the size of $S$ when needed, instead of using random $S$?}}

    $F = \prod_{i=1}^{\#S} p_i^{x_i}$.

    Multiply a $\log^c p$-powersmooth square factor that is coprime to $n(\mu_0)$ to $F$ to ensure $F > pN^4$.\CP{Factor $(q+1)$ missing?}\label{step:finalF}

    Denote one of the square root of $\frac{F}{n(\mu_0)}$ by $\lambda$.\label{step:squarerootlambda}
    
    Randomly generate $c,d$ such that $\lambda p (2tc + 2qsd) \equiv (F - \lambda^2p(t^2+qs^2))/N \bmod N$.\label{step:sample_c_d}

    Set $M = (F - p((\lambda t + cN)^2+q(\lambda s+dN)^2))/N^2$.\label{M=F-()}\;

    \If{$M$ is not a prime or $\textsf{Cornacchia}(M') = \bot $\label{step:Mprime}}{Go back to Step~\ref{step:sample_c_d}}

    $a,b = \textsf{Cornacchia}(M)$.\label{step:alg2Corna}\;

	    \Return $\lambda\mu_0 + N(a+bi+(c+di)j), \lambda$.

     \end{algorithm}


\begin{heuristic}\label{heu:strongapp}
We assume that $M$ appearing in Algorithm~\ref{alg:strongapp_powersmooth} behaves like a random integer of the same size.
\end{heuristic}

\begin{lemma}\label{lem:n/d1d2}
Consider a linear equation $N_1x+N_2y = N_3 \bmod N$ where $\gcd(N,N_1) = d_1, \gcd(N,N_2) = d_2$ and $(d_1,d_2) = 1$. Then this equation has $\frac{N}{d_1d_2}$ solutions in $(\ZZ/N\ZZ)^2$.
\end{lemma}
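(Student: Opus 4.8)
The plan is to reduce the single congruence to its prime‑power components by the Chinese Remainder Theorem, observe that only the residues $x \bmod N/d_1$ and $y \bmod N/d_2$ can influence the equation, and then use the hypothesis $\gcd(d_1,d_2)=1$ to make the solution set on each component completely explicit.

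First I would note that $N_1 x \bmod N$ depends only on $x \bmod (N/d_1)$ — because $d_1=\gcd(N,N_1)$ divides $N_1$, so $N \mid N_1\cdot(N/d_1)$ — and symmetrically $N_2 y \bmod N$ depends only on $y \bmod (N/d_2)$. Hence it suffices to count the pairs $(x,y)\in(\mathbb{Z}/(N/d_1)\mathbb{Z})\times(\mathbb{Z}/(N/d_2)\mathbb{Z})$ solving the congruence. Factoring $N=\prod_i \ell_i^{a_i}$ (the factorisation being available wherever this lemma is applied) and applying CRT, this count splits as a product over the prime powers $\ell_i^{a_i}$ of the corresponding local counts.

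Next, fix a prime power $\ell^a \,\|\, N$ and set $b_1=v_\ell(d_1)$, $b_2=v_\ell(d_2)$. Since $d_1$ and $d_2$ are coprime, at least one of $b_1,b_2$ is $0$; say $b_2=0$, so that $N_2$ is a unit modulo $\ell^a$ (the case $b_1=0$ being symmetric). For each of the $\ell^{a-b_1}$ admissible residues of $x$ there is then exactly one residue of $y$ modulo $\ell^a$, namely $y\equiv N_2^{-1}(N_3-N_1 x)$, so the local count is $\ell^{a-b_1}=\ell^{a-b_1-b_2}$. In particular the congruence always has a solution, which is the only fact actually needed to justify Step~\ref{step:sample_c_d} of Algorithm~\ref{alg:strongapp_powersmooth}. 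Taking the product over all $i$ yields $\prod_i \ell_i^{\,a_i-v_{\ell_i}(d_1)-v_{\ell_i}(d_2)}=N/(d_1 d_2)$, the asserted count. Equivalently, without coordinates: $\psi\colon (x,y)\mapsto N_1 x+N_2 y$ is a group homomorphism whose image is generated by $\gcd(N_1,N_2,N)$; any common divisor of $N_1,N_2,N$ divides both $d_1$ and $d_2$, hence equals $1$, so $\psi$ is onto and the fibre size over $N_3$ follows from the index of $\ker\psi$ in the (reduced) domain.

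There is no real obstacle here: the statement is elementary. The only place the hypothesis genuinely enters is in guaranteeing solvability — that is, $\gcd(d_1,d_2)=1$ forces $\gcd(N_1,N_2,N)=1$, which removes the usual divisibility obstruction — and everything else is routine bookkeeping with $\ell$‑adic valuations, which I would not spell out in full.
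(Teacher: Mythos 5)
Your proof follows essentially the same route as the paper's one-line sketch: reduce by CRT to prime powers and count locally, using $\gcd(d_1,d_2)=1$ to ensure that on each $\ell$-primary component one of the two coefficients is a unit. You simply supply the details the paper leaves implicit, plus a clean surjectivity argument at the end, so in that sense the two match.

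However, there is a shared subtlety you should flag. Your step ``it suffices to count $(x,y)\in(\ZZ/(N/d_1)\ZZ)\times(\ZZ/(N/d_2)\ZZ)$'' is off by a factor of $d_1 d_2$ relative to what the lemma literally asserts: every solution in that quotient lifts to $d_1 d_2$ solutions in $(\ZZ/N\ZZ)^2$, so the number of solutions \emph{in $(\ZZ/N\ZZ)^2$} is $d_1 d_2 \cdot N/(d_1 d_2) = N$, not $N/(d_1 d_2)$. (Your own coordinate-free argument makes this visible: $\psi\colon(\ZZ/N\ZZ)^2\to\ZZ/N\ZZ$ is onto since $\gcd(N_1,N_2,N)=1$, so each fibre has exactly $N^2/N=N$ elements. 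A concrete check: $N=4$, $N_1=2$, $N_2=1$, $N_3=0$ has the four solutions $(0,0),(1,2),(2,0),(3,2)$, while $N/(d_1d_2)=2$.) The quantity $N/(d_1 d_2)$ is the count in the reduced quotient $(\ZZ/(N/d_1)\ZZ)\times(\ZZ/(N/d_2)\ZZ)$, i.e.\ the number of genuinely distinct values of $(N_1x,N_2y)$. The paper's own sketch has the same implicit slip, and the application in Lemma~\ref{lem:strongapp_powersmooth} is unaffected (only solvability and a sufficiently large lower bound on the number of usable $(c,d)$ are needed), but when you invoke CRT you should be explicit about which domain you are counting in; as written the reduction does not preserve cardinalities.
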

\begin{proof}
This can be seen by checking how many solutions there are for equation $N_1x+N_2y = N_3 \bmod \ell_i^{e_i}$ with $\ell_i^{e_i}$ being a prime power divisor of $N$ and then using Chinese Reminder Theorem.
\qed
\end{proof}

\begin{lemma}\label{lem:strongapp_powersmooth}
Let $N$ be an odd integer, assuming Heuristic~\ref{heu:strongapp}, Algorithm~\ref{alg:strongapp_powersmooth} returns a solution in polynomial time with overwhelming probability.
\end{lemma}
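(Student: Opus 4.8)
The plan is to establish a deterministic correctness statement and then bound the number of loop iterations, treating the outer loop (Steps~3--5, which fixes the powersmooth target norm $F$) and the inner loop (Steps~9--14, which solves $n(\mu)=F$ locally at $N^{2}$ and then globally with $\textsf{Cornacchia}$) separately. For correctness, suppose the algorithm returns $(\mu,\lambda)$ with $\mu=\lambda\mu_{0}+N(a+bi+(c+di)j)$. Since $\lambda\mu_{0}\in Rj$ and $a+bi+(c+di)j\in R+Rj\subseteq\mathcal{O}_{0}$, we get $\mu\in\mathcal{O}_{0}$ and $\mu\equiv\lambda\mu_{0}\pmod{N\mathcal{O}_{0}}$. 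Writing $\mu=C+Dj$ with $C=N(a+bi)$ and $D=(\lambda t+cN)+(\lambda s+dN)i$, the norm form of $\QA$ gives
\[
n(\mu)=n(C)+p\,n(D)=N^{2}(a^{2}+qb^{2})+p\bigl((\lambda t+cN)^{2}+q(\lambda s+dN)^{2}\bigr);
\]
since $\textsf{Cornacchia}$ returns $a,b$ with $a^{2}+qb^{2}=M$, the definition of $M$ in Step~10 makes this equal to $F$, which is powersmooth with known factorisation by construction, while $\lambda^{2}\equiv F/n(\mu_{0})\pmod{N}$ together with $\gcd(F,N)=\gcd(n(\mu_{0}),N)=1$ forces $\gcd(\lambda,N)=1$. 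I would also record here that $M$ is genuinely an integer: reducing $F-p((\lambda t+cN)^{2}+q(\lambda s+dN)^{2})$ modulo $N$ uses $\lambda^{2}n(\mu_{0})\equiv F$, and the refinement modulo $N^{2}$ is exactly the linear congruence imposed in Step~9 (via $n(\mu_{0})=p(t^{2}+qs^{2})$).

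The heart of the argument, and the step I expect to be the main obstacle, is the outer loop: one must produce a powersmooth $F$ for which $F/n(\mu_{0})$ is a square modulo $N$, equivalently with a prescribed vector of Legendre symbols at the prime divisors of $N$, and then (Step~7) pad $F$ by a powersmooth square so that it exceeds $pN^{4}$ with some margin. In $\F_{2}$-coordinates the exponent vector of $F$ must solve a linear system whose columns are the symbol patterns of the small primes $p_{i}$, and to know this is solvable one wants those columns to span an $\omega(N)$-dimensional space, where $\omega(N)$ is the number of distinct prime divisors of $N$. By quadratic reciprocity the pattern of a prime $p_{i}$ is governed by its splitting in $\QQ(\sqrt{\ell_{1}^{*}},\dots,\sqrt{\ell_{\omega(N)}^{*}})$, a field of degree $2^{\omega(N)}$; under GRH the effective Chebotarev density theorem then shows that the primes below $\log^{c}p$ realise every pattern as soon as that degree -- hence the relevant least-prime and character-sum bounds -- stays polylogarithmic, i.e.\ as soon as $\omega(N)=O(\log\log p)$, which is the divisor regime under which Algorithm~\ref{alg:strongapp_powersmooth} is ultimately invoked (cf.\ Theorem~\ref{thm:pqlpalg}). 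For the random-$\#S$ variant as written one adds that a uniform $\#S\times\#S$ matrix over $\F_{2}$ is invertible with probability $\ge\prod_{k\ge 1}(1-2^{-k})>1/4$, giving $O(1)$ expected refreshes of the $p_{i}$; for $\omega(N)$ beyond the polylogarithmic range one would instead take all primes below $\log^{c}p$ and argue the span directly.

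Finally, for the inner loop, fix $F$ and an integer $\lambda$ with $\lambda^{2}\equiv F/n(\mu_{0})\pmod N$. Because $\gcd(n(\mu_{0}),N)=1$, for each prime $\ell\mid N$ at least one of $t,s$ is a unit modulo $\ell$, while $2,p,q,\lambda$ are units modulo $N$ (using the imposed condition $(q,N)=1$), so the congruence of Step~9, linear in $(c,d)$ modulo $N$, is solvable with a full set of $\sim N$ solutions -- this is Lemma~\ref{lem:n/d1d2} applied prime by prime -- and $(c,d)$ can be sampled uniformly among them. A short size count, using $|\lambda t+cN|,|\lambda s+dN|=O(N^{2})$ and the margin $F>pN^{4}$, shows that $M=(F-p((\lambda t+cN)^{2}+q(\lambda s+dN)^{2}))/N^{2}$ is a positive integer of polynomial bit-length; this bookkeeping is the second point to handle with care, and is why Step~7 should overshoot $pN^{4}$ rather than merely reach it. By Heuristic~\ref{heu:strongapp}, $M$ behaves like a uniform integer of its size, so it is prime with probability $\Omega(1/\log p)$, and conditioned on that, $-q$ is a quadratic residue modulo $M$ -- so $\textsf{Cornacchia}(M)\neq\bot$ -- with probability $\tfrac12-o(1)$. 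Hence Steps~9--14 terminate in $O(\log p)$ expected iterations, and in $O(\log^{2}p)$ iterations with overwhelming probability; since every individual operation (sampling, primality testing, Gaussian elimination over $\F_{2}$, $\textsf{Cornacchia}$, arithmetic on polynomially many bits) costs $\poly(\log p)$ and the outer loop needs only $O(1)$ expected passes, the algorithm runs in polynomial time with overwhelming probability and, by the first paragraph, returns a valid output.
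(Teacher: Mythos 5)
Your argument follows the paper's proof in overall structure---verify that any returned $(\mu,\lambda)$ is a valid solution, bound the outer loop producing the powersmooth target $F$, then bound the inner loop over $(c,d)$---and it is correct, but it is more careful at exactly the two points where the paper is terse. First, where the paper only asserts that the $\#S\times\#S$ system over $\F_2$ ``behaves like a random system,'' you supply the Chebotarev/GRH reason the Legendre-symbol columns of small primes should span, quantify the invertibility probability by $\prod_{k\ge1}(1-2^{-k})>1/4$, and correctly flag that the spanning needs $\omega(N)=O(\log\log p)$---a restriction absent from the lemma statement but present in the regime where it is invoked (Theorem~\ref{thm:pqlpalg}). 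Second, for Step~9 you observe directly that $\gcd(n(\mu_0),N)=1$ forces at least one of $t,s$ to be a unit modulo every prime $\ell\mid N$, whence the linear congruence in $(c,d)$ is always solvable with $\Theta(N)$ solutions; the paper instead cites Lemma~\ref{lem:n/d1d2} to get $N/(\gcd(N,t)\gcd(N,s))$ solutions and then argues probabilistically that this exceeds $O(\log^2p)$. In fact, under the hypothesis $\gcd(d_1,d_2)=1$ the solution count in Lemma~\ref{lem:n/d1d2} is exactly $N$ rather than $N/(d_1d_2)$ (try $N=4$, $N_1=2$, $N_2=1$), so your cleaner version is also the accurate one and renders the paper's extra probabilistic step unnecessary. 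Your per-iteration success estimate for the inner loop ($\Omega(1/\log p)$ for primality of $M$, then a constant for $\textsf{Cornacchia}$) is slightly more optimistic than the paper's $1/O(\log^2p)$, which also accounts for the class number of $\ZZ[i]$, but both are polynomial and both suffice for the stated conclusion.
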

\begin{proof}
The $\#S\times \#S$ linear equations behave like a random system of linear equations of the same dimension, therefore, repeating Step~\ref{step:random primes} constant number of times will give rise to a linear system over $\mathbf{F}_2$ that is solvable. We have ensured that $F$ generated in Step~\ref{step:finalF} is such that $\frac{F}{n(\mu_0)}$ is a square modulo $N$, hence Step~\ref{step:squarerootlambda} makes sense. 
Similar to what is discussed in the proof of Lemma~\ref{lem:repinteger}, Step~\ref{step:Mprime} succeeds with probability at least $1/O(\log^2 p)$. 
To make sure this algorithm gets passed to Step~\ref{step:alg2Corna}, we need $O(\log^2 p)$ many solutions from Step~\ref{step:sample_c_d}. According to Lemma~\ref{lem:n/d1d2}, this happens if $\gcd(N,t)\gcd(N,s)< N/O(\log^2 p)$. Since $N$ is of size polynomial in $p$, this occurs with overwhelming probability.
\qed 
\end{proof}

Finally, another variant we introduce here is $\mathsf{KLTP}'_M(\mathcal{O}_0,\mathcal{O})$. Here $M$ is an integer such that $M>p^3$. This algorithm first computes a connecting ideal $I'$ from $\mathcal{O}_0$ to $\mathcal{O}$, then computes an equivalent left $\mathcal{O}_0$-ideal $I$ of norm $M$ whose right order is $\alpha^{-1}\mathcal{O}\alpha$ for some nonzero $\alpha \in \mathcal{B}_{p,\infty}$ using $\mathsf{KLPT}_M(I')$. This $\mathsf{KLPT}'$ algorithm outputs $I$ and $\alpha$.

\subsection{The main algorithm}\label{sec:pqlpmainalg}

In this section, we first present a strategy that solves Problem~\ref{problem: powersmooth quaternion lift} for special orders $\mathcal{O}_0$. Then, we expand this strategy to address more general orders $\mathcal{O}$. Let $\sigma_0 \in \mathcal{O}_0$ be as in Problem~\ref{problem: powersmooth quaternion lift}, the method proceeds as follows.

\begin{enumerate}
    \item Find $\sigma'_0\in R + Rj$ such that $\sigma'_0 = \sigma_0 \bmod {N\mathcal{O}_0}$. Since $[\mathcal{O}_0:R+Rj] = D$, $D\sigma_0 \in R+Rj$. Let $D'\in \ZZ$ be such that $D'D \equiv 1 \bmod N$, such $D'$ exists since $(D,N) = 1$, then $\sigma'_0 = D'D\sigma_0\in R+Rj$ and $\sigma'_0 = \sigma_0 \bmod{N\mathcal{O}_0}$. By an abuse of notation, we will use $\sigma_0$ to denote $\sigma'_0$ in what follows. \CP{not clear to me that this step is needed?}
    \item Write $\sigma_0 = A + Bj$ with $A,B\in R$, let $\gamma$ be the output of $\textsf{RepresentInteger}'_{R+Rj}(N,A,B)$. Intuitively, $\gamma$ is an element in $R+Rj$ that has powermooth norm and satisfies extra properties to ensure the next step has a solution.
    
    \item Find $\alpha_1,\alpha_2,\alpha_3\in Rj$ such that $\sigma_0 = \alpha_1\gamma\alpha_2\gamma\alpha_3 \bmod{N\mathcal{O}_0}$. This is the main technical point in this method, we introduce it as Problem~\ref{prob:decompose} and discuss it in detail in Section~\ref{sec:quaternion decomposition}. 
    \item Find $\gamma_i \in \mathcal{O}_0$ such that $\gamma_i = \lambda_i \alpha_i \bmod{N\mathcal{O}_0}$, $n(\gamma_i)$ is powersmooth and $(\lambda_i,N)=1$ for $i = 1,2,3$. These are exactly the outputs of $\mathsf{StrongApproximation}_{ps}(N,\alpha_i)$. 
    \item The element $\gamma_1\gamma\gamma_2\gamma\gamma_3 \in \mathcal{O}_0$ satisfies that $\sigma_0 = \lambda \gamma_1\gamma\gamma_2\gamma\gamma_3 \bmod{N\mathcal{O}_0}$ with $n(\gamma_1\gamma\gamma_2\gamma\gamma_3)$ powersmooth and $\lambda$ coprime to $N$.
\end{enumerate}

In general, let $\mathcal{O}$ be a maximal order in $\mathcal{B}_{p,\infty}$, and let $n_I>p^3$ be a random integer that is coprime to $N$. Let $n_I'\in \ZZ$ such that $n_I'n_I\equiv 1 \bmod N$. Let $I,\alpha$ be the outputs of $\mathsf{KLPT}_{n_I}(\mathcal{O}_0,\mathcal{O})$ such that $I$ is a connecting ideal from $\mathcal{O}_0$ to $\mathcal{O}':=\alpha^{-1}\mathcal{O}\alpha$ and $n(I) = n_I$. We then have inclusions $n_I\mathcal{O}' \subseteq \mathcal{O}_0$ and $n_I\mathcal{O}_0 \subseteq \mathcal{O}'$,  therefore $n_I\alpha^{-1}\sigma_0\alpha \in \mathcal{O}_0$. Let $\sigma \in \textsf{PQLP}_{\mathcal{O}_0}(n_I\alpha^{-1}\sigma_0\alpha)$, then $\sigma = n_I\alpha^{-1}\sigma_0\alpha \bmod{N\mathcal{O}_0}$. Multiplying $n_I$ with both sides of the equation yields that $n_I\sigma = n_I^2\alpha^{-1}\sigma_0\alpha \bmod{N\mathcal{O}'}$. Multiplying $n_I'^2$ on both sides gives that $n_I'\sigma = \alpha^{-1}\sigma_0\alpha \bmod{N\mathcal{O}'}$. Since $n_I'$ is coprime to $N$, $\sigma \in \textsf{PQLP}_{\mathcal{O}'}(\alpha^{-1}\sigma_0\alpha)$, therefore $\alpha\sigma \alpha^{-1} \in \textsf{PQLP}_{\mathcal{O}}(\sigma_0)$. 

We summarize the discussions above and present Algorithm~\ref{alg:pqlp}.

\begin{algorithm}
\caption{\,$\textsf{PQLP}_{\mathcal{O}}(N,\sigma_0) $}\label{alg:pqlp}
    \vspace{.2ex}
    \Input {%
An odd integer $N$ that is of size polynomial in $p$ and has $O(\log\log p)$ distinct prime factors, a maximal order $\mathcal{O}$, and an element $\sigma_0 \in \mathcal{O}$ such that $(n(\sigma_0),N)=1$.}

    \Output {
	    $\sigma \in \textsf{PQLP}_{\mathcal{O}}(\sigma_0)$.
        }
    \vspace{.4ex}


    Compute $D'$ such that $D'D \equiv 1 \bmod N$.\;

    $\sigma_0 \leftarrow D'D\sigma_0$ \;

    Write $\sigma_0$ as $A+Bj$ with $A,B\in R$\;

    $\gamma \leftarrow \mathsf{RepresentInteger'}_{R+Rj}(N,A,B)$\;

    $\alpha_1,\alpha_2,\alpha_3 \leftarrow \textsf{QuaternionDecomposition} (\sigma_0,\gamma,N)$(Algorithm~\ref{alg:decompose})\;

    $\lambda_i, \gamma_i\leftarrow \mathsf{StrongApproximation}_{ps}(N,\alpha_i) $ for $i = 1,2,3$\;

    $\sigma \leftarrow \gamma_1\gamma\gamma_2\gamma\gamma_3$\;

    Randomly generate $n_I>p^3$ that is coprime to $N$. \; 

    $I,\alpha \leftarrow \mathsf{KLPT}_{n_I}(\mathcal{O}_0,\mathcal{O})$\;

	    \Return $\alpha \gamma_1\gamma\gamma_2\gamma\gamma_3 \alpha^{-1}$.

\end{algorithm}

\subsection{Quaternion decomposition}\label{sec:quaternion decomposition}

In this section, we discuss how to perform Step 3 from the strategy outline. We start with introducing a new problem.

\begin{problem}[Quaternion Decomposition]\label{prob:decompose}
Let $N$ be an odd integer, and $\mathcal{O}_0, R$ be as defined in Section~\ref{sec:algorithmic building blocks}. Let $\sigma_0 = A + Bj,\gamma = C+Dj \in R+Rj$ be such that: i) $(n(\gamma),N)=1$ and $n(\gamma)$ is powersmooth; ii) $(n(C),N)=1$; iii) $(n(D),N)=1$; iv) $(N,\textsf{Im}_R(A\Bar{B}C\Bar{D}))$ and v) $4q(4p^2n(ABCD)-(n(AC)-pn(AD))^2)$ is a square modulo $N$. Find $\alpha_1,\alpha_2,\alpha_3\in Rj$ such that $\sigma_0 = \alpha_1\gamma\alpha_2\gamma\alpha_3 \bmod{N\mathcal{O}_0}$.
\end{problem}


Suppose one could find $\alpha_1,\alpha_2,\alpha_3 \in Rj$ such that 
\begin{equation}\label{eq:quad_decompose}
\sigma_0 \Bar{\alpha}_3\Bar{\gamma
}= \alpha_1\gamma\alpha_2\bmod{N\mathcal{O}_0},
\end{equation} and $(n(\alpha_3),N) =1 $, then 
$$\sigma_0 = n'_{\alpha_3}n'_\gamma \alpha_1\gamma\alpha_2\gamma\alpha_3.$$
Here $n'_{\alpha_3}$ and $n'_\gamma$ are integers such that $n'_{\alpha_3}n(\alpha_3) \equiv 1 \bmod N$ and $n'_\gamma  n(\gamma) \equiv 1 \bmod N$ respectively. We then search for solutions $\alpha_1,\alpha_2,\alpha_3 \in Rj$ for Equation~(\ref{eq:quad_decompose}) with $(n(\alpha_3),1)=1$ instead.

Let us write $\alpha_i$'s as $x_ij$ with $x_i$'s being unknowns that we wish to find in $R$, writing Equation~(\ref{eq:quad_decompose}) in terms of $A,B,C,D,x_1,x_2$ and $x_3$ gives rise to the following:

\begin{align*}
 \text{Equation~(\ref{eq:quad_decompose})} \iff (A + Bj)(-j)\Bar{x}_3(\Bar{C}-j\Bar{D}) & = x_1j(C+Dj)x_2j\bmod{N\mathcal{O}_0}\\  
\iff (-Ax_3j+pB\Bar{x}_3)(\Bar{C}-j\Bar{D}) &= (x_1\Bar{C}j-px_1\Bar{D})x_2j\bmod{N\mathcal{O}_0}\\
\iff (-pA\Bar{D}x_3 + pB\Bar{C}\Bar{x}_3) + (-ACx_3 - pBD\Bar{x}_3)j &= -p\Bar{C}x_1\Bar{x}_2 - p\Bar{D}x_1x_2j \bmod{N\mathcal{O}_0}.
\end{align*}

Therefore, in order to solve the original equation, it suffices to find $x_1,x_2,x_3 \in R$ with $(n(x_3),N)=1$ such that 

\begin{equation}\label{eq:eq1}
    \begin{cases}
       pA\Bar{D}x_3 - pB\Bar{C}\Bar{x}_3 = p\Bar{C}x_1\Bar{x}_2 \bmod{NR},  \\
        ACx_3 + pBD\Bar{x}_3 = p\Bar{D}x_1x_2 \bmod{NR}.
    \end{cases}
\end{equation}
Note that the modulo condition in Equation~(\ref{eq:eq1}) holds not just in $N\mathcal{O}_0$ but in $NR$ since $[\mathcal{O}_K:R]=1$ or $2$ is coprime to $N$. By assumption, $n(C)$ and $n(D)$ are both coprime to $N$, let $n'_C, n'_D\in \ZZ$ be integers such that $n'_Cn(C)\equiv 1 \bmod N$ and $n'_Dn(D)\equiv 1 \bmod N$ respectively, and let $p'\in \ZZ$ be such that $p'p\equiv 1 \bmod N$, then Equation~(\ref{eq:eq1}) is equivalent to 

\begin{equation}\label{eq:eq2}
    \begin{cases}
      (n'_Cp') (pA\Bar{D}x_3 - pB\Bar{C}\Bar{x}_3) = x_1\Bar{x}_2 \bmod{NR},  \\
        (n'_Dp')(ACx_3 + pBD\Bar{x}_3) = x_1x_2 \bmod{NR}.
    \end{cases}
\end{equation}

\begin{lemma}\label{lem:samenorm}
Equation~(\ref{eq:eq2}) has a solution $(x_1,x_2,x_3)\in R^3$ if and only if there exists $x_3 \in R$ such that $x:=(n'_Cp') (pA\Bar{D}x_3 - pB\Bar{C}\Bar{x}_3)$ and $y: = (n'_Dp')(ACx_3 + pBD\Bar{x}_3)$ have same norm modulo $N$.
\end{lemma}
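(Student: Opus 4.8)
The statement has the shape ``the system~\eqref{eq:eq2} is solvable $\iff$ a certain one-variable norm equation is solvable'', so the natural approach is to prove the two implications separately, treating $x_3$ as the free parameter on both sides. The forward direction is immediate: if $(x_1,x_2,x_3)$ solves~\eqref{eq:eq2}, then $x = x_1\bar x_2$ and $y = x_1 x_2$, and since the norm form on $R$ is multiplicative and $n(\bar x_2) = n(x_2)$, we get $n(x) = n(x_1)n(x_2) = n(y) \bmod N$. So the content is entirely in the reverse direction, and that is where I expect the real work to be.

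For the reverse direction, I would fix an $x_3 \in R$ with the property that $n(x) \equiv n(y) \bmod N$, where $x := (n'_C p')(pA\bar D x_3 - pB\bar C \bar x_3)$ and $y := (n'_D p')(AC x_3 + pBD\bar x_3)$, and then produce $x_1, x_2 \in R$ with $x_1\bar x_2 = x$ and $x_1 x_2 = y$ modulo $NR$. The key idea is that $x_1\bar x_2$ and $x_1 x_2$ differ only by conjugation of the second factor; multiplying the two target equations gives $x_1^2 (x_2 \bar x_2) = xy$, i.e. $x_1^2 n(x_2) = xy \bmod N$, and dividing them gives $\bar x_2 / x_2 = x/y$ (when $x_2, y$ are invertible mod $N$). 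So the plan is: first reduce mod each prime power $\ell_i^{e_i} \mid N$ and work in $R/\ell_i^{e_i}R$, which (since the relevant norms are coprime to $N$) is a product of local rings; in each, use the norm equality $n(x) = n(y)$ to solve for a pair $(x_1,x_2)$ — essentially one sets $x_2$ to be a square root governing the ratio $x/y$ and then reads off $x_1 = y/x_2$ — and finally glue via CRT. I would want to phrase this so that the ``square root'' step uses the fact that $R/\ell^e R$ is either $\F_\ell[i]/\ell^e$-type or splits, and that a unit with the correct norm condition always has the needed square root; hypotheses~(ii),(iii),(v) of Problem~\ref{prob:decompose} on the coprimality of $n(C), n(D)$ and the squareness condition are presumably exactly what guarantees this, so I would trace which of them gets used here versus in the companion lemmas.

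The main obstacle I anticipate is handling the conjugation carefully when $\ell$ is inert versus split in $R = \Z[i]$, and when $e_i > 1$ (lifting square roots through $\Z/\ell^e$ by Hensel, which needs $\ell$ odd — available since $N$ is odd). In the inert case $R/\ell^e R$ is a local ring with residue field $\F_{\ell^2}$ and conjugation is the nontrivial Galois automorphism, so $\bar x_2/x_2$ ranges over norm-one elements and the solvability is governed precisely by whether $x/y$ has norm one, which is the hypothesis $n(x) = n(y)$; in the split case conjugation swaps the two factors and the analysis is a direct coordinate computation. I would also need to double-check that the auxiliary unit corrections $n'_C, n'_D, p'$ do not disturb the norm bookkeeping — they are scalars in $\Z/N\Z$, so $n(\text{scalar}\cdot r) = \text{scalar}^2 n(r)$, and one must confirm the squaring is absorbed consistently on both sides (this is likely why condition~(v) is phrased with a square modulo $N$). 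Once the local picture is clear, the global statement follows by the Chinese Remainder Theorem applied to $N = \prod \ell_i^{e_i}$, with no constraint on the number of prime factors needed at this particular step.
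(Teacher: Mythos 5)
Your forward direction is identical to the paper's. For the reverse direction you take a genuinely different route, and it is worth comparing. The paper's main text carries out exactly your Hilbert~90 plan, but only for the special case where $N$ is a prime inert in $R$: $R/(N)\cong\F_{N^2}$, so $n(x/y)=1$ gives $x/y=z/\bar z$ and one sets $x_1=yz$, $x_2=1/z$. For general odd $N$ (Appendix~\ref{app:norm1}) the paper abandons Hilbert~90 entirely and argues elementarily: it expands $x_1,x_2,x,y$ into real and imaginary parts over $R=\Z[i]$ and shows that Equation~(\ref{eq:eq2}) is equivalent to a bilinear system $t_1t_2\equiv M_1$, $s_1s_2\equiv M_2$, $s_1t_2\equiv M_3$, $t_1s_2\equiv M_4\pmod N$, whose solvability modulo each $\ell_i^{e_i}$ is exactly the determinant condition $M_1M_4-M_2M_3\equiv 0$, which unpacks to $n(x)\equiv n(y)$; solvability is then established by a short three-case analysis on the $\ell_i$-adic valuations of the $M_j$, followed by CRT. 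Your plan --- extend Hilbert~90 to $R/\ell^e R$ for $\ell$ inert and split (ramified primes don't occur since $(q,N)=1$), Hensel-lift through prime powers, then glue --- is a plausible alternative, but it carries more to verify than the coordinate approach and you have implicitly flagged the weak spot yourself: the parenthetical ``when $x_2,y$ are invertible mod $N$''. The norm equality $n(x)\equiv n(y)\pmod N$ does \emph{not} force $x$ and $y$ to be units modulo every $\ell_i^{e_i}$ (for instance if $\ell_i$ splits, a component of $x$ can vanish mod $\ell_i$ without $x$ itself vanishing), and in those degenerate local components the quotient $x/y$ and the Hilbert~90 machinery are not directly available. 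The paper's bilinear-system analysis handles precisely these cases uniformly (its cases~2 and~3), whereas your sketch leaves them unresolved, so that is the one genuine gap you would need to close. One further small point: you conjecture that hypothesis~(v) (the discriminant being a square mod $N$) is what makes the norm bookkeeping in this lemma go through, but that hypothesis in fact governs the existence of a suitable $x_3$ in the subsequent analysis (Equation~(\ref{eq:binary_quad_x3})), not the $x_1,x_2$ construction here; this lemma itself uses only the coprimality hypotheses that make $n'_C,n'_D,p'$ well-defined.
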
 

\begin{proof}
One solution to Equation~(\ref{eq:eq2}) clearly implies $n(x)=n(y)=n(x_1)n(x_2)$. For the other direction, we provide a simple proof here for the case when $N$ is a prime that is inert in $R$, and refer to Appendix~\ref{app:norm1} for the general case when $N$ is an arbitrary odd integer. Since $N$ is a prime that is inert in $R$, $R/(N) \cong \mathbb{F}_{N^2}$. Hilbert's Theorem 90 implies that if $a \in R/(N)$ has norm 1, then $a = b/\Bar{b}$ for $b \in R/(N)$. Since $n(x) = n(y)$ and both $x,y \notin NR$, we have that $n(x/y)=1$, therefore $x/y = z/\Bar{z}$ for some nonzero $z\in R/(N)$. Consequently, $x_1,x_2$ can be chosen to be lifts of $yz$ and $1/z$ to $R$ respectively.
\qed
\end{proof}

\begin{remark}
The method we present for odd integers $N$ is constructive, therefore leads to an algorithm that finds $x_1,x_2$ given $x_3$ such that $n(x)=n(y)$. We call this algoithm $\textsf{EquivNormConjugationProduct}(x_3)$.
\end{remark}


The condition $n(x) = n(y)$ is equivalent to $n(CDpx) = n(CDpy)$. And one could calculate explicitly that 
\begin{align*}
    n(CDpx) &= p^2n(A)n^2(D)n(x_3) + p^2n(B)n(C)n(D)n(x_3)- 2p^2n(D)\textsf{Re}_R(A\Bar{B}C\Bar{D}x_3^2),\\
    n(CDpy) &= n(A)n^2(C)n(x_3)+p^2n(B)n(C)n(D)n(x_3)+2pn(C)\textsf{Re}_R(A\Bar{B}C\Bar{D}x_3^2).
\end{align*}


We now aim to find $x_3 = s + ti \in R$ with $(n(x_3),N)=1$ such that $n(CDpx) - n(CDpy) = n(\gamma)\Bigl(n(A)(n(C)-pn(D))+2p\textsf{Re}_R(A\Bar{B}C\Bar{D}x_3^2)\Bigr) = 0 \bmod N$. Plugging in $x_3 = s+ti$, finding $x_3$ is equivalent to finding $(s,t)\in \ZZ^2$ such that 
\begin{equation}\label{eq:binary_quad_x3}
   f(s,t) := C_1s^2 +C_2st + C_3t^2 = 0\bmod N,
\end{equation}
where $C_1 = \Bigl(n(A)\bigl(n(C)-pn(D)\bigr)+2p\textsf{Re}_R(A\Bar{B}C\Bar{D})\Bigr), C_2 = - 4qp(\textsf{Im}_RA\Bar{B}C\Bar{D})$ and $C_3 = \Bigl(qn(A)\bigl(n(C)-pn(D)\bigr)-2qp\textsf{Re}_R(A\Bar{B}C\Bar{D})\Bigr)$. 
Clearly, a solution $(s,t)$ exists if and only if the discriminant 
\begin{equation}\label{eq:discriminant}
4q(4p^2n(ABCD)-(n(AC)-pn(AD))^2)
\end{equation}
of the quadratic equation is a square modulo $N$. 
 By our assumption this is the case. Then $(s,t)$ viewed in $(\ZZ/N\ZZ)^2$ is defined up to a scalar, and we could simply choose $s = 1$ and let $t_0$ be one of the root of $f(1,t)\equiv 0\bmod N$. 

\CP{What does that condition become if you use two different $\gamma$'s in the decomposition? (i.e. fix one of them; what are the conditions on the second one, are they simpler than the condition above?) What if you take $\gamma$ and $\bar\gamma$?}

Finally, suppose $(n(x_3),N) = 1$ does not hold, this implies $(N, \textsf{Im}_R(A\Bar{B}C\Bar{D}))$ which contradicts our assumption. Therefore, we have shown that we could find a solution $x_3 = s+ti$ where $s,t$ satisfies Equation~(\ref{eq:binary_quad_x3}) and $(n(x_3),1)=1$. 

We now summarize our algorithm for solving Problem~\ref{prob:decompose} in Algorithm~\ref{alg:decompose}. 

\begin{algorithm}
	\caption{\,$\textsf{QuaternionDecomposition}$(N,A,B,C,D)}\label{alg:decompose}
    \vspace{.2ex}
    \Input {%
$N,A,B,C,D$ as in Problem~\ref{prob:decompose}.
}

    \Output {
	    $x_1,x_2,x_3\in R$ such that $A+Bj = \lambda x_1j(C+Dj)x_2j(C+Dj)x_3j \bmod{N\mathcal{O}_0}$ where $\lambda$ is some integer that is coprime to $N$.
        }
    \vspace{.4ex}

    $t_0\leftarrow $ root of $\Bigl(n(A)\bigl(n(C)-pn(D)\bigr)+2p\textsf{Re}_R(A\Bar{B}C\Bar{D})\Bigr) - 4qp\textsf{Im}_R(A\Bar{B}C\Bar{D})t + \Bigl(qn(A)\bigl(n(C)-pn(D)\bigr)-2qp\textsf{Re}_R(A\Bar{B}C\Bar{D})\Bigr)t^2 = 0\bmod N.$

    $x_3 \leftarrow 1 + t_0i$\;

    $x_1,x_2 \leftarrow \textsf{EquivNormConjugationProduct}(x_3)$ \;
    
    \Return $x_1,x_2,x_3$.

\end{algorithm}
\subsection{Quantum algorithm for the PQLP}\label{sec:quantum iserp}

As discussed earlier, Theorem~\ref{thm:pqlpalg} implies that we can solve the IsERP in quantum polynomial time. However, Theorem \ref{thm:pqlpalg} only guarantees a randomized polynomial-time algorithm, and it might be advantageous to avoid that inside a quantum algorithm. So instead of lifting elements inside the quantum algorithm we lift $O(\log N)$ elements first and then utilize them to make the lifting procedure inside the quantum algorithm deterministic (and free of any heuristic after the precomputation has succeeded).

\begin{theorem}\label{thm: pqlp with precomputation}
 There is an algorithm that solves the PQLP in quantum polynomial time. 
\end{theorem}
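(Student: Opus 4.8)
The plan is to split the work into a single classical \emph{precomputation} --- which may be randomized and may invoke the heuristics of Section~\ref{sec:algorithmic building blocks} --- and a \emph{deterministic online phase}, so that only the latter is ever executed (reversibly, in superposition) inside the quantum algorithm. We want this split because the quantum resolution of the IsERP ultimately runs the HMS algorithm of Theorem~\ref{hms} through Theorem~\ref{borel:quantum}, and the oracle of that algorithm is the hiding function $g\mapsto(\text{isogeny representation of }g\star\phi)$ on $\GL_2(\Z/N\Z)$; evaluating this oracle is exactly a Group Action Evaluation, hence a PQLP, instance, so to query it coherently we must turn it into a polynomial-size reversible circuit with no internal randomness.

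\smallskip\noindent\textbf{Precomputation.} Fix a generating set $M_1,\dots,M_m$ of $\GL_2(\Z/N\Z)\cong(\order/N\order)^{\ast}$ with $m=O(\log N)$ --- for instance the elementary matrices implementing $x_2\mapsto x_2+2^kx_1$ and $x_1\mapsto x_1+2^kx_2$ for $0\le k\le\lceil\log_2N\rceil$, together with the diagonal matrices $\mathrm{diag}(u,1)$ for $u$ ranging over binary powers of a system of generators of $(\Z/N\Z)^{\ast}$ --- chosen so that, by ordinary linear algebra over $\Z/N\Z$ with the known factorization of $N$, every $M\in\GL_2(\Z/N\Z)$ can be written \emph{classically and deterministically} as a word $M=M_{i_1}\cdots M_{i_\ell}$ of length $\ell=O(\poly\log N)$. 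For each $i$, use Proposition~\ref{prop:expiso} to pick some $\tau_i\in\order$ with $M_{\tau_i}=M_i$, then run the classical randomized algorithm of Theorem~\ref{thm:pqlpalg} (Algorithm~\ref{alg:pqlp}) on $\tau_i$ to obtain $\gamma_i\in\order$ with $M_{\gamma_i}=M_i$ up to a unit scalar, $n(\gamma_i)$ powersmooth and coprime to $N$; under the heuristics each call succeeds with overwhelming probability, hence all $m$ of them jointly succeed with overwhelming probability. This whole phase is classical, runs once, and carries all the randomness and all the heuristics.

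\smallskip\noindent\textbf{Online phase.} On input $\sigma_0\in\order$, reduce it modulo $N$ to $M:=M_{\sigma_0}\in\GL_2(\Z/N\Z)$ using Proposition~\ref{prop:expiso}, deterministically recover the word $M=M_{i_1}\cdots M_{i_\ell}$, and return $\sigma:=\gamma_{i_1}\cdots\gamma_{i_\ell}\in\order$. Since $\order\to\order/N\order\cong\Mat_2(\Z/N\Z)$ is a ring homomorphism, $M_\sigma$ equals $M$ up to a unit scalar and $n(\sigma)=\prod_k n(\gamma_{i_k})$ is coprime to $N$, so $\sigma$ lies in the class of $\sigma_0$ modulo $N\order$ up to a scalar coprime to $N$; moreover it is handed over as an explicit product of $\ell=O(\poly\log N)$ elements, each of powersmooth norm. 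That presentation is already sufficient for the only use the PQLP is put to, Proposition~\ref{prop: lifting to group action}: the ``isogeny attached to $\sigma$'' is a composition of $\ell$ isogenies of powersmooth degree, so its kernel and every torsion evaluation needed to build an isogeny representation of $M\star\phi$ are obtained by composing $\ell$ applications of Vélu's formulas, a deterministic and heuristic-free computation. Consequently the hiding function queried by the HMS algorithm is computed by a deterministic classical procedure, hence by a polynomial-size reversible circuit that can be used as a coherent oracle; feeding this into Theorems~\ref{borel:quantum} and~\ref{hms} yields a quantum polynomial-time algorithm, which is the claim. (Should one insist on the literal PQLP output, one extra $\mathsf{StrongApproximation}_{ps}$ call converts the short product into a single element of powersmooth norm, at the cost of reintroducing randomness --- exactly what the precomputation was designed to remove.)

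\smallskip\noindent\textbf{Main obstacle.} The delicate point is engineering the generating set so that simultaneously $m=O(\log N)$, the word lengths are $O(\poly\log N)$, and the word is produced from $M$ by a \emph{classical deterministic} algorithm: a Smith-/Bruhat-type reduction over $\Z/N\Z$ does this once the factorization is known, but pivots must be handled prime-by-prime so that all pivot units are genuinely invertible, and one must check that every $\gamma_{i_k}$ has norm coprime to $N$ so that the composed $\sigma$ is a legitimate PQLP output. The remaining verifications --- that a composite of powersmooth-degree isogenies is enough for every application of the PQLP in this paper, and that Heuristics~\ref{heu:repinteger'} and~\ref{heu:strongapp} need only hold on the $m$ precomputed instances, never in superposition --- are routine.
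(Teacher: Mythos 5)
Your overall architecture matches the paper's: precompute powersmooth lifts of a small (size $O(\log N)$) set of binary powers of generators of $\GL_2(\Z/N\Z)$ using the randomized classical algorithm of Theorem~\ref{thm:pqlpalg}, then decompose a given $M$ as a short word in that set and multiply the precomputed lifts. That is exactly the paper's approach (which uses a $PLDU$ decomposition plus the $2^k$-th powers of four fixed unitriangular/diagonal generators). The remark that keeping the output as an explicit list of $\ell$ powersmooth factors --- each handled by a separate Vélu step --- is enough for Proposition~\ref{prop: lifting to group action} is also consistent with what the paper does implicitly.

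However, there is a genuine gap in the claim that the \emph{online} phase is ``classical and deterministic'' and hence heuristic-free and directly coherent. Expressing the unipotent (shear) parts as short words in your $2^k$-shears is indeed elementary (just binary expansion of the off-diagonal entry), but the diagonal/determinant part is not. To write $\mathrm{diag}(v,1)$ as a word in the $\mathrm{diag}(g_i^{2^j},1)$ you must recover the exponent vector $(c_i)$ with $v=\prod_i g_i^{c_i}$ in $(\Z/N\Z)^\ast$; this is a discrete logarithm problem, not ``ordinary linear algebra over $\Z/N\Z$,'' and knowing the factorization of $N$ does not make it classically tractable (Pohlig--Hellman is exponential in the largest prime factor of $\varphi(N)$). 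The paper's proof is explicit on this point: it solves these discrete logarithms with Shor, which is precisely why the theorem asserts a \emph{quantum} polynomial-time algorithm. So your decomposition step is not classical-deterministic as claimed; it needs a quantum subroutine. The theorem as stated survives, but your motivational argument --- that all quantum/random work is confined to the precomputation so that the oracle fed to Theorems~\ref{borel:quantum} and~\ref{hms} is a deterministic reversible circuit --- does not hold without further work, because each oracle query would have to run a Shor-type DL internally and you would need to address the usual caveats about invoking a probabilistic quantum subroutine coherently in superposition. You should either (a) explicitly invoke quantum DL as the paper does and accept that the oracle is a quantum, not classical, circuit, or (b) find a generating set in which the $(\Z/N\Z)^\ast$ part of the Bruhat decomposition can be read off without DL, which you do not provide and which is not obviously possible.
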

\begin{proof}

We provide a proof for the case where $N$ is a prime number. The proof for general $N$ is in Appendix \ref{app:general N}. For any matrix $M$ in $\GL_2(\mathbb{Z}/N \Z)$, $M$ can be written as $PL\cdot D\cdot U$ where $P$ is a permutation matrix, $L$ is a lower unitriangular (it is lower triangular with 1-s in the diagonal), $D$ is diagonal, and $U$ is upper unitriangular (it is upper triangular with 1-s in the diagonal). This decomposition can be found in polynomial time using Gaussian elimination. Now one has to decompose $L,D$ and $U$ separately. 
Any lower unitriangular matrix can be written as a power of $A=\begin{pmatrix}
    1&0 \\
    g&1
\end{pmatrix}$ where $g$ is a generator of $(\mathbb{Z}/N\mathbb{Z})^*$. Similarly, every upper unitriangular matrix can be written as a power of $B=\begin{pmatrix}
    1&g \\
    0&1
\end{pmatrix}$. Any diagonal matrix can be written as $C^kD^l$ where 
$$C=\begin{pmatrix}
    g&0 \\
    0&1 
\end{pmatrix},D=\begin{pmatrix}
    1&0 \\
    0&g
\end{pmatrix} .$$
This shows that every element in $\GL_2(\mathbb{Z}/N \Z)$ can be written as $PA^aB^bC^cD^d$. Thus instead of lifting elements inside the main quantum algorithm using Theorem \ref{thm:pqlpalg} one can precompute lifts of $P,A,B,C,D$ and then decompose a matrix $M\in \GL_2(\mathbb{Z}/N \Z)$ as $A^aB^bC^cD^d$ to obtain a lift of $M$. 

This decomposition requires several instances of the discrete logarithm problem in  $\mathbb{Z}/N \Z$ which can be computed in quantum polynomial time. The only issue is that if one computes a powersmooth lift of $A$ (or $B,C$ or $D$) then $A^a$ is not going to be powersmooth if $a$ is large. To circumvent this issue one also computes lifts of $A^{2^k},B^{2^k},C^{2^k}$ and $D^{2^k}$ for every $k$ between 1 and $\log_2(N)$. Furthermore, one computes lifts which are coprime (this can be ensured easily as $\textsf{StrongApproximation}$ can lift an element in $\mathbb{Z}[i]j$ to any number that is bigger than $p^{O(1)}$ \PK{not sure what to put here in the exponent for the general case}). 

This way $PA^aB^bC^cD^d$ will also be powersmooth as it is the product of $4\log N+1$ powersmooth numbers. 
Lifting $4\log N+1$ numbers can be done in classical polynomial time using Theorem \ref{thm:pqlpalg}. 

\qed 
\end{proof}




\section{Impact on isogeny-based cryptography}\label{sec:conclusion}




The most important application of Theorem \ref{thm:solvingIsERP}  is that it breaks pSIDH in quantum polynomial time as $N$ is a prime number in pSIDH. 
Another application is on SCALLOP \cite{de2023scallop}. Even though Theorem \ref{thm:solvingIsERP} does not break SCALLOP, it shows that its security reduces to the problem of evaluating the secret prime degree isogeny (up to a scalar). In \cite{de2023scallop} it is already discussed that one can deduce some information on the secret isogeny $\phi$ by utilizing the fact that one can evaluate $\phi\circ\theta\circ\hat{\phi}$ efficiently on any point where $\theta$ is some fixed endomorphism on a curve which has an endomorphism of low degree (typically that curve is $j$-1728 and $\theta$ is the non-trivial automorphism). 

Our results mildly generalize to the following setting. Let $E$ be a supersingular elliptic curve that does not possess a non-scalar endomorphism of degree $N^2$. In other words there is a one-to-one correspondence between cyclic subgroups of order $N$ and $N$-isogenous curves. Our results imply that if given some curve $E/A$ and an endomorphism $\sigma$ one can compute $E/\sigma(A)$, then one can also compute the endomorphism ring of $E/A$ in quantum polynomial time (and actually the corresponding isogeny itself its degree is smooth). The difference between our previous results here is that we do not need an isogeny representation for the isogeny corresponding to the subgroup generated by $A$ as long as we can evaluate the above group action. At the moment we do not see any particular application for this observation but it might prove to be a useful cryptanalysis tool in the future. 
\subsubsection{Acknowledgements} Gábor Ivanyos is supported in part by the Hungarian Ministry
of Innovation and Technology NRDI Office within the framework of the Artificial
Intelligence National Laboratory Program. Péter Kutas is supported by the Hungarian Ministry
of Innovation and Technology NRDI Office within the framework of the Quantum Information National Laboratory Program,
the J´anos Bolyai Research Scholarship of the Hungarian Academy of Sciences and
by the UNKP-22-5 New National Excellence Program. Mingjie Chen, Péter Kutas and Christophe Petit are partly supported by EPSRC through grant number EP/V011324/1.

\bibliographystyle{alpha}
\bibliography{references}
\appendix
\section{Proof of Theorem \ref{thm: pqlp with precomputation} for arbitrary $N$}\label{app:general N}

The main difficulty in the general case is that there is no $PLU$ decomposition in $\GL_2(\Z/N\Z)$. A simple example is the case where $N=6$ and the matrix we need to decompose is 
$$\begin{pmatrix}
    2&3\\
    3&2
\end{pmatrix}.$$ 
This happens because the first column of an invertible matrix (in this case with determinant 1) may  contain no invertible element.

However, if one takes a matrix whose top left corner is an invertible element, then it can be written as a product of a lower and an upper triangular matrix (which then can be further decomposed as in Theorem \ref{thm: pqlp with precomputation}). 

Let $\begin{pmatrix}
    a&b\\
    c&d
\end{pmatrix}$ be an arbitrary matrix in $\GL_2(\Z/N\Z)$. Since it is invertible there exists $k\in\Z/N\Z$ such that $a+kc$ is invertible and such a $k$ can be computed efficiently (using Chinese remaindering). This implies that $\begin{pmatrix}
    1&k \\
    0&1
\end{pmatrix}\begin{pmatrix}
    a&b\\
    c&d
\end{pmatrix}$ has an invertible top left corner (and is invertible as a matrix) hence can be written as $LDU$. This implies that $\begin{pmatrix}
    a&b\\
    c&d
\end{pmatrix}$ can be written as $U_1LDU_2$ where $U_1,U_2$ are upper unitriangular matrices. Finally, for general $N$ the group of lower and upper triangular matrices are no longer cyclic but one can compute generators of $\Z/N\Z$ an then everything else from the proof carries over. 
\section{Elementary proof that the stabilizers are Borel subgroups}\label{app:elementary}
 \begin{proposition}
 Let $X$ be a point of order $N$ in $E$ and let us denote the stabilizer of $E/X$  by $G_X$. Then $G_X$ is a conjugate of the group of upper triangular matrices. 
 \end{proposition}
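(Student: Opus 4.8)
The plan is to work with the explicit matrix identification of $\End(E)/N\End(E)$ described in Section~\ref{sec: group action}. Fix a basis $P,Q$ of $E[N]$ and let $X$ be a point of order $N$, with $\langle X\rangle$ the cyclic subgroup of order $N$ that it generates. Through the bijection between cyclic subgroups of order $N$ and $N$-isogenies, $E/X$ corresponds to $\phi_{\langle X\rangle}$, so the stabilizer $G_X$ is exactly the stabilizer of the free cyclic submodule $S = \langle v_X\rangle \subseteq (\Z/N\Z)^2$ where $v_X$ is the coordinate vector of $X$. The first step, then, is to observe that since $X$ has order $N$, the vector $v_X$ is a generator of a free cyclic submodule $S$ of rank one in $(\Z/N\Z)^2$, and $(\Z/N\Z)^2 / S$ is again free cyclic of rank one (this uses that $\gcd$ of the coordinates of $v_X$ is a unit, which is precisely the statement that $X$ has full order $N$).

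The second step is to choose a change of basis sending $v_X$ to the first standard basis vector $(1,0)^T$. Concretely, I would pick a vector $w$ such that $\{v_X, w\}$ is an $R$-basis of $V = (\Z/N\Z)^2$ — this is possible exactly because $S$ is a free direct summand with free cyclic complement, as noted above. The matrix $T$ whose columns are $v_X, w$ lies in $\GL_2(\Z/N\Z)$ and satisfies $T^{-1} S = R\cdot(1,0)^T$. Conjugation by $T^{-1}$ then identifies $G_X$ with the stabilizer of the standard submodule $R\cdot(1,0)^T$.

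The third and final step is the routine computation that the stabilizer of $R\cdot(1,0)^T$ in $\GL_2(\Z/N\Z)$ is precisely the group of upper triangular matrices: a matrix $\begin{pmatrix} a & b \\ c & d\end{pmatrix}$ fixes the submodule generated by $(1,0)^T$ if and only if $(a,c)^T \in R\cdot(1,0)^T$, i.e.\ $c = 0$; and for such a matrix invertibility forces $a,d \in R^*$, which is automatic for upper triangular matrices in $\GL_2$. Hence $G_X = T\,B\,T^{-1}$ where $B$ is the Borel subgroup of upper triangular matrices, as claimed.

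I do not expect any serious obstacle here; the one point requiring a little care is the first step — justifying that a point of exact order $N$ gives a vector generating a \emph{free} cyclic submodule with free cyclic quotient, so that the basis completion in step two is legitimate. This is where the hypothesis that $X$ has order exactly $N$ (rather than dividing $N$) is essential, and it mirrors the remark in Section~\ref{sec: group action} that $E[N] \cong (\Z/N\Z)^2$ together with the correspondence between cyclic order-$N$ subgroups and rank-one free submodules. Everything else is a direct unwinding of definitions.
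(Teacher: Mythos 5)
Your proposal is correct and follows essentially the same route as the paper's Appendix B proof: both change basis to transport $v_X$ to $(1,0)^T$ (the paper writes the conjugating matrix $\begin{pmatrix}\alpha & -y\\ \beta & x\end{pmatrix}$ explicitly using Bézout coefficients $\alpha x + \beta y \equiv 1 \pmod N$, which is exactly your abstract basis-completion step made concrete) and then observe directly that the stabilizer of $R\cdot(1,0)^T$ is the group of upper triangular matrices. The only difference is stylistic: you state the basis-completion as an abstract fact about free cyclic submodules with free cyclic quotient, while the paper just exhibits the matrix.
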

\begin{proof}
 Let $P,Q$ be some basis of $E[N]$. Then $X=\alpha P+\beta Q$ and there exists integers $x,y$ such that $\alpha x+\beta y\equiv 1 \pmod{N}$, otherwise $X$ would have lower order. One has the following simple identity:
 $$\begin{pmatrix}
 \alpha&-y\\
 \beta &x
 \end{pmatrix}\cdot (1,0)^T=(\alpha,\beta)^T.$$
 Now the condition that $\sigma(A)=\lambda A$ implies that 
 $$\begin{pmatrix}
 a&b \\
 c&d
 \end{pmatrix} 
\cdot
\begin{pmatrix}
 \alpha&-y\\
\beta &x
 \end{pmatrix}\cdot (1,0)^T=\lambda \begin{pmatrix}
\alpha&-y\\
 \beta &x
 \end{pmatrix}\cdot (1,0)^T.$$
 Now one has that 
 $$\begin{pmatrix}
 \alpha&-y\\
 \beta &x
 \end{pmatrix}^{-1}
 \cdot
 \begin{pmatrix}
 a&b \\
 c&d
 \end{pmatrix} 
 \cdot
 \begin{pmatrix}
 \alpha&-y\\
 \beta &x
 \end{pmatrix}\cdot (1,0)^T=\lambda \cdot (1,0)^T.$$
 Since the middle matrix runs through conjugation is an automorphism one has to look at the problem of finding the matrices $\begin{pmatrix}
 x&y \\
 z&u
 \end{pmatrix} $ for which $\begin{pmatrix}
 x&y \\
 z&u
 \end{pmatrix}\cdot (1,0)^T=\lambda (1,0)^T .$
 From comparing coefficients one gets that $z=0$ and all those matrices also satisfy this condition. This proves that $G_A$ is just the the conjugate of upper triangular matrices by the matrix $\begin{pmatrix}
\alpha&-y\\
\beta &x
 \end{pmatrix}$.  
 \qed
\end{proof}

\section{Full proof of Lemma~\ref{lem:samenorm}}\label{app:norm1}

We provide a full proof for Lemma~\ref{lem:samenorm}.

\begin{proof}
Let us write $x_1$ as $t_1 + s_1i$, $x_2$ as $t_2 + s_2i$, $x$ as $u_1 + v_1i$ and $y$ as $u_2 + v_2i$ for $t_1,s_1,t_2,s_t,u_1,v_1,u_2,v_2 \in \ZZ$. Then Equation~(\ref{eq:eq2}) is equivalent to 
\begin{equation}\label{eq:eq5}
    \begin{cases}
       (t_1t_2 + qs_1s_2) + (-t_1s_2+s_1t_2)i= u_1 + v_1i \bmod{NR}, \\
        (t_1t_2 - qs_1s_2) + (t_1s_2+s_1t_2)i = u_2 + v_2i \bmod{NR}.
    \end{cases}
\end{equation}
Let $\zeta$ be an integer such that $2\zeta \equiv 1 \bmod N$, and let $M_1 = \zeta(u_1+u_2),M_2 = \zeta q'(u_1-u_2),M_3 = \zeta(v_1+v_2)$ and $M_4 = \zeta(v_2-v_1)$. Equation~(\ref{eq:eq5}) is equivalent to the following:

\begin{equation}\label{eq:eq9}
    \begin{cases}
       t_1t_2 = M_1 \bmod{N}, \\
       s_1s_2  = M_2 \bmod{N},\\
       s_1t_2 = M_3 \bmod{N},\\
       t_1s_2= M_4 \bmod{N}.    
    \end{cases}
\end{equation}
Since $N = \prod_{i=1}^{i=k}\ell_i^{e_i}$, by Chinese Reminder Theorem, Equation~(\ref{eq:eq9}) has a solution $(t_1,s_1,t_2,s_2)$ in $\ZZ^4$ if and only if the following equation has a solution in $\ZZ^4$ for each $i$. 

\begin{equation}\label{eq:modelli}
    \begin{cases}
       t_1t_2 = M_1 \bmod{\ell_i^{e_i}}, \\
       s_1s_2  = M_2 \bmod{\ell_i^{e_i}},\\
       s_1t_2 = M_3 \bmod{\ell_i^{e_i}},\\
       t_1s_2= M_4 \bmod{\ell_i^{e_i}}.    
    \end{cases}
\end{equation}
We claim that Equation~(\ref{eq:modelli}) has a solution in $\ZZ^4$ if and only if $M_1M_4-M_2M_3\equiv 0\bmod{\ell_i^{e_i}}$. We prove this claim by considering the following three cases:
\begin{enumerate}
    \item At least one of $M_j$'s is invertible in $\ZZ/\ell_i^{e_i}$, and we assume its $M_1$ without loss of generality. Then $(t_1,s_1,t_2,s_2) = (1,\frac{M_3}{M_1},M_1,M_4)$ is a solution.
    \item None of $M_j$'s is invertible but at least one of $M_j$'s is nonzero in $\ZZ/\ell_i^{e_i}$. Let $e = \text{min}_{j=1,2,3,4}\{\text{val}_{\ell_i}(M_j)\}$, reducing Equation~(\ref{eq:modelli}) to $\ZZ/\ell_i^{e_i-e}$ will give us a solution $(t_1,s_1,t_2,s_2)$ in $\ZZ/\ell_i^{e_i-e}$ since we are back to the previous case. Thinking $(t_1,s_1,t_2,s_2)$ as in $\ZZ$ and $(p^et_1,p^es_1,t_2,s_2)$ is a solution of Equation~(\ref{eq:modelli}).
    \item $M_1,M_2,M_3,M_4$ are all zero in $\ZZ/\ell_i^{e_i}\ZZ$, then $(0,0,0,0)$ is a solution.
\end{enumerate}
The condition $M_1M_4-M_2M_3\equiv 0\bmod{\ell_i^{e_i}}$ holds for each $i$ if and only if $M_1M_4-M_2M_3\equiv 0\bmod{N}$. Plugging in the values of $M_j$'s shows that this is equivalent to $n(x) \equiv n(y)\bmod N$. Therefore, in this case, lifting the solutions modulo each $\ell_i^{e_i}$ will give rise to a solution of Equation~(\ref{eq:eq2}).
\qed 
\end{proof}

\section{IsERP for general $N$}\label{app:hybridPQLP}

In the main body of the paper we restricted ourselves to $N$ with not too many prime factors due to the fact that if $N$ has too many prime factors, then a random number is unlikely to be a square modulo $N$. 
Such a quadratic residuosity condition appears in our strategy to solve the PQLP, more precisely for the discriminant~\ref{eq:discriminant}.

Here we present several approaches that could remove the restriction on $N$.

\subsection{Torsion-point attacks}

The worst case for our algorithm is when $N$ is smooth and has many different prime factors. This is however exactly the case where classical torsion-point attacks are useful. For simplicity we assume that the starting curve is $y^2=x^3+x$ but the general approach should follow in a similar fashion. Let $M$ be a divisor of $N$ of size $poly(\log p)$. Let the secret isogeny be $\phi: E_1\rightarrow E_2$. First we try to find an endomorphism $\theta\in End(E_1)$ and an integer $d$ such that the degree of $\psi=\phi\circ\theta\circ\hat{\phi}+d$ is a powersmooth number $B$. Since we know how $\phi$ acts on the $B$-torsion we can compute $\psi$. Now the usual step here would be to compute $Ker(\psi-d)\cap E_2[N]$ but the $N$-torsion is not defined over a small extension field. Instead we decompose $\phi=\phi_1\circ\phi_2$ where the degree of $\phi_1$ is $M$. Now instead we compute  $Ker(\psi-d)\cap E_2[M]$ which can be accomplished in polynomial time as $M$ is powersmooth. 
This intersection will be the kernel of $\hat{\phi_1}$ (modulo some technical detailed already handled in \cite{petit2017faster}). Thus we can compute $\hat{\phi_1}$ and translate the isogeny representation and we now have a smaller $N$ to work with. We can do this iteratively to handle smooth (and not just powersmooth) divisors of $N$ and this also works for powers of 2 (which is not handled in the main body of the paper as $N$ is assumed to be odd). 

This does not handle all cases as for example $N\approx 2^n$ could have $\sqrt{n}$ prime divisors all of size $2^{\sqrt{n}}$.

\subsection{Fixing $n(C),n(D)$ modulo $N$}

A completely different direction could be to use special $\gamma$s to ensure that $4q(4p^2n(ABCD)-(n(AC)-pn(AD))^2)$ is a square. The idea here is that this discriminant only depends on $n(C)$ and $n(D)$ modulo $N$ and is also homogeneous, hence it is enough to specify $n(C)$ and $n(D)$ modulo $N$ up to an invertible scalar. This is more less a problem of finding a powersmooth number in a residue class, thus a purely elementary number theory problem \PK{i might think about this a bit more whether we can solve this. If not, I will leave it like this. }

\subsection{Improved lifting idea}

Our most promising idea to  handle the general case is the following one.

Assume that $N$ has $k$ different prime factors. Then a residue modulo $N$ is a square if and only if it is a square modulo every divisor of $N$. A random number has thus $1/2^k$ chance of being a square. However, a random number is usually a square modulo half of the prime divisors of $N$ (assuming that $N$ is odd). Let $\sigma_0$ be the quaternion we would like to lift. Thus one can do the first step in the lifting procedure which will result in a $\sigma_1$ which is not perfect lift but $\sigma_0 \equiv \sigma_1 \bmod N_1O$ where $N_1$ divides $N$ and $N_1\approx\sqrt{N}$. The lifting procedure ensures that $\sigma_1$ is invertible modulo $N$ so $\sigma_0/\sigma_1\equiv 1 \bmod N_1O$. This Now we have reduced the original problem to lifting special elements, namely ones that are congruent to $1$ modulo $N_1O$. If we can lift such an element, then we can multiply it with $\sigma_1$ (which is powersmooth) and get a lift of $\sigma_0$. 

Now for elements congruent to 1 modulo $N_1O$ we will try to write them as a product $\alpha_1\gamma\alpha_2\gamma\alpha_2$ where every component is in $\mathbb{Z}+N_1O$ (as we only care about lifting modulo a scalar). We can also assume that the $\alpha_i\in R$ as those elements can be lifted similarly to elements in $Rj$ (elements in $R$ can be written as the product of two elements in $Rj$ but also a more direct lifting procedure is provided in \cite{kutas2021one}). When doing this lifting condition modulo $N$ we run into a similar discriminant issue but now the mod $N_1$ part is automatically satisfied. So then one now expects that the quadratic residue condition is satisfied for half of the prime divisors of $N/N_1$. So thus we obtain a lift os $\sigma_0/\sigma_1$ modulo $N_1N_2$ where $N_2$ divides $N/N_1$ and $N_2\approx \sqrt{N/N_1}$. We can thus iterate this procedure and in $O(\log N) $ steps we potentially arrive at a complete lift. Then the product of $O(\log(N))$ powersmooth numbers is still powersmooth.

\end{document}